\documentclass[11pt]{article}

\usepackage[letterpaper,margin=1in]{geometry}
\usepackage{amsmath,amssymb,amsfonts,amsthm}
\usepackage{graphicx}
\usepackage[round,authoryear]{natbib}
\usepackage[hidelinks]{hyperref}
\usepackage{url}

\usepackage{times}
\usepackage{tikz}
\usetikzlibrary{
  positioning,
  arrows.meta,
  decorations.pathreplacing
}
\usepackage{subfig}
\usepackage{bm}
\usepackage{mathtools}
\usepackage{algorithm}
\usepackage{algpseudocode}
\algrenewcommand\textproc{\texttt}
\usepackage{paralist}
\usepackage{xcolor}
\usepackage{afterpage}
\usepackage[utf8]{inputenc}
\usepackage{enumitem}
\usepackage{multirow}
\usepackage{booktabs}
\usepackage{siunitx}

\newcommand{\field}[1]{\mathbb{#1}}
\newcommand{\R}{\field{R}}

\newcommand{\N}{\field{N}}

\newcommand{\E}{\field{E}}

\newcommand{\order}{\mathcal{O}}

\newcommand{\prob}{\mathbb{P}}

\newtheorem{result}{Result}
\newtheorem{assumption}{Assumption}

\newtheorem{theorem}{Theorem}
\newtheorem{lemma}{Lemma}
\newtheorem{remark}{Remark}
\newtheorem{corollary}{Corollary}

\def\var{{\mbox{var}}}

\begin{document}

\title{Predictive Subsampling for Scalable Inference in Large Networks}

\author{
Arpan Kumar
\and
Minh Tang
\and
Srijan Sengupta
}

\date{}

\maketitle

\begin{center}
    Department of Statistics\\
       North Carolina State University
\end{center}

\begin{abstract}
Current methods for statistical inference in networks often encounter substantial computational bottlenecks when applied to the massive network datasets that are increasingly common across scientific domains.
In this paper, we develop \textit{Predictive Subsampling} (\texttt{PredSub}), a scalable framework for estimation and two-sample testing in networks. The central idea is to replace a full-sample estimation procedure by estimation on a random subsample, followed by out-of-sample prediction of the remaining vertices. This construction exploits the fact that the subsample provides an inferential anchor, which enables each remaining vertex to be incorporated in a \textit{predictive} manner through a fast vector operation. Building on this estimator, we develop two scalable procedures for two-sample testing, namely \texttt{PredSubTest} and \texttt{PureSubTest}. We establish finite-sample error bounds as well as estimation and testing consistency of the proposed methods in both Frobenius and two-to-infinity norms.  These results formally characterize the trade-offs between statistical accuracy and computational efficiency with respect to the subsample size, the choice of test statistic, and the choice of norm.
We demonstrate the empirical performance of the proposed methods through detailed simulation studies and two real-world applications involving DBLP coauthorship networks and the Cannes 2013 social media networks.
\end{abstract}

\small\textbf{Keywords:}
Estimation; 
Hypothesis testing;
Random dot product graph model; 
Generalized random dot product graph model;
Adjacency spectral embedding.

\section{Introduction}

Over the last two decades, a substantial body of statistical methodology has been developed for  inference with network data \citep{goldenberg2010survey, kolaczyk2014statistical, newman2018networks, athreya2018statistical, sengupta2025statistical}. In this paper, we focus on two fundamental inferential tasks that have been central to this literature. The first is estimation, where the goal is to estimate population parameters from an observed network \citep{bickel2009nonparametric,sussman2012consistent,sengupta2018block}. The second is two-sample testing, where the goal is to determine whether two observed networks were generated from the same model \citep{tang2017semiparametric,levin2017central,ghoshdastidar2020two,bhadra2025bootstrap}.

Despite this progress, the lack of computational scalability remains a bottleneck to statistically principled network inference in many application domains. Many existing methods for estimation and two-sample testing are statistically well motivated, but they rely on matrix operations that become prohibitively expensive for large-scale networks \citep{tang2017semiparametric,zhang2022distributed,chakraborty2025scalable}.
 This difficulty is especially pronounced for bootstrap-based tests where such expensive matrix operations must be repeated many times. 
 For example, as shown in Tables~\ref{tab:real_coauth} and~\ref{tab:real_cannes}, applying the two-sample testing procedure of \citet{bhadra2025bootstrap} to the Coauth-DBLP dataset of \citet{Benson-2018-simplicial} and the Cannes 2013 dataset of \citet{omodei2015characterizing} requires 22.7 hours and 28.1 hours, respectively. Such high runtime substantially limits the use of existing methods in scientific domains where large networks are increasingly common, including connectomics, biomedical text mining, and social media \citep{roncal2013migraine,komolafe2021scalable}.

To address this scalability bottleneck, we propose a general strategy for scalable network inference called \textit{Predictive Subsampling} (\texttt{\texttt{PredSub}}).
The proposed method
consists of three steps:
(1) select a subsample of vertices from the full network;
 (2) carry out the relevant matrix operation on the subgraph formed by the subsampled vertices;
 (3) add the remaining vertices via out-of-sample prediction.
 Figure \ref{fig:PredSub} provides a schematic.
Computationally, the key idea is to replace the large-scale matrix operation required by existing approaches with a much smaller matrix operation (Step 2) followed by a large number of vector operations which can be implemented in parallel (Step 3).
Since vector operations are very fast compared to matrix operations, \texttt{PredSub} drastically reduces computational cost.
Please see Remark \ref{remark:compcost} for a more formal discussion on computational efficiency.

\begin{figure}[ht]
    \centering
    \resizebox{0.6\textwidth}{!}{
    \begin{tikzpicture}[
    thick,
    fullbox/.style={rectangle, draw=black, minimum width=10cm, minimum height=0.8cm, fill=green!40},
    halfbox/.style={rectangle, draw=black, minimum width=4.5cm, minimum height=0.8cm, fill=blue!30},
    emptybox/.style={rectangle, draw=black, minimum width=5.2cm, minimum height=0.8cm, fill=yellow!50},
    graybox/.style={rectangle, draw=black, minimum width=4.5cm, minimum height=0.8cm, fill=gray!50},
    smallbox/.style={rectangle, draw=black, minimum width=0.4cm, minimum height=0.8cm, fill=gray!50},
    arrow/.style={-Stealth, thick}
]

\coordinate (origin) at (0,0);

\node[fullbox, anchor=west] (graph) at (origin) {Graph};

\node[halfbox, below=1.8cm of graph.west, anchor=west] (subgraph) {Subgraph};
\node[emptybox, right=0.3cm of subgraph] (rightspace) {Out of Sample};

\draw[arrow] ([xshift=-79pt]graph.south) -- ([yshift=2pt]subgraph.north) node[midway, right=5pt] {Sampling};

\node[left=0.5cm of subgraph] {\textbf{Step 1}};

\node[graybox, below=1.8cm of subgraph.west, anchor=west] (model) {Estimated Submodel};

\draw[arrow] (subgraph.south) -- (model.north) 
node[midway, right=1pt] {Matrix operation};

\node[left=0.5cm of model] {\textbf{Step 2}};

\node[graybox, below=1.8cm of model.west, anchor=west] (predleft) {};

\foreach \i in {0,...,8} {
    \node[smallbox, right=0.3cm of predleft.east, xshift=\i*0.6cm, anchor=west] (p\i) {};
}

\draw[arrow] ([yshift=-2pt]rightspace.south) -- ([yshift=2pt]p4.north) node[midway, left=3pt] {Prediction}
    node[midway, right=3pt] {(vector operation)};

\node[left=0.5cm of predleft] {\textbf{Step 3}};

\draw[decorate,decoration={brace,mirror,amplitude=6pt}, thick]
    ([yshift=-0.2cm]p0.south west) -- ([yshift=-0.2cm]p8.south east) 
    node[midway, below=6pt] {Add remaining nodes};

\end{tikzpicture}
}
    \caption{\texttt{PredSub} Schematic}
    \label{fig:PredSub}
\end{figure}

From a statistical perspective, the methodological intuition behind \texttt{PredSub} is as follows. Consider an out-of-sample vertex, and let $y$ denote the vector of observed edges between this vertex and the subsampled vertices. Under a general low-rank network model, the distribution of $y$ is governed by two components: a \textit{local} parameter specific to the vertex of interest, and a \textit{global} parameter that acts as its structural \textit{anchor} to the subgraph; see Equation \eqref{eq:predsub} and Remark \ref{remark:localglobal} for more details. Estimating both components from $y$ alone is generally infeasible. \texttt{PredSub} resolves this difficulty by first estimating (in Step 2) the global parameter from the sampled subgraph through a small matrix computation. Once this anchor parameter has been accurately estimated, the vertex-specific parameter can be estimated from $y$ in Step 3 using only a vector computation.

This paper makes the following contributions. 
We develop \texttt{PredSub} as a scalable estimation procedure, which inherits the statistical accuracy of the full-sample estimator under a general low-rank modeling framework while substantially reducing computational cost.
Next, we propose two scalable procedures for two-sample testing.
The first procedure is called \texttt{PredSubTest}, which uses \texttt{PredSub} to compute the test statistic within a parametric bootstrap framework.
A second, even faster procedure is called \texttt{PureSubTest}, which applies bootstrap-based testing only to the subsample. We establish finite-sample error bounds and consistency results for both estimation and testing, and characterize classes of alternatives under which the proposed tests are the most effective. These theoretical results explicitly formalize the trade-offs between statistical accuracy and computational cost.

In related work, recent years have witnessed important advances in  scalable network inference \citep{bhadra2026scalable}.
Most of this literature has focused on community detection, where pseudo-likelihood methods \citep{wang2021fast}, distributed algorithms \citep{zhang2022distributed,wu2023distributed}, and divide-and-conquer strategies \citep{mukherjee2021two, chakrabarty2023sonnet} have been developed as computationally efficient alternatives to clustering the full network.

For scalable estimation and two-sample testing, to the best of our knowledge, the only existing work is by \citet{chakraborty2025scalable} who proposed a subsampling-based divide-and-conquer approach.
While their method achieves substantial computational savings, it still requires repeated matrix operations, as estimation is performed separately on multiple subgraphs and the resulting local estimates are combined via \textit{stitching}.
By contrast, the scalable alternatives proposed in this paper require only one matrix operation, leading to greater computational efficiency. 
Please see the numerical results in Sections~\ref{sec5} and~\ref{sec6} for a detailed performance comparison.

The remainder of the paper is organized as follows. In Section~\ref{sec3}, we introduce the proposed estimation and testing procedures. In Section~\ref{sec4}, we present the theoretical results on error bounds and consistency. Simulation studies under various settings are reported in Section~\ref{sec5}, followed by applications to the Coauth-DBLP and Cannes 2013 datasets in Section~\ref{sec6}. Section~\ref{sec7} concludes the paper with a discussion. All technical proofs are in Appendix~\ref{appA}.
All codes can be found at the following {GitHub repository}: \url{https://github.com/ArpanK72/PredSub/tree/main}.

\section{Methodology}
\label{sec3}

In this section, we introduce the notation used throughout the paper, describe the statistical framework, and present the proposed estimation and testing procedures.

\subsection{Notations} \label{sec:notation}

We use the symbol $:=$ to assign mathematical definitions. The set of all positive integers is denoted by $\mathbb{N}_+$. For two non-negative sequences $\{a_n\}_{n \in \mathbb{N}_+}$ and $\{b_n\}_{n \in \mathbb{N}_+}$, we write $a_n  = \Theta(b_n)$ if both are of the same order, i.e., there exists absolute constant $c_1,c_2 > 0$ such that $c_1b_n\leq a_n \leq c_2b_n$ for all $n \in \mathbb{N}_+$. We use $c_0,c_1,c_2,\cdots, C_1,C_2,\cdots, K_1,K_2,\cdots$, etc., to denote constants that are independent of $n$. If $a_n / b_n$ remains bounded as $n\to\infty$, we write $a_n = \order(b_n)$ or $b_n = \Omega(a_n)$, and if $a_n / b_n \to 0$, we write $a_n = o(b_n)$ or $b_n = \omega(a_n)$. 
For any $d \in \mathbb{N}_+$, we denote by $I_d$ the $d \times d$ identity matrix. Similarly, $I_{p,q}$ denote a diagonal matrix with $p$ many $1$'s followed by $q$ many $-1$'s in its diagonal. $\mathbb{O}(d)$ is the set of all $d \times d$ orthogonal matrices and $\mathbb{O}(p,q) = \{M\in\mathbb{R}^{(p+q)\times(p+q)}: M^\top I_{p,q}M = I_{p,q}\}$ is known as the indefinite orthogonal group.

Let $M$ be any $n\times n$ symmetric matrix with rank $d$. Then, it has $d$ positive singular values $\sigma_1(M)\geq \sigma_2(M)\geq \cdots \geq \sigma_d(M)$ and the condition number $\kappa(M)$ of $M$ is defined as,
$\kappa(M) := \sigma_1(M)/\sigma_d(M)$
i.e., the ratio of the largest to the smallest (non-zero) singular values of $M$. Note that if $M$ is positive semi-definite with rank $d$ then the condition number becomes $\kappa(M) := \lambda_1(M)/\lambda_d(M)$,
where $\lambda_1(M) \geq \lambda_2(M) \geq \cdots \geq \lambda_d(M) $ are the $d$ positive eigenvalues of $M$.

We will also use the following matrix norms in our paper. The spectral norm of $M$ is defined as $\|M\| :=\sigma_1(M),$ and the Frobenius norm is given by $\|M\|_F := \bigl(\operatorname{trace}(M^\top M)\bigr)^{1/2}.$ The two-to-infinity norm of a matrix $M$ is defined as $\|M\|_{2 \to \infty} := \max_{j \in [n]} \|M_{j.}\|$ where $M_{j.}$ denote the $j$th row of matrix $M$.

\subsection{Statistical framework}
Let $A$ be the $n \times n$ symmetric adjacency matrix of a simple, undirected graph $G$ with vertex set $V = \{1,2,\dots,n\}$.  We assume that $A \thicksim \text{Bernoulli}(P)$, i.e., the $A_{ij}$’s for $i < j$ are independent and there is an underlying edge-probability matrix $P\in[0,1]^{n\times n}$ such that $A_{ij} \thicksim \text{Bernoulli}(P_{ij})$.
We further assume that $P$ is low rank, with $p$ positive eigenvalues and $q$ negative eigenvalues, where $d=p+q$ remains fixed as $n$ grows. Then $P$ admits the factorization
\begin{equation}
    P = XI_{p,q}X^\top,
    \label{eq:grdpg}
\end{equation}
where $X\in\mathbb{R}^{n\times d}$ is the latent position matrix.
Note that this is a very flexible modeling framework since we do not impose any structural assumptions on $P$ other than it being low-rank.
Model \eqref{eq:grdpg} is called the generalized random dot product graph (GRDPG) model and it includes many commonly used network models as special cases, including the stochastic blockmodel and its generalizations \citep{rubin2022statistical, koo2023popularity,Bhadra2026unified}.
A notable special case is the random dot product graph (RDPG) model, where $q=0$.

In the next two subsections, we describe the proposed methods for scalable estimation and two-sample testing under this modeling framework.

\subsection{Scalable estimation with \texttt{PredSub}}\label{sec:estimation}

We first recall the adjacency spectral embedding (ASE) estimator under the GRDPG framework. Let $D_A$ be the diagonal matrix containing the $d$ eigenvalues of $A$ with largest magnitudes, and let $U_A$ be the $n\times d$ matrix whose columns are the corresponding eigenvectors. We order the $d$ selected eigenvalues so that the positive eigenvalues appear first and the negative eigenvalues appear second, and define
    $D_A = \mathrm{diag}(D_{A+},D_{A-})$, 
    $U_A = [\,U_{A+}\mid U_{A-}\,]$,
where $D_{A+}$ and $D_{A-}$ contain the selected positive and negative eigenvalues, respectively. Let $\hat p$ and $\hat q$ denote the number of selected positive and negative eigenvalues, so that $\hat p+\hat q=d$. The ASE estimator is then defined as
    $\widehat X = U_A |D_A|^{1/2}$,
where $|D_A|$ is obtained by replacing the diagonal entries of $D_A$ by their absolute values. The corresponding estimate is
    \begin{equation}
       \widehat P = \widehat X I_{\hat p,\hat q}\widehat X^\top. 
       \label{eq:ase}
    \end{equation}    
Throughout this paper, we will refer to the above procedure as the ASE procedure and the estimator in Equation \eqref{eq:ase} as the ASE estimator. 

The computational complexity of the spectral decomposition required for ASE is $\order(n^3)$, which becomes prohibitive for large-scale networks. In practice if we use randomized SVD or iterative Lanczos-based algorithms, this can be reduced to $\order(n^2d)$ which is still too high for larger networks. 
To address this challenge, we propose the following computationally scalable alternative called \texttt{PredSub}, which consists of three steps.

\begin{enumerate}
    \item \textbf{Subsampling:} In this step we choose a subsample $S \in V$. There are two key aspects of this step: the subsample size, $m$, and the sampling technique, $\mathcal{S}$.
    In this paper, we consider the case where $m << n$, specifically, $m = \Omega((\log n)^{1+a})$ for some $a>0$; see Section \ref{sec:estimation-theory} for more details.
    Our theoretical analysis is based on uniform random sampling as the sampling technique $\mathcal{S}$. 
    We note that other sampling techniques could be used in this step as well.
    
    Let $S^c = V \backslash S$ denote the set of out-of-sample vertices,
    and write 
    $$X = [X_S^\top\mid X_{S^c}^\top]^\top,$$ 
    where $X_S$ and $X_{S^c}$ have $m$ and $n-m$ rows, respectively. Then we can express $P$ and $A$ in the following partitioned form:
$$P = \begin{pmatrix}
    X_{S}I_{p,q}X_{S}^\top & X_{S}I_{p,q}X_{S^c}^\top\\
    X_{S^c}I_{p,q}X_{S}^\top & X_{S^c}I_{p,q}X_{S^c}^\top
\end{pmatrix} = \begin{pmatrix}
    P_S & P_{S,S^c} \\
    P_{S^c,S} & P_{S^c}
\end{pmatrix}, \qquad 
A = \begin{pmatrix}
    A_S &  A_{S,S^c} \\
    A_{S^c,S} & A_{S^c}
\end{pmatrix}.$$

    \item \textbf{Subgraph Inference:} In this step, we apply the ASE procedure on the subgraph $A_S$, spanned by the vertices in $S$, to obtain the estimate $\hat{X}_S$. This step depends on only one input parameter, the rank ($d$) of the underlying probability matrix, which can be estimated if unknown.

    \item \textbf{Predictive Estimation:} In this step, we estimate $X_{S^c}$ in a \textit{predictive} manner using $\widehat X_S$ (from Step 2) and $A_{S^{c},S}$, the observed adjacency submatrix of cross-edges between $S^c$ and $S$. Note that this step is related to out-of-sample embedding which was previously studied by \cite{levin2021limit} under the RDPG model.

    Consider any vertex $i\notin S$. Let $A_{i,S}$ denote the $1\times m$ vector of observed edges between vertex $i$ and the sampled vertices, and let $P_{i,S}=\E[A_{i,S}]$. Under the GRDPG model, we can write
    \begin{equation}
        P_{i,S} = x_i^\top I_{p,q}X_S^\top .
        \label{eq:cross_edge_identity}
    \end{equation}
    Provided that $X_S^\top X_S$ is nonsingular, we have
        $$P_{i,S}X_S = x_i^\top I_{p,q}X_S^\top X_S
        \Rightarrow
        x_i^\top = P_{i,S}X_S (X_S^\top X_S)^{-1}I_{p,q}.$$      
Replacing the unknown quantities on the right-hand side of the final expression by their empirical counterparts leads us to the estimate
\begin{equation}
     \widehat x_i^\top
        =
        A_{i,S}\widehat X_S
        (\widehat X_S^\top \widehat X_S)^{-1}I_{p,q},
         \label{eq:predsub}
\end{equation}
 and this computation can be carried out  in parallel for each $i\notin S$.
 Note that the matrix $\widehat X_S
        (\widehat X_S^\top \widehat X_S)^{-1}I_{p,q}$ is independent of $i$, and therefore only needs to be computed once.
    Stacking $\widehat x_i^\top$ over all $i\notin S$ yields    
     $ \widehat X_{S^c}
        =
        A_{S^c,S}\widehat X_S
        (\widehat X_S^\top\widehat X_S)^{-1}I_{p,q}$.   
\end{enumerate}

Please see Algorithm \ref{algo1} for an algorithmic description of the procedure, and Figure \ref{fig:placeholder} for a visual depiction.

\begin{algorithm}[ht]
\caption{Estimation with Predictive Subsampling}
\label{algo1}
\begin{algorithmic}[1]

\Require Adjacency matrix $A_{n \times n}$, subsample size $m$, rank $d$, sampling technique $\mathcal{S}$
\Ensure Estimated probability matrix $\hat{P}_{PS}$

\Procedure{\texttt{PredSub}}{$A,m,d,\mathcal{S}$}

\State Select a subsample $S$ of $m$ vertices using subsampling technique $\mathcal{S}$

\State Implement ASE on the adjacency submatrix $A_S$ to obtain $\hat{X}_S$

\State Set $\hat{p}$ as the number of positive eigenvalues among the $d$ largest (in modulus) eigenvalues of $A_S$

\State Compute the scaling matrix $
C = \hat{X}_S(\hat{X}_S^\top \hat{X}_S)^{-1} I_{\hat{p}, d-\hat{p}}$

\State Compute $\hat{X}_{S^c} = A_{S^c,S} C$

\State Set $\hat{X}_{PS} = \left[ \hat{X}_S^\top \mid \hat{X}_{S^c}^\top \right]^\top$

\State Output $\hat{P}_{PS} = \hat{X}_{PS} I_{\hat{p}, d-\hat{p}} \hat{X}_{PS}^\top$

\EndProcedure
\end{algorithmic}
\end{algorithm}

\begin{figure}[ht]
    \centering
\resizebox{\textwidth}{!}{
\begin{tikzpicture}[baseline={(current bounding box.center)}]
\node (m) [inner sep=0pt] {
 \(
 \left[
  \begin{array}{ccccccc}
    A_{11} & A_{12} & \cdots & A_{1m} & A_{1,m+1} & \cdots & A_{1,n} \\
    A_{21} & A_{22} & \cdots & A_{2m} & A_{2,m+1} & \cdots & A_{2,n}\\
    \vdots & \vdots & \ddots & \vdots & \vdots & \ddots  & \vdots\\
    A_{m1} & A_{m2} & \cdots & A_{mm} & A_{m,m+1} & \cdots & A_{m,n} \\
    A_{m+1,1} & A_{m+1,2} & \cdots & A_{m+1,m} & A_{m+1,m+1} & \cdots & A_{m+1,n} \\
    A_{m+2,1} & A_{m+2,2} & \cdots & A_{m+2,m} & A_{m+2,m+1} & \cdots & A_{m+2,n} \\
    \vdots & \vdots & \ddots & \vdots & \vdots & \ddots & \vdots\\
    A_{n1} & A_{n2} & \cdots & A_{nm} & A_{n,m+1} & \cdots & A_{n,n} \\
  \end{array}
  \right]
  \)
};

\draw[red, thick, rounded corners]
  ([xshift=10pt, yshift=1pt] m.north west)
  rectangle
  ([xshift=160pt, yshift=61pt] m.south west);

\draw[blue, thick, rounded corners]
  ([xshift=170pt, yshift=1pt] m.north west)
  rectangle
  ([xshift=290pt, yshift=61pt] m.south west);

\draw[blue, thick, rounded corners]
  ([xshift=10pt, yshift=-62pt] m.north west)
  rectangle
  ([xshift=160pt, yshift=45pt] m.south west);

\draw[blue, thick, rounded corners]
  ([xshift=10pt, yshift=-75pt] m.north west)
  rectangle
  ([xshift=160pt, yshift=32pt] m.south west);

\draw[blue, thick, rounded corners]
  ([xshift=10pt, yshift=-105pt] m.north west)
  rectangle
  ([xshift=160pt, yshift=1pt] m.south west);

\draw[black, thick]
  ([xshift=72pt, yshift=1pt] m.north west)
  -- ++(0,10pt)
  -- ++(245pt,0);

\node[red, anchor=west, align=left, text width=7cm, font=\footnotesize]
  at ([xshift=315pt, yshift=5pt] m.north west)
  {Subsample used for ASE
   to estimate corresponding latent positions};

\draw[black, thick]
  ([xshift=160pt, yshift=-70pt] m.north west)
  -- ++(5pt,-7pt)
  -- ++(144pt,0);

\node[blue, anchor=west, font=\footnotesize]
  at ([xshift=315pt, yshift=-78pt] m.north west)
  {Estimating latent position of node $(m\!+\!1)$};

\draw[black, thick]
  ([xshift=160pt, yshift=-84pt] m.north west)
  -- ++(5pt,-7pt)
  -- ++(144pt,0);

\node[blue, anchor=west, font=\footnotesize]
  at ([xshift=315pt, yshift=-92pt] m.north west)
  {Estimating latent position of node $(m\!+\!2)$};

\node[blue, anchor=west, font=\footnotesize]
  at ([xshift=315pt, yshift=-115pt] m.north west)
  {and so on \ldots};

\end{tikzpicture}
}
    \caption{\texttt{PredSub} procedure: in Step 1 we select the subsample (red border), in Step 2 we apply ASE on the submatrix, and in Step 3 we estimate each remaining latent position from its cross-edge vector to the sampled vertices (blue border).}
    \label{fig:placeholder}
\end{figure}

\begin{remark}
\label{remark:localglobal}
\textbf{Statistical intuition.}
    Equation~\eqref{eq:cross_edge_identity} explains the statistical intuition behind the predictive step. The distribution of the cross-edge vector $A_{i,S}$ is governed by two components: the \textit{local} latent position $x_i$, which is specific to the $i^{th}$ vertex, and the \textit{global} latent structure of the subgraph, represented by $X_S$. Estimating both components from the single vector $A_{i,S}$ is generally infeasible. The key idea of \texttt{PredSub} is to estimate the global component $X_S$ first, using the sampled subgraph $A_S$ in Step 2. Once $X_S$ has been estimated, Equation~\eqref{eq:predsub} reduces the estimation of the local parameter $x_i$ to a vector operation involving  $A_{i,S}$. Thus, the sampled subgraph serves as an inferential anchor, and the remaining vertices can be incorporated one at a time in parallel through out-of-sample prediction.
\end{remark}

\begin{remark}
\label{remark:d}
\textbf{Choice of $d$.}
Note that our method requires $d$ as an input parameter. We assume that $d$ is known in our theoretical results. If $d$ is unknown, it can be estimated consistently via existing methods \citep[see][for more details]{chatterjee2015matrix,han2019universal,chakrabarty2025network}. For simulation studies, we specify $d$ according to the data generation process. Meanwhile, in real data analysis, we report results for multiple choices of $d$ to demonstrate that our method consistently aligns with the results obtained via ASE in substantially less time for any fixed embedding dimension. Although one may estimate $d$ and then apply \texttt{PredSub}, developing or evaluating such dimension selection strategies lies outside the scope of this work.
\end{remark}

\begin{remark}
\label{remark:compcost}
\textbf{Computational complexity.}
    Recall that the computational complexity of ASE on the full network is $\order(n^3)$ using standard spectral decomposition and $\order(n^2d)$ for randomized SVD.
    In comparison, the computational complexity of the \texttt{PredSub} procedure is $\order(m^3 +nmd)$ using standard spectral decomposition and $\order(m^2d + nmd)$ using randomized SVD. When $d \ll m \ll n$, the complexity becomes $\order(nmd)$, which means \texttt{PredSub} is $n/m$ times faster than ASE on the full network. When using the standard SVD, the computational advantage gets even bigger to $n^2/m^2$ times. 
\end{remark}

 \subsection{Scalable two-sample hypothesis testing}
\label{subsec:predsub-testing}

We now develop scalable two-sample testing procedures that leverage \texttt{PredSub}. Suppose we observe two independent networks on a common set of $n$ vertices, with adjacency matrices $A^{(1)}$ and $A^{(2)}$, generated from underlying probability matrices $P^{(1)}$ and $P^{(2)}$ under the GRDPG model. Our objective is to test
\begin{align}
H_0: P^{(1)}=P^{(2)}
\quad \text{vs.}\quad
H_1: P^{(1)}\ne P^{(2)} .
\label{hypotheses}
\end{align}

This testing problem arises in various scientific domains such as 
neurobiology and genetics \citep{ginestet2017hypothesis,zhang2009differential}, and there has been substantial methodological work on this problem in recent years \citep{tang2017semiparametric,levin2017central,ghoshdastidar2020two,bhadra2025bootstrap}.

We consider the method recently proposed by \cite{bhadra2025bootstrap} as the baseline full-sample procedure under the GRDPG framework.
In this method, we first apply ASE to the full adjacency matrices $A^{(1)}$ and $A^{(2)}$ to obtain the estimates $\hat P^{(1)}$ and $\hat P^{(2)}$. The observed test statistic is then taken to be
$$T_0 = \|\hat P^{(1)}-\hat P^{(2)}\|_F .$$
Under $H_0$, the common edge-probability matrix is estimated by the pooled estimator
$\hat P_0 = \frac{\hat P^{(1)}+\hat P^{(2)}}{2}$.
Then, for each bootstrap replicate $b=1,\ldots,B$, two independent bootstrap adjacency matrices $A^{\star{1,b}}$ and $A^{\star{2,b}}$ are generated from $\hat P_0$. Next, we apply ASE to $A^{\star{1,b}}$ and $A^{\star{2,b}}$ to obtain the bootstrapped estimates $\hat P^{\star{1,b}}$ and $\hat P^{\star{2,b}}$, and compute the bootstrap statistic
$T_b^\star = \|\hat P^{\star{1,b}}-\hat P^{\star{2,b}}\|_F$. The final $p$-value is obtained by comparing the observed statistic $T_0$ to the empirical distribution of $T_b^\star$.

The computational bottleneck of this full-network procedure arises from the repeated estimation of the full $n\times n$ probability matrices. Exact ASE requires $\order(n^3)$ time, and even iterative or randomized SVD-based implementations require approximately $\order(n^2d)$ time per bootstrap iteration. Since the bootstrap procedure repeats this estimation step $B$ times for two networks, the full-network test has computational complexity $\order(n^2dB)$ using iterative spectral methods, or $\order(n^3B)$ using exact eigendecomposition. In addition, each bootstrap replicate requires generating and storing full $n\times n$ adjacency matrices. Thus, the full-network bootstrap test becomes computationally prohibitive for large networks.

To address this computational bottleneck, we propose two scalable alternatives. The first, which we call \texttt{PredSubTest}, retains the Frobenius norm test statistic used in \cite{bhadra2025bootstrap}, but replaces the full-sample ASE estimators $\hat{P}^{(1)}$ and $\hat{P}^{(2)}$ with the \texttt{PredSub} estimator proposed in Section \ref{sec:estimation}. 
The second scalable alternative, which we call \texttt{PureSubTest}, goes one step further by replacing the full-network test statistic itself with a randomized subsampled statistic computed only on a selected subgraph. These two procedures represent different points on the trade-off between computational efficiency and statistical accuracy: \texttt{PredSubTest} provides a faster procedure for testing the original full-sample hypothesis, whereas \texttt{PureSubTest} tests a randomly sampled subgraph version of the null hypothesis and is therefore even more computationally efficient but statistically less accurate.

We first describe \texttt{PredSubTest}. The observed test statistic is
$$T_0 = \|\hat{P}^{(1)}_{PS} - \hat{P}^{(2)}_{PS}\|_F,$$
where $\hat{P}^{(1)}_{PS}$ and $\hat{P}^{(2)}_{PS}$ are obtained by applying Algorithm~\ref{algo1} on $A^{(1)}$ and $A^{(2)}$, respectively. Under the null hypothesis, we compute the pooled estimator
$\hat P_0 = \frac{\hat{P}^{(1)}_{PS}+ \hat{P}^{(2)}_{PS}}{2}$,
and generate bootstrap samples $A^{\star{1,b}}$ and $A^{\star{2,b}}$ from $\hat P_0$. For each bootstrap sample, the probability matrices are again estimated using \texttt{PredSub}, and the bootstrapped test statistic is computed using the Frobenius norm difference between the two \texttt{PredSub} estimates.
See Algorithm~\ref{algo2} for details.

\begin{algorithm}[ht]
\caption{Hypothesis Testing with Predictive Subsampling}
\label{algo2}
\begin{algorithmic}[1]

\Require Adjacency matrices $A^{(1)}_{n\times n}, A^{(2)}_{n\times n}$, subsample size $m$, rank $d$, sampling technique $\mathcal{S}$, number of bootstrap samples $B$
\Ensure P-value $p$

\Procedure{PredSubTest}{$A^{(1)}, A^{(2)}, m, d, \mathcal{S}, B$}

\State Select a subsample $S$ of $m$ vertices using subsampling technique $\mathcal{S}$ and estimate the probability matrices $\hat{P}^{(1)}_{PS}$ and $\hat{P}^{(2)}_{PS}$ using Algorithm~\ref{algo1}

\State Compute the test statistic $T_0 = \|\hat{P}^{(1)}_{PS} - \hat{P}^{(2)}_{PS}\|_F$

\State Under the null hypothesis, compute the pooled estimate $\hat{P}_0 = (\hat{P}^{(1)}_{PS} + \hat{P}^{(2)}_{PS})/2$

\For{$b = 1$ to $B$}

    \State Generate bootstrap samples $A^\star_{1,b}$ and $A^\star_{2,b}$ where for any $j \in \{1,2\}$, the entries of $A^\star_{j,b}$ are independent Bernoulli random variables with success probabilities given by the corresponding entries in $\hat{P}_0$.

    \State Estimate the probability matrices $\hat{P}^\star_{1,b}$ and $\hat{P}^\star_{2,b}$ using Algorithm~\ref{algo1}

    \State Compute $T^\star_b = \|\hat{P}^\star_{1,b} - \hat{P}^\star_{2,b}\|_F$

\EndFor

\State Output the p-value $p = \frac{1}{B} \sum_{b=1}^{B} \mathbb{I}(T^\star_b > T_0)$

\EndProcedure

\end{algorithmic}
\end{algorithm}

We now discuss the computational complexity of Algorithm~\ref{algo2}. The initial \texttt{PredSub} estimation step can be carried out in $\order(nmd)$ time. The Frobenius norm statistic can be computed in $\order(nd^2)$ time by exploiting the low-rank structure of $\hat P_{PS}$. For each bootstrap sample, we only need to generate the entries corresponding to the $n\times m$ cross-edge matrix with endpoints in the selected subsample $S$, rather than generating the full $n\times n$ adjacency matrix. Thus, the bootstrap data generation and estimation steps require $\order(nmdB)$ time, while the computation of the bootstrap test statistics requires $\order(nd^2B)$ time. In summary, Algorithm~\ref{algo2} has a total computational complexity $\order(nmdB)$, which makes it $n/m$ times faster than the full-sample test.
Recall that $m \ll n$ which makes this speedup quite substantial.

We next describe the \texttt{PureSubTest} procedure. Note that, under the null hypothesis, $P_S^{(1)}=P_S^{(2)}$ for any subset $S\subset V$. Recall that in Steps 1 and 2 of \texttt{PredSub}, we first select a subsample $S\subset V$ and then estimate the corresponding subgraph probability matrix $\hat{P}_S$. This motivates the subsampled test statistic
$$T_0^S = \|\hat{P}_S^{(1)} - \hat{P}_S^{(2)}\|_F,$$
where $\hat{P}_S^{(1)}$ and $\hat{P}_S^{(2)}$ are computed by applying only Steps 1 and 2 of \texttt{PredSub} to $A^{(1)}$ and $A^{(2)}$, respectively. In other words, Step 3 of \texttt{PredSub}, which predicts the latent positions of the remaining vertices from the cross-edge vectors, is not used. The complete procedure is given in Algorithm~\ref{algo3}. 

Compared to \texttt{PredSubTest}, which uses all $n$ vertices, \texttt{PureSubTest} uses only the $m$ subsampled vertices. Therefore, it requires only $\order(m^2dB)$ time when iterative spectral methods are used, making it $n^2/m^2$ times faster than the full network method and $n/m$ times faster than \texttt{PredSubTest}.

\begin{algorithm}
\caption{Hypothesis Testing with Single Subgraph}
\label{algo3}
\begin{algorithmic}[1]

\Require Adjacency matrices $A^{(1)}, A^{(2)}$, subsample size $m$, rank $d$, sampling technique $\mathcal{S}$
\Ensure P-value $p$

\Procedure{PureSubTest}{$A^{(1)}, A^{(2)}, m, d, \mathcal{S}$}

\State Choose a subsample $S \subset V$ of size $m$ according to sampling algorithm $\mathcal{S}$

\State Estimate the probability submatrices $\hat{P}_S^{(1)}$ and $\hat{P}_S^{(2)}$ using ASE on $A_S^{(1)}$ and $A_S^{(2)}$, respectively

\State Compute the test statistic $T_0^S = \|\hat{P}_S^{(1)} - \hat{P}_S^{(2)}\|_F$

\State Under the null hypothesis, compute the pooled estimate $\hat{P}_0^S = (\hat{P}_S^{(1)} + \hat{P}_S^{(2)})/2$

\For{$b = 1$ to $B$}

    \State Generate bootstrap samples $A^S_{1,b}$ and $A^S_{2,b}$  where for any $j \in \{1,2\}$, the entries of $A^S_{j,b}$ are independent Bernoulli random variables with success probabilities given by the corresponding entries in $\hat{P}_0^S$.

    \State Estimate the probability matrices $\hat{P}^S_{1,b}$ and $\hat{P}^S_{2,b}$ using ASE

    \State Compute $T^S_b = \|\hat{P}^S_{1,b} - \hat{P}^S_{2,b}\|_F$

\EndFor

\State Output the p-value $p = \frac{1}{B} \sum_{b=1}^{B} \mathbb{I}(T^S_b > T_0^S)$

\EndProcedure

\end{algorithmic}
\end{algorithm}

\begin{remark}
    \textbf{\texttt{PredSubTest} vs. \texttt{PureSubTest}.} While Algorithm~\ref{algo3} is faster than Algorithm~\ref{algo2},  the tradeoff is that the test statistic is now randomized through the choice of $S$. Consequently, \texttt{PureSubTest} may lose power if the selected subgraph does not capture the difference between $P^{(1)}$ and $P^{(2)}$. However, as we demonstrate in subsequent sections, both theoretical analysis and empirical results indicate that if $P^{(1)}$ and $P^{(2)}$ are sufficiently different, then a uniformly chosen random subsample is still likely to capture this difference, resulting in a test procedure with high power. In summary, a proper choice of $m$ in Algorithm~\ref{algo3} allows us to achieve the right balance between computational efficiency and statistical power.
Please see Remarks \ref{remark:PredSubTest},\ref{remark:PureSubTest}, and \ref{remark:m} for a formal characterization of these trade-offs.
\end{remark}

\begin{remark}
\label{remark:norm}
\textbf{Choice of norm.}
In this section, we have presented \texttt{PredSubTest} and \texttt{PureSubTest} using test statistics based on the Frobenius norm.
This is because full-sample tests based on the Frobenius norm are standard in the current literature \citep{tang2017semiparametric,levin2017central, bhadra2025bootstrap}. The proposed scalable testing procedures, however, are not tied to this particular choice of matrix norm.
One particularly useful alternative is the two-to-infinity norm. While the Frobenius norm aggregates differences over all entries of the probability matrices and is therefore well suited to detecting global alternatives, the two-to-infinity norm measures the largest rowwise difference and therefore works better with {local} alternatives. Also note that the ASE estimator enjoys strong concentration properties with respect to the two-to-infinity norm \citep{xie2024entrywise}. Accordingly, in Section~\ref{sec4} we establish consistency for both Frobenius norm and two-to-infinity-norm versions of \texttt{PredSubTest} and \texttt{PureSubTest}. Please also see Remark \ref{remark:PredSubTest} in Section~\ref{sec4} for more details.
\end{remark}

\section{Theoretical Results}
\label{sec4}
In this section, we establish the theoretical properties of \texttt{PredSub}, \texttt{PredSubTest}, and \texttt{PureSubTest}.
Recall that $A \thicksim \text{GRDPG}_{p,q}(P)$ as in Equation \eqref{eq:grdpg}. For a subset $S$ of $m$ vertices, we have $S^c = [n]\backslash S$. Now we define the following quantities.

\begin{itemize}
    \item $A_S$ and $P_S$ are the submatrices of $A$ and $P$, respectively, with rows and columns indexed by $S$. 
    \item $X_S = U_{P_S} |D_{P_S}|^{1/2}$ is the ASE of $P_S$.
    \item $X_{S^c} = P_{S^c,S} X_{S}(X_S^\top X_S)^{-1} I_{p,q}$ is the predictive embedding for the rows in $S^c$.
    \item $X_{PS} = [X_S^{\top} \mid X_{S^c}^{\top}]^{\top}$.
    \item $Z = U_{P} |D_{P}|^{1/2}$ is the ASE of $P$.
    \item 
    $X_{PS} = ZQ$ 
    where 
    $Q = I_{p,q} Z_S^{\top} X_S (X_S^{\top} X_S)^{-1} I_{p,q}\in \mathbb{O}(p,q)$ is {\em indefinite orthogonal}.
    \item $\hat{X}_S = U_{A_S} |D_{A_S}|^{1/2}$ is the ASE of $A_S$.
    \item $\hat{X}_{S^c} = A_{S^c,S} \hat{X}_S(\hat{X}_S^\top \hat{X}_S)^{-1} I_{p,q}$ is the predictive embedding for the rows in $S^c$ as in Section~\ref{sec:estimation}.
    \item $\hat{X}_{PS} = [\hat{X}_S^{\top}\mid \hat{X}_{S^c}^{\top}]^\top$ be the estimate obtained through Algorithm~\ref{algo1}.
\end{itemize} 
Observe that $X_{PS}$ is {\em random} as $X_S$ depends on $S$, and $X_{S^c}$ then depends on $X_S$. Nevertheless, for any fixed realization of $S$, the underlying probability matrix can be completely characterized by $P = \mathfrak{S} X_{PS}I_{p,q}X_{PS}^\top \mathfrak{S}^{\top}$ for some permutation matrix $\mathfrak{S}$ (dependent on $S$).  
Hence, estimating $X_{PS}$ is enough to get an estimator of $P$. Our main results therefore focus on estimating $X_{PS}$. 
Finally, by definition, our estimator should satisfy $\hat{X}_{PS}W \approx X_{PS}$ for some {\em orthogonal} matrix $W$. In this section, we aim to bound the estimation error $\min_{W\in \mathbb{O}(d)}\|\hat{X}_{PS}W - X_{PS}\|_{2\to\infty}$. For our theoretical results, we assume that $p+q \leq d$ for some finite constant $d$ not depending on $n$.

\subsection{Estimation consistency of \texttt{PredSub}}
\label{sec:estimation-theory}
We assume that the following regularity conditions hold. 

\begin{assumption}
    $m = \Omega((\log n)^{1+a})$ for some $a>0$ \label{ass1}
\end{assumption}



\begin{assumption}
    $P$ is an uniformly sparse matrix i.e., there exists fixed constants $c_1 >0 $ and $c_2 \leq 1$, and a sequence $\{\rho_n\}_{n=1}^\infty$ with $\rho_n \leq 1$ for all $n$ such that $c_1 \rho_n \leq P_{ij} \leq c_2\rho_n$ for all $n \in \N$ and all $1 \leq i \leq j \leq n$. \label{ass2}
\end{assumption}

 \begin{assumption}
     $\kappa(P) \leq c_0$ for some $c_0>0$. \label{ass3}
 \end{assumption}

 \begin{assumption}
$m\rho_n = \Omega(\log m)$ \label{ass4}
 \end{assumption}

We now discuss the motivations behind Assumptions \ref{ass1} through \ref{ass4}. Assumption~\ref{ass1} imposes a lower bound on the subsample size $m$ relative to the network size $n$. 
Assumption~\ref{ass2} formalizes the notion of uniformly sparse graphs and ensures that all entries in the probability matrix $P$ are of the same order, governed by the sparsity parameter $\rho_n$; see Definition~13 in \cite{lyzinski2016community} for a similar condition. Assumption~\ref{ass3} states that $P$ has a bounded condition number and is thus similar  to the eigenvalue gap condition in Assumption 2.1 of \cite{tang2017semiparametric}. Note that Assumption~\ref{ass3} is mainly for ease of exposition as otherwise we can simply include factors depending on $\kappa(P)$ throughout our theoretical results. Finally, Assumption~\ref{ass4} requires the expected degree in the sampled subgraph to be at least of order $\Omega(\log m)$, which in turn imposes a lower bound on the sparsity parameter $\rho_n$, given the network size and the subsample size. We note that Assumption~\ref{ass4} is slightly more restrictive than the condition $n \rho_n = \Omega(\log n)$ typically encountered in the literature (e.g., Assumption~2 of \cite{xie2024entrywise}), but this is expected as the first step in our estimation procedure uses the subsampled network $A_S$ and not the full network $A$. 

The following two theorems establish upper bounds on the estimation error $\hat{X}_{PS} - X_{PS}$, in two-to-infinity and Frobenius norms. In particular these bounds, while properly scaled, all converge to zero as $n$ tends to infinity, thereby implying consistency of $\hat{X}_{PS}$. For ease of presentation, we have expressed our bounds in the form of $Z = \mathcal{O}_{p}(g(n))$ for some random variable $Z$ to mean that, for any constant $c > 0$, there exists an integer $n_0$ and a constant $C > 0$ (both possibly depending on $c$) such that for all $n\geq n_0, |Z|\leq Cg(n)$ with probability at least $1-m^{-c}$. 
 \begin{theorem}\label{thm1}
     Let $A \thicksim \text{GRDPG}_{p,q}(P)$ as in Equation \eqref{eq:grdpg}
     and $\hat{X}_{PS}$ be the estimate of $X_{PS}$ as obtained in Algorithm \ref{algo1}. Given Assumptions \ref{ass1}-\ref{ass4} hold, there exists an orthogonal transformation $W \in \mathbb{O}(d)$ such that
     $$ \|\hat{X}_{PS}W - X_{PS} \|_{2\to\infty} = \order_p\left(\sqrt{\dfrac{\log n}{m}}\right).$$
 \end{theorem}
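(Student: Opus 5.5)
We split the two-to-infinity error into the in-sample rows $S$ and the out-of-sample rows $S^c$. Take $W\in\mathbb{O}(d)$ to be the orthogonal alignment coming from the subgraph ASE. Then
\[
\|\hat X_{PS}W - X_{PS}\|_{2\to\infty}
= \max\bigl\{\|\hat X_S W - X_S\|_{2\to\infty},\ \|\hat X_{S^c}W - X_{S^c}\|_{2\to\infty}\bigr\}.
\]

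\textbf{In-sample rows.} Conditional on $S$, $A_S$ is itself a GRDPG adjacency matrix on $m$ vertices with probability matrix $P_S$. We first check the hypotheses of the entrywise ASE bound (e.g.\ \citet{xie2024entrywise}) for $P_S$. By Assumption~\ref{ass2} every entry is $\Theta(\rho_n)$. To see that $P_S$ has rank $d$, bounded condition number and $|\lambda_i(P_S)|=\Theta(m\rho_n)$, we use uniform sampling. Write $P_S = Z_S I_{p,q} Z_S^\top$. Then $Z_S^\top Z_S=\sum_{i\in S} z_i z_i^\top$ is a sum of $m$ rows sampled without replacement. Each row satisfies $\|z_i\|^2\lesssim \rho_n$; this follows from the bounded entries together with $\|Z\|_{2\to\infty}^2 \le \|U_P\|_{2\to\infty}^2\|P\|$ and an incoherence argument from Assumption~\ref{ass2}. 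Matrix Chernoff/Bernstein then gives $Z_S^\top Z_S = \tfrac{m}{n}Z^\top Z\,(1+o(1))$ with probability $1-m^{-c}$, using Assumption~\ref{ass1} to control the union over $n$. Since $Z^\top Z$ has eigenvalues $\Theta(n\rho_n)$ by Assumption~\ref{ass3}, $P_S$ inherits a bounded condition number, and $\sigma_d(X_S)^2=\Theta(m\rho_n)$. With $m\rho_n=\Omega(\log m)$ (Assumption~\ref{ass4}), the standard result gives
\[
\|\hat X_S W - X_S\|_{2\to\infty}=\order_p\bigl(\sqrt{\log m/m}\bigr).
\]
For the GRDPG the alignment is a priori indefinite orthogonal, but with a spectral gap it can be taken block-orthogonal, hence in $\mathbb{O}(d)$.

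\textbf{Out-of-sample rows.} Fix $i\notin S$ and set $C=X_S(X_S^\top X_S)^{-1}I_{p,q}$ and $\hat C=\hat X_S(\hat X_S^\top\hat X_S)^{-1}I_{p,q}$. Since $x_i^\top = P_{i,S}C$, we decompose
\[
\hat x_i^\top W - x_i^\top = (A_{i,S}-P_{i,S})\,CW' + (A_{i,S}-P_{i,S})(\hat C W - C W') + P_{i,S}(\hat C W - C),
\]
where $W'$ is the matching alignment. One also has to check that $I_{p,q}W = W I_{p,q}$ for block-orthogonal $W$, so that $\hat C W$ is comparable to $C$.

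\emph{Term 1.} Conditional on $S$ and $A_S$, the row $A_{i,S}$ is independent of $A_S$ because $i\notin S$. Hence $(A_{i,S}-P_{i,S})c$ is a sum of $m$ independent bounded centered terms. Each column of $C$ has entries of size $\|C\|_{2\to\infty}\lesssim \sqrt{\rho_n}/(m\rho_n)$ and total squared norm $\lesssim 1/(m\rho_n)$. Bernstein's inequality with a union bound over $n$ rows gives
\[
\sqrt{\rho_n\log n/(m\rho_n)}=\sqrt{\log n/m},
\]
plus a lower-order term. This is where $\log n$ enters.

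\emph{Term 2.} $\|\hat C W - C\|\lesssim \|\hat X_S W-X_S\|/(m\rho_n)\lesssim 1/(m\sqrt{\rho_n})$, using $\|\hat X_SW-X_S\|=\order_p(1)$ from the spectral bound and $\|\hat X_S\|\asymp\sqrt{m\rho_n}$. The difference $\hat C W - C$ depends only on $A_S$, so it is independent of $A_{i,S}$. Applying the same Bernstein argument to each column gives a smaller order.

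\emph{Term 3.} This is the main obstacle. Writing $P_{i,S}=x_i^\top I_{p,q}X_S^\top$ leaves the quantity $x_i^\top I_{p,q}X_S^\top(\hat C W - C)$. A naive spectral bound gives only $\sqrt{\rho_n}\cdot\sqrt{m\rho_n}\cdot 1/(m\sqrt{\rho_n})=\order(\sqrt{\rho_n/m}\cdot\sqrt{m})$, which is too crude. The plan is instead to exploit the exact identity $X_S^\top C = I_{p,q}$ and the corresponding identity for $\hat X_S$. This gives
\[
X_S^\top(\hat C W - C) = -(\hat X_S W - X_S)^\top \hat C W,
\]
so term 3 reduces to $x_i^\top I_{p,q}(\hat X_S W-X_S)^\top\hat C W$. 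We then expand $\hat X_S W - X_S$ to first order as $(A_S-P_S)X_S|D_{P_S}|^{-1}I_{p,q}$ plus a remainder controlled in $\|\cdot\|_{2\to\infty}$ (\citet{xie2024entrywise}). This turns the leading part into a bilinear form in $A_S-P_S$ between fixed vectors, whose size is $\sqrt{\log m}/(m\sqrt{\rho_n})\cdot\|x_i\|\lesssim\sqrt{\log m/m}$ after the normalisations. The remainder is bounded by $\|x_i\|\cdot\sqrt{m}\,\|\hat X_SW-X_S\|_{2\to\infty}\cdot\|\hat C\|$.

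Combining the three terms with the in-sample bound gives $\order_p(\sqrt{\log n/m})$ uniformly over all rows.
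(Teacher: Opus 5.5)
Your overall argument is the paper's. The in-sample rows come from the entrywise ASE bound of Xie et al.\ after a matrix-concentration argument for the spectrum of $P_S$; these are the paper's Lemmas 2 and 4, where $\frac{n}{m}Z_S^\top Z_S$ is exactly the matrix $|D_P|^{1/2}U_P^\top RR^\top U_P|D_P|^{1/2}$ controlled there. The out-of-sample rows are split into two parts. The noise part $(A_{i,S}-P_{i,S})\hat CW$ is bounded by Bernstein and a union bound, using independence of $A_{S^c,S}$ from $A_S$ given $S$; your Terms 1 and 2 together are the paper's $R_{1\pm}$ term, which it handles in one step via Lemma 3 with $V=U_{A_S}$. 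The signal part $P_{i,S}(\hat CW-C)$ corresponds to the paper's $R_{2\pm}$ and $Q_{2\pm}$ terms.

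Where you depart from the paper, in Term 3, the ``main obstacle'' comes from an arithmetic slip. Because $WI_{p,q}=I_{p,q}W$, the matrices $\hat CW$ and $C$ are the transposed pseudo-inverses of $\hat X_SW$ and $X_S$, times $I_{p,q}$. Both have smallest singular value of order $\sqrt{m\rho_n}$. Pseudo-inverse perturbation then gives $\|\hat CW-C\|=\order_p\bigl(\|\hat X_SW-X_S\|/(m\rho_n)\bigr)=\order_p\bigl(1/(m\rho_n)\bigr)$, not $1/(m\sqrt{\rho_n})$. Hence $\|P_{i,S}(\hat CW-C)\|\le c_2\sqrt{m}\rho_n\cdot\order_p\bigl(1/(m\rho_n)\bigr)=\order_p(m^{-1/2})$ uniformly in $i$. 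Even with your own factors, $\sqrt{\rho_n}\cdot\sqrt{m\rho_n}\cdot(m\sqrt{\rho_n})^{-1}=\sqrt{\rho_n/m}$, not $\sqrt{\rho_n}$.

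This crude bound is exactly what the paper uses: $\|P_{S^c,S}\|_{2\to\infty}\lesssim\sqrt{m}\rho_n$, times the Davis--Kahan bound on $U_{A_S}-U_{P_S}W^\top$ and the bound on $\||D_{A_S}|^{-1/2}\|$. Your identity $X_S^\top(\hat CW-C)=-(\hat X_SW-X_S)^\top\hat CW$ is correct. After it, however, the bound $\|x_i\|\,\|\hat X_SW-X_S\|\,\|\hat C\|\lesssim\sqrt{\rho_n}\cdot\order_p(1)\cdot(m\rho_n)^{-1/2}$ already suffices, so the first-order expansion is unnecessary. If you keep the expansion, note that $\hat CW$ depends on $A_S$. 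The leading piece is therefore a bilinear form in fixed vectors only after you replace $\hat CW$ by $C$, which costs $\order_p(1/(m\sqrt{\rho_n}))$ and is harmless.

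One caveat is shared with the paper. In your Terms 1 and 2, the sub-exponential part of Bernstein is of order $\|C\|_{2\to\infty}\log n\asymp\log n/(m\sqrt{\rho_n})$. This is dominated by $\sqrt{\log n/m}$ only when $m\rho_n\gtrsim\log n$, whereas Assumption 4 gives only $m\rho_n\gtrsim\log m$. The paper's application of Lemma 3 silently drops the analogous term $\frac{4}{3}\|U_{A_S}\|_{2\to\infty}v(\nu,d,n)\asymp\log n/\sqrt{m}$. Calling this term lower order therefore requires that extra condition; it does not follow from the stated assumptions.
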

\begin{theorem}\label{thm2}
    Consider the same setting as in Theorem \ref{thm1}. Given Assumptions \ref{ass1}-\ref{ass4} hold, there exists an orthogonal transformation $W \in \mathbb{O}(d)$ such that
     $$ \|\hat{X}_{PS}W-X_{PS}\|_F = \order_p\left(\sqrt{\frac{n}{m}}\right).$$
\end{theorem}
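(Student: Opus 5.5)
The natural route is to reduce Theorem~\ref{thm2} to Theorem~\ref{thm1}, handling the $m$ in-sample rows and the $n-m$ out-of-sample rows separately. With the same $W$ as in Theorem~\ref{thm1}, we have
\[
\|\hat X_{PS}W - X_{PS}\|_F^2 = \|\hat X_S W - X_S\|_F^2 + \|\hat X_{S^c}W - X_{S^c}\|_F^2 .
\]
Crudely, $\|M\|_F^2 \le (\#\text{rows})\,\|M\|_{2\to\infty}^2$, so Theorem~\ref{thm1} gives $\|\hat X_{PS}W-X_{PS}\|_F^2 \le n\cdot \order_p(\log n/m)$. That is $\order_p(\sqrt{n\log n/m})$, which is off by a $\sqrt{\log n}$ factor. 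The $\log n$ in Theorem~\ref{thm1} comes only from the union bound over rows, so the plan is to avoid that union bound and work with row-wise second moments instead.

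\textbf{In-sample block.} For $\hat X_S$, I would use the standard Frobenius bound for ASE on the $m$-vertex subgraph $A_S$. Under Assumptions~\ref{ass2}--\ref{ass4}, $P_S$ has nonzero eigenvalues of order $m\rho_n$ with high probability: uniform sampling plus matrix concentration on $X_S^\top X_S$, with $m=\Omega((\log n)^{1+a})$ from Assumption~\ref{ass1}. By spectral concentration, $\|A_S-P_S\|=\order_p(\sqrt{m\rho_n})$. The Davis--Kahan/Procrustes argument then gives
\[
\min_W\|\hat X_SW - X_S\|_F = \order_p(1),
\]
in the scaling where rows of $X_S$ are $\order(\sqrt{\rho_n})$. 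This block is therefore $\order_p(1)\le \order_p(\sqrt{n/m})$.

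\textbf{Out-of-sample block: decomposition.} For $i\notin S$, write $\hat C=\hat X_S(\hat X_S^\top\hat X_S)^{-1}I_{p,q}$ and $C=X_S(X_S^\top X_S)^{-1}I_{p,q}$. Then
\[
\hat x_i^\top W - x_i^\top = (A_{i,S}-P_{i,S})\,C + A_{i,S}\,(\hat CW - C),
\]
where I use that $W$ is orthogonal and commutes appropriately up to terms already controlled, as in the proof of Theorem~\ref{thm1}. This splits the error into two terms.

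\textbf{Noise term.} Condition on $S$. The rows $A_{i,S}-P_{i,S}$ for $i\in S^c$ are independent of each other and of $A_S$. Each has covariance bounded by $c\rho_n I_m$, and $\|C\|^2=\|(X_S^\top X_S)^{-1}\|=\order(1/(m\rho_n))$. So each row has $\E\|(A_{i,S}-P_{i,S})C\|^2\lesssim \rho_n\|C\|_F^2=\order(1/m)$. Summing over $n-m$ rows gives $\order(n/m)$. A Bernstein/Hanson--Wright bound for the sum of independent bounded terms upgrades this to probability $1-m^{-c}$; the fluctuation is $\order(\sqrt{n}\,\mathrm{polylog}/m)$, which is negligible.

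\textbf{Anchor term.} Bound $\|A_{S^c,S}(\hat CW-C)\|_F\le \|A_{S^c,S}\|\,\|\hat CW-C\|_F$. The cross matrix satisfies $\|A_{S^c,S}\|\le\|P_{S^c,S}\|+\|A_{S^c,S}-P_{S^c,S}\|=\order_p(\rho_n\sqrt{nm}+\sqrt{n\rho_n})$. The anchor error $\|\hat CW-C\|$ I would control by perturbing $(\hat X_S^\top\hat X_S)^{-1}$ using the in-sample bound:
\[
\|\hat CW - C\| \lesssim \frac{\|\hat X_SW-X_S\|}{\sigma_{\min}(X_S)^{2}} = \order_p\!\left(\frac{1}{(m\rho_n)^{3/2}}\cdot\sqrt{\rho_n}\right) \text{ roughly.}
\]
Multiplying out gives about $\sqrt{n}\,\rho_n\sqrt{m}\cdot(m\rho_n)^{-1}=\sqrt{n\rho_n/m}\le\sqrt{n/m}$.

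\textbf{Main obstacle.} The hard part is this anchor term. The mean part $P_{S^c,S}(\hat CW-C)=X_{S^c}I_{p,q}X_S^\top(\hat CW-C)$ must be handled through the first-order structure $X_S^\top(\hat CW-C)$, which is of size $\order(\|\hat X_SW-X_S\|/\sqrt{m\rho_n})$. The crude spectral-norm bound above may lose a factor, so I would fall back on this finer expansion, reusing the intermediate lemmas from the proof of Theorem~\ref{thm1}. Combining the three pieces gives $\|\hat X_{PS}W-X_{PS}\|_F=\order_p(\sqrt{n/m})$.
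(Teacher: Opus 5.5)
Your proposal is correct and reaches the stated rate, but it splits the out-of-sample error differently from the paper.

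\textbf{The paper's route.} The paper reuses the three-term split from the proof of Theorem~\ref{thm1}:
$R_\pm = E_{S^c,S}U_{A_S\pm}|D_{A_S\pm}|^{-1/2} + P_{S^c,S}(U_{A_S\pm}-U_{P_S\pm}W_\pm^\top)|D_{A_S\pm}|^{-1/2} + P_{S^c,S}U_{P_S\pm}Q_{2\pm}$.
It bounds each piece by $\|MN\|_F\le\|M\|\,\|N\|_F$, with $M$ the tall cross factor. The ingredients are $\|E_{S^c,S}\|\le\|A-P\|=\order_p(\sqrt{n\rho_n})$, $\|P_{S^c,S}\|\le c_2\rho_n\sqrt{nm}$, Davis--Kahan for the eigenvectors, and an eigenvalue-perturbation bound for $Q_{2\pm}$. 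The noise piece and the eigenvector piece each contribute $\sqrt{n/m}$. Because only spectral norms are used, the paper never needs $E_{S^c,S}$ to be independent of $U_{A_S}$.

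\textbf{Your route.} You pair the noise with the population map $C$, which is independent of $E_{S^c,S}$ given $S$. You then lump all anchor error, eigenvectors and eigenvalues together, into the single perturbation $\hat CW-C$ of $Y\mapsto Y(Y^\top Y)^{-1}I_{p,q}$, and control it directly from the in-sample Frobenius bound. This is more modular: any Frobenius bound for the subgraph embedding transfers. It also isolates $E_{S^c,S}C$ as the leading term.

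\textbf{The noise term.} Your second-moment treatment of $E_{S^c,S}C$ is more work than needed. Each $\|E_{i,S}C\|^2$ is only almost surely bounded by order $1/\rho_n$. A Bernstein bound based on that range is too weak in sparse regimes allowed by Assumptions~\ref{ass1}--\ref{ass4}, so you would have to truncate at the high-probability level of Lemma~\ref{lem3}. The one-line bound $\|E_{S^c,S}C\|_F\le\|E_{S^c,S}\|\,\|C\|_F=\order_p\bigl(\sqrt{n\rho_n}\,(m\rho_n)^{-1/2}\bigr)$ gives the same $\sqrt{n/m}$ with the paper's probability guarantee.

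\textbf{Arithmetic in the anchor step.} As written this step is internally inconsistent, though the conclusion survives.
\begin{itemize}
\item Your own in-sample bound gives $\|\hat X_SW-X_S\|=\order_p(1)$, and $\sigma_{\min}(X_S)^2=\sigma_d(P_S)\asymp m\rho_n$. The pseudo-inverse perturbation bound then gives $\|\hat CW-C\|=\order_p\bigl((m\rho_n)^{-1}\bigr)$, not $\sqrt{\rho_n}\,(m\rho_n)^{-3/2}$.
\item The product is $\rho_n\sqrt{nm}\,(m\rho_n)^{-1}=\sqrt{n/m}$, not $\sqrt{n\rho_n/m}$.
\item Since $m\rho_n\to\infty$, the $\sqrt{n\rho_n}$ part of $\|A_{S^c,S}\|$ is negligible against $\rho_n\sqrt{nm}$. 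So the crude bound $\|A_{S^c,S}\|\,\|\hat CW-C\|_F$ already gives exactly $\order_p(\sqrt{n/m})$, the same order as the paper's $R_{2\pm}$ term.
\end{itemize}
The finer expansion in your ``main obstacle'' paragraph is therefore unnecessary. Also, the identity $\hat CW=(\hat X_SW)\{(\hat X_SW)^\top\hat X_SW\}^{-1}I_{p,q}$ holds exactly, because $W=\mathrm{diag}(W_+,W_-)$ commutes with $I_{p,q}$. No ``up to controlled terms'' caveat is needed.
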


Theorems~\ref{thm1} and \ref{thm2} imply that the estimate $\hat{P}_{PS}$ given by Algorithm~\ref{algo1} is also consistent. This is formalized in the following corollary.
\begin{corollary}
Consider the same setting as in Theorem~\ref{thm1} and let $\hat{P}_{PS} = \hat{X}_{PS}I_{p,q}\hat{X}_{PS}^\top$ be the estimate for $P$ using Algorithm~\ref{algo1}. Then 
    \begin{gather*}
       \|\mathfrak{S} \hat{P}_{PS} \mathfrak{S}^{\top} - P \|_F= \order_p\bigl(\|X_{PS}\|_{F} \sqrt{n/m}\bigr), \\
    \|\mathfrak{S} \hat{P}_{PS} \mathfrak{S}^{\top} - P \|_{2\to\infty}=\order_p\Bigl(\|X_{PS}\|_{2 \to \infty} \sqrt{\frac{n \log n}{m}}\Bigr).
    \end{gather*}
    for some $n \times n$ permutation matrix $\mathfrak{S}$ dependent only on $S$. 
     \label{corollary1}
\end{corollary}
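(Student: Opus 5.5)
We prove Corollary~\ref{corollary1} by writing the error in $\hat P_{PS}$ as a bilinear perturbation of the error in $\hat X_{PS}$, and then applying Theorems~\ref{thm1} and~\ref{thm2} to each term.

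\textbf{Reduction to $X_{PS}$.} Fix the realization of $S$ and let $\mathfrak{S}$ be the permutation that puts the vertices of $S$ first. Then $P=\mathfrak{S}X_{PS}I_{p,q}X_{PS}^\top\mathfrak{S}^\top$. Permutations preserve both the Frobenius norm and the two-to-infinity norm, so it suffices to bound $\hat P_{PS}-X_{PS}I_{p,q}X_{PS}^\top$.

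\textbf{Absorbing the orthogonal alignment.} Let $W\in\mathbb{O}(d)$ be the alignment from Theorems~\ref{thm1} and~\ref{thm2}; we need the same $W$ in both. Set $\tilde X=\hat X_{PS}W$ and $E=\tilde X-X_{PS}$.

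The main obstacle is that $\hat P_{PS}=\tilde X\,W^\top I_{p,q}W\,\tilde X^\top$, and a general orthogonal $W$ need not commute with $I_{p,q}$. I would check from the theorems' proofs that $W$ is block diagonal, i.e.\ $W=\mathrm{diag}(W_+,W_-)$ with $W_+\in\mathbb{O}(p)$ and $W_-\in\mathbb{O}(q)$. This is standard, since $W$ comes from aligning the positive and negative eigenspaces separately, and it gives $W^\top I_{p,q}W=I_{p,q}$. One also needs $\hat p=p$ with the stated probability. This follows from eigenvalue concentration of $A_S$ around $P_S$ under Assumptions~\ref{ass2}--\ref{ass4}, since the gap is of order $m\rho_n\gg\sqrt{m\rho_n}$.

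\textbf{Decomposition.} With $\hat P_{PS}=\tilde XI_{p,q}\tilde X^\top$, expand
$$\tilde XI_{p,q}\tilde X^\top - X_{PS}I_{p,q}X_{PS}^\top = EI_{p,q}X_{PS}^\top + X_{PS}I_{p,q}E^\top + EI_{p,q}E^\top .$$

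\textbf{Frobenius bound.} Use $\|MN^\top\|_F\le\|M\|_F\|N\|$ and $\|N\|\le\|N\|_F$. The first two terms are each at most $\|E\|_F\|X_{PS}\|_F$, which is $\order_p(\|X_{PS}\|_F\sqrt{n/m})$ by Theorem~\ref{thm2}. The quadratic term is at most $\|E\|_F^2=\order_p(n/m)$.

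To absorb the quadratic term, note that under Assumption~\ref{ass2} and Assumption~\ref{ass3}, $\|X_{PS}\|_F^2\asymp\operatorname{tr}|P|\gtrsim$ a constant times $n\rho_n$. Together with $m\rho_n=\Omega(\log m)$, this gives $\sqrt{n/m}\lesssim\|X_{PS}\|_F$. Hence $n/m\lesssim \|X_{PS}\|_F\sqrt{n/m}$, so the quadratic term is of the same order as the linear ones. If this comparison only holds up to logarithmic factors, I would instead keep the term and note that it is dominated whenever $\|X_{PS}\|_F\gtrsim\sqrt{n/m}$.

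\textbf{Two-to-infinity bound.} Use $\|MN^\top\|_{2\to\infty}\le\|M\|_{2\to\infty}\|N\|$, together with $\|N\|\le\|N\|_F\le\sqrt n\|N\|_{2\to\infty}$.

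For the term $EI_{p,q}X_{PS}^\top$, Theorem~\ref{thm1} gives
$$\|E\|_{2\to\infty}\sqrt n\,\|X_{PS}\|_{2\to\infty}=\order_p\bigl(\|X_{PS}\|_{2\to\infty}\sqrt{n\log n/m}\bigr).$$

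For the term $X_{PS}I_{p,q}E^\top$, use $\|E\|\le\|E\|_F=\order_p(\sqrt{n/m})$ from Theorem~\ref{thm2}. This gives $\|X_{PS}\|_{2\to\infty}\sqrt{n/m}$, which is smaller than the first term.

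For the quadratic term, the bound is $\|E\|_{2\to\infty}\|E\|_F=\order_p(\sqrt{\log n/m}\sqrt{n/m})$. We have $\|X_{PS}\|_{2\to\infty}\gtrsim\sqrt{\rho_n}$, because the diagonal entries of $P$ are at least $c_1\rho_n$. Combined with $m\rho_n\gtrsim\log m$, this makes the quadratic term lower order up to at most a constant.

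Finally, both theorems hold with probability $1-m^{-c}$. Taking a union bound over the two events and the event $\hat p=p$ yields the stated $\order_p$ rates.
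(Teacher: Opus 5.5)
Your proposal is correct and follows essentially the same route as the paper. Both arguments expand $\hat P_{PS}-P$ bilinearly in $Y=\hat X_{PS}W-X_{PS}$, then apply Theorems~\ref{thm1} and~\ref{thm2} together with $\|X_{PS}\|\le\sqrt n\,\|X_{PS}\|_{2\to\infty}$; the paper's two-term form $(\hat X_{PS}W)I_{p,q}Y^\top+YI_{p,q}X_{PS}^\top$ is just your three-term expansion regrouped. Your extra checks are points the paper uses only implicitly, so your version is if anything more complete: block-diagonality of $W$ (so that $W^\top I_{p,q}W=I_{p,q}$), $\hat p=p$, and absorbing the quadratic term via $\|x_i\|^2\ge x_i^\top I_{p,q}x_i=P_{ii}\ge c_1\rho_n$ together with $m\rho_n\gtrsim 1$. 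Summing that same inequality over rows also gives $\|X_{PS}\|_F^2\ge c_1 n\rho_n$ directly, so you can drop the detour through $\operatorname{tr}|P|$.
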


Now, based on these theorems and corollaries, we move forward to establish the theoretical guarantees for the proposed two-sample testing procedures. 

\subsection{Hypothesis testing consistency of \texttt{PredSubTest} and \texttt{PureSubTest}}

The following two theorems establish the hypothesis testing consistency of \texttt{PredSubTest} in terms of the Frobenius norm and the two-to-infinity norm.

\begin{theorem}\label{testing0}
    Let $\hat{P}_{PS}^{(1)}$ and $\hat{P}_{PS}^{(2)}$ be estimates of $P^{(1)}$ and $P^{(2)}$ obtained using Algorithm \ref{algo1}, respectively. 
    Define the test statistic $T_F = \|\hat{P}_{PS}^{(1)}-\hat{P}_{PS}^{(2)}\|_F$ and let $R_{F} = \|\hat{X}_{PS}^{(1)}\|_{F} + \|\hat{X}_{PS}^{(2)}\|_{F}$. There exists a universal constant $K_4$ such that if $\mathcal{R}_n$ is the rejection region of the form 
    $\{ T_n > 2 K_4 R_{F} \sqrt{n/m}\}$ then for sufficiently large $n$ the following holds.
    \begin{itemize}
    \item If $P^{(1)} = P^{(2)}$ then $\mathbb{P}(T_F  \in \mathcal{R}_n) = o_{p}(1)$. 
    \item If $\|P^{(1)} - P^{(2)}\|_{F} = \Omega(n \sqrt{\rho_n/m})$ then 
    $\mathbb{P}(T_F \not \in \mathcal{R}_n) = o_{p}(1)$. 
    \end{itemize}    
\end{theorem}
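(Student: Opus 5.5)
We reduce both parts to the estimation bound of Corollary~\ref{corollary1}, which controls $\|\mathfrak{S}\hat{P}^{(j)}_{PS}\mathfrak{S}^\top - P^{(j)}\|_F$ for $j=1,2$. Both networks use the same subsample $S$ in Algorithm~\ref{algo2}, so a single permutation $\mathfrak{S}$ serves both. Since the Frobenius norm is permutation invariant, we may drop $\mathfrak{S}$ when comparing the two estimates.

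\textbf{Step 1: an estimated error bound.} We make the corollary's bound explicit. Write
$$\hat{X}_{PS}I_{p,q}\hat{X}_{PS}^\top - X_{PS}I_{p,q}X_{PS}^\top = (\hat{X}_{PS}W - X_{PS})I_{p,q}(\hat{X}_{PS}W)^\top + X_{PS}I_{p,q}(\hat{X}_{PS}W - X_{PS})^\top .$$
Here $W$ is the orthogonal matrix from Theorem~\ref{thm2}, and we assume it can be chosen so that $W$ commutes with $I_{p,q}$, i.e.\ block diagonal. Applying $\|AB\|_F\le \|A\|_F\|B\|$ and Theorem~\ref{thm2}, the error is at most $K_4'\sqrt{n/m}\,(\|\hat{X}_{PS}\|_F+\|X_{PS}\|_F)$ with probability at least $1-m^{-c}$.

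We then replace $\|X_{PS}\|_F$ by the observable $\|\hat{X}_{PS}\|_F$. By the triangle inequality and Theorem~\ref{thm2}, $\bigl|\|\hat{X}_{PS}\|_F-\|X_{PS}\|_F\bigr|=\order_p(\sqrt{n/m})$. Also $\|X_{PS}\|_F \ge \sqrt{n}\,\min_i\|x_i\|$, and Assumptions~\ref{ass2}--\ref{ass3} give $\|X_{PS}\|_F = \Theta(\sqrt{n\rho_n})$. Assumption~\ref{ass4} gives $m\rho_n\to\infty$, so $\sqrt{n/m}=o(\sqrt{n\rho_n})$ and the two norms are equivalent up to a factor tending to one. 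This yields a universal $K_4$ such that, with probability $1-o(1)$,
$$\|\hat{P}^{(j)}_{PS}-P^{(j)}\|_F \le K_4\sqrt{n/m}\,\|\hat{X}^{(j)}_{PS}\|_F \quad (j=1,2).$$

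\textbf{Step 2: size under the null.} If $P^{(1)}=P^{(2)}$, the triangle inequality and Step 1 give $T_F\le K_4\sqrt{n/m}\,R_F<2K_4R_F\sqrt{n/m}$ with probability $1-o(1)$. A union bound over the two networks then shows that the rejection probability is $o(1)$.

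\textbf{Step 3: power under the alternative.} The reverse triangle inequality gives $T_F \ge \|P^{(1)}-P^{(2)}\|_F - K_4\sqrt{n/m}\,R_F$. By Step 1, $R_F = \order_p(\sqrt{n\rho_n})$, so the threshold $2K_4R_F\sqrt{n/m}$ plus the error term is $\order_p(n\sqrt{\rho_n/m})$. If $\|P^{(1)}-P^{(2)}\|_F \ge C n\sqrt{\rho_n/m}$ with $C$ large relative to $K_4$, then $T_F$ exceeds the threshold with probability $1-o(1)$. Here we interpret the $\Omega(\cdot)$ hypothesis as having a sufficiently large constant; a strictly larger order would avoid this constant bookkeeping.

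\textbf{Main obstacle.} The delicate step is Step 1, specifically handling the indefinite factor $I_{p,q}$ together with the orthogonal $W$. Theorem~\ref{thm2} supplies some $W\in\mathbb{O}(d)$, but the decomposition requires $W I_{p,q}W^\top=I_{p,q}$. We would first try to show that the $W$ produced in the proof of Theorem~\ref{thm2} is the usual Procrustes/sign alignment, which is block diagonal with respect to the positive and negative eigenspaces. This should hold because $\hat p=p$ with high probability by eigenvalue separation. Failing that, we would bound $\|WI_{p,q}W^\top-I_{p,q}\|$ directly via Weyl's inequality on the subgraph eigenvalues of $A_S$.
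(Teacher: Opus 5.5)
Your proposal is correct and follows essentially the same route as the paper: the Frobenius error bound from Corollary~1, then the triangle inequality under $H_0$ after replacing $\|X^{(i)}_{PS}\|_F$ by $\|\hat X^{(i)}_{PS}\|_F$, then the reverse triangle inequality with a sufficiently large constant under $H_1$. The paper leaves two points implicit that you spell out: the justification via $\|X_{PS}\|_F=\Theta(\sqrt{n\rho_n})$ and $m\rho_n\to\infty$, and the obstacle you flag about $W$, which the paper resolves by constructing $W=\mathrm{diag}(W_+,W_-)$ in the proofs of Theorems~1--2; this $W$ is block diagonal and hence satisfies $WI_{p,q}W^\top=I_{p,q}$.
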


\begin{theorem}\label{testing1}
 Let $\hat{P}_{PS}^{(1)}$ and $\hat{P}_{PS}^{(2)}$ be estimates of $P^{(1)}$ and $P^{(2)}$ obtained using Algorithm \ref{algo1}, respectively. 
    Define the test statistic $T_{2 \to\infty}= \|\hat{P}_{PS}^{(1)}-\hat{P}_{PS}^{(2)}\|_{2 \to \infty}$ and let $R_{2 \to \infty} = \|\hat{X}_{PS}^{(1)}\|_{2 \to \infty} + \|\hat{X}_{PS}^{(2)}\|_{2 \to \infty}$. There exists a universal constant $K_4$ such that if $\tilde{\mathcal{R}}_n$ is the rejection region of the form $\{T_{2 \to \infty} \geq 2 K_4 R_{2 \to \infty} \sqrt{(n \log n)/m}$ then for sufficiently large $n$ the following holds.
        \begin{itemize}
    \item If $P^{(1)} = P^{(2)}$ then $\mathbb{P}(T_{2 \to \infty} \in \tilde{\mathcal{R}}_n) = o_{p}(1)$. 
    \item If $\|P^{(1)} - P^{(2)}\|_{2 \to \infty} = \Omega(\sqrt{(n \rho_n \log n)/m})$ then 
    $\mathbb{P}(T_{2 \to \infty} \not \in \tilde{\mathcal{R}}_n) = o_{p}(1)$. 
    \end{itemize}    
\end{theorem}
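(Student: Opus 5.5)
The plan is to reduce both parts of Theorem~\ref{testing1} to the two-to-infinity part of Corollary~\ref{corollary1}, applied separately to each network with the common subsample $S$. Since the same $S$ (hence the same permutation $\mathfrak{S}$) is used for $A^{(1)}$ and $A^{(2)}$, and the two-to-infinity norm is invariant under conjugation by a permutation, we may work with $\mathfrak{S}\hat P^{(k)}_{PS}\mathfrak{S}^\top$ throughout. Corollary~\ref{corollary1} then gives a constant $K_4$ such that, for each $k\in\{1,2\}$, with probability at least $1-m^{-c}$,
\[
\|\mathfrak{S}\hat P^{(k)}_{PS}\mathfrak{S}^\top - P^{(k)}\|_{2\to\infty}\le K_4\|X^{(k)}_{PS}\|_{2\to\infty}\sqrt{(n\log n)/m}.
\]
The first auxiliary step is to replace $\|X^{(k)}_{PS}\|_{2\to\infty}$ by the observable $\|\hat X^{(k)}_{PS}\|_{2\to\infty}$. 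By Theorem~\ref{thm1}, $\bigl|\|\hat X^{(k)}_{PS}\|_{2\to\infty}-\|X^{(k)}_{PS}\|_{2\to\infty}\bigr|=\order_p(\sqrt{\log n/m})$. Under Assumptions~\ref{ass2}--\ref{ass3}, $\|X^{(k)}_{PS}\|_{2\to\infty}=\Theta(\sqrt{\rho_n})$: rows of $X_S$ have norm of order $\sqrt{\rho_n}$ because $P_{ii}\asymp\rho_n$ and $\kappa(P_S)$ is bounded w.h.p., and $X_{PS}=ZQ$ with $Q$ well conditioned. Assumption~\ref{ass4} gives $\log n/m\ll\rho_n$ (possibly combined with Assumption~\ref{ass1}; I would check this carefully). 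Hence the error is $o_p$ of the main term, so $\|X^{(k)}_{PS}\|_{2\to\infty}\le 2\|\hat X^{(k)}_{PS}\|_{2\to\infty}$ w.h.p. Absorbing this factor into $K_4$ gives
\[
\|\mathfrak{S}\hat P^{(k)}_{PS}\mathfrak{S}^\top-P^{(k)}\|_{2\to\infty}\le K_4\|\hat X^{(k)}_{PS}\|_{2\to\infty}\sqrt{(n\log n)/m}.
\]

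\textbf{Size.} Under $H_0$, $P^{(1)}=P^{(2)}$. The triangle inequality gives $T_{2\to\infty}\le K_4R_{2\to\infty}\sqrt{(n\log n)/m}$ on an event of probability at least $1-2m^{-c}\to1$. This is strictly below the threshold $2K_4R_{2\to\infty}\sqrt{(n\log n)/m}$, so the rejection probability is at most $2m^{-c}=o(1)$.

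\textbf{Power.} Under the alternative, the reverse triangle inequality gives
\[
T_{2\to\infty}\ge\|P^{(1)}-P^{(2)}\|_{2\to\infty}-K_4R_{2\to\infty}\sqrt{(n\log n)/m}.
\]
We therefore need $\|P^{(1)}-P^{(2)}\|_{2\to\infty}\ge 3K_4R_{2\to\infty}\sqrt{(n\log n)/m}$. Since $R_{2\to\infty}=\order_p(\sqrt{\rho_n})$ by the step above, the right side is $\order_p(\sqrt{n\rho_n\log n/m})$. The hypothesis $\|P^{(1)}-P^{(2)}\|_{2\to\infty}=\Omega(\sqrt{n\rho_n\log n/m})$ then suffices, provided the implicit constant in $\Omega$ is taken large enough relative to $K_4$ and the constant in $\|X_{PS}\|_{2\to\infty}\lesssim\sqrt{\rho_n}$. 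I would state that the $\Omega$ is understood with a sufficiently large constant. The failure probability is again $\order(m^{-c})$.

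\textbf{Main obstacle.} The hardest step is the uniform control $\|X^{(k)}_{PS}\|_{2\to\infty}=\Theta(\sqrt{\rho_n})$, particularly for the rows in $S^c$, since $X_{S^c}$ involves $(X_S^\top X_S)^{-1}$. I would argue that $X_S^\top X_S$ has eigenvalues $\asymp m\rho_n$ w.h.p. via matrix Bernstein over uniform sampling, using the bounded condition number from Assumption~\ref{ass3}. Each row of $P_{S^c,S}X_S$ has norm at most $\sqrt m\cdot\rho_n\cdot\sqrt{m\rho_n}$, so the rows of $X_{S^c}$ have norm $\order(\sqrt{\rho_n})$. The matching lower bound follows from $P_{ii}\ge c_1\rho_n$, together with $\|I_{p,q}\|=1$ and the well-conditioned $Q$.
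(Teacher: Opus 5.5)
Your plan follows the paper's proof. The paper proves this theorem by taking the two-to-infinity bound of Corollary~\ref{corollary1} for each network and then repeating the proof of Theorem~\ref{testing0} verbatim. That proof has three steps: the triangle inequality under $H_0$; replacing $\|X^{(k)}_{PS}\|$ by $\|\hat X^{(k)}_{PS}\|$ at the cost of a factor $2$; and the reverse triangle inequality, with a sufficiently large constant in the $\Omega$, under the alternative. Your remarks on the common permutation $\mathfrak{S}$ and on $\|X^{(k)}_{PS}\|_{2\to\infty}\asymp\sqrt{\rho_n}$ fill in details the paper leaves implicit. The lower bound there comes from $\|x_i\|^2\ge|x_i^\top I_{p,q}x_i|=P_{ii}\ge c_1\rho_n$, and the upper bound from $\sigma_d(P_S)\gtrsim m\rho_n$.

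The step you flagged is not delivered by the assumptions, and the paper's appeal to ``the exact same arguments'' makes the same move without justification. Assumptions~\ref{ass1} and~\ref{ass4} only give $m\rho_n\gtrsim\log m$. With $m$ as small as $(\log n)^{1+a}$ one has $\log m\asymp\log\log n\ll\log n$, so the error $\sqrt{\log n/m}$ from Theorem~\ref{thm1} need not be $o(\sqrt{\rho_n})$.

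For the size part you can route around this through the Frobenius norm. Every row of $X^{(k)}_{PS}$ has norm at least $\sqrt{c_1\rho_n}$, so Theorem~\ref{thm2} gives
\[
\|\hat X^{(k)}_{PS}\|_{2\to\infty}\ \ge\ n^{-1/2}\|\hat X^{(k)}_{PS}\|_F\ \ge\ \sqrt{c_1\rho_n}-\order_p(m^{-1/2})\ \ge\ \tfrac12\sqrt{c_1\rho_n}
\]
with high probability, using only $m\rho_n\to\infty$. Combined with $\|X^{(k)}_{PS}\|_{2\to\infty}\le C\sqrt{\rho_n}$, this gives $\|X^{(k)}_{PS}\|_{2\to\infty}\le (2C/\sqrt{c_1})\,\|\hat X^{(k)}_{PS}\|_{2\to\infty}$, and the constant is absorbed into $K_4$.

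For the power part there is no such detour. You need the upper bound $R_{2\to\infty}=\order_p(\sqrt{\rho_n})$, but Theorem~\ref{thm1} only yields $\order_p(\sqrt{\rho_n}+\sqrt{\log n/m})$. When $m\rho_n\ll\log n$ the extra term is real. In an Erd\H{o}s--R\'enyi-type model, $\|\hat x_i\|$ scales like $\deg_S(i)/(m\sqrt{\rho_n})$, and the largest degree into $S$ among $n$ vertices is then much larger than $m\rho_n$. The data-driven threshold therefore overshoots $\sqrt{n\rho_n\log n/m}$ by a diverging factor.

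You should state $m\rho_n=\Omega(\log n)$ as an explicit extra hypothesis. The paper tacitly uses the same condition in the proof of Theorem~\ref{thm1}, where the $\|V\|_{2\to\infty}\,v(\nu,d,n)$ term of Lemma~\ref{lem3} is dropped.
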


\begin{remark}
\label{remark:PredSubTest}
\textbf{Choice of norm.}
Theorems~\ref{testing0} and \ref{testing1} imply that the \texttt{PredSubTest} procedures based on $T_F$ and $T_{2 \to \infty}$ are both consistent, provided that $P^{(1)}$ is sufficiently different from $P^{(2)}$. However, the class of alternatives for which these two versions are consistent are quite different. More specifically, suppose that the entrywise differences between $P^{(1)}$ and $P^{(2)}$, if non-zero, are all of order $\rho_n$ and let 
$\gamma_n$ denote the proportion of entries that are different. Then, \texttt{PredSubTest} based on $T_F$ is consistent when $\|P^{(1)} - P^{(2)}\|_{F} = \Omega(n \sqrt{\rho_n/m})$, or equivalently $\gamma_n = \Omega(1/(m \rho_n))$. 
In contrast, to ensure consistency of \texttt{PredSubTest} based on $T_{2 \to \infty}$ we only need one row of $P^{(1)}$ to be significantly different from the corresponding row of $P^{(2)}$, and this can be satisfied even under the much weaker condition $\gamma_n = \Omega((\log n)/(n m \rho_n))$. 
The test procedure based on $T_{2 \to \infty}$ can therefore identify smaller or more localized differences between $P^{(1)}$ and $P^{(2)}$ compared to that for $T_F$. 
%
\end{remark}

The following two theorems establish the consistency of \texttt{PureSubTest} in terms of the the two-to-infinity norm and the Frobenius norm, respectively.

\begin{theorem}\label{testing2}
   Let $\hat{P}_{S}^{(1)},\hat{P}_{S}^{(2)}$ be the ASE estimate for a chosen subsample $S$ of size $m$ from both the graphs respectively. 
    Define the test statistic $T_{2 \to\infty}^S= \|\hat{P}_{S}^{(1)}-\hat{P}_{S}^{(2)}\|_{2 \to \infty}$ and let $R_{2 \to \infty}^S = \|\hat{X}_{S}^{(1)}\|_{2 \to \infty} + \|\hat{X}_{S}^{(2)}\|_{2 \to \infty}$. There exists a universal constant $C_1$ such that if $\Bar{\mathcal{R}}_n$ is the rejection region of the form $\{T_{2 \to \infty}^S \geq 2 C_1 R_{2 \to \infty}^S \sqrt{\log m}\}$ then for sufficiently large $n$ the following holds.
        \begin{itemize}
    \item If $P^{(1)} = P^{(2)}$ then $\mathbb{P}(T_{2 \to \infty}^S \in \Bar{\mathcal{R}}_n) = o_{p}(1)$. 
    \item If there exists at least $k = \Omega((n\log n)/m)$ many rows in $P^{(1)}$ and $P^{(2)}$  with $\|P_i^{(1)}-P_i^{(2)}\|_2=\Omega(\sqrt{(n\rho_n \log m)/m})$ 
    then 
    $\mathbb{P}(T_{2 \to \infty}^S \not \in \Bar{\mathcal{R}}_n) = o_{p}(1)$. 
    \end{itemize}
\end{theorem}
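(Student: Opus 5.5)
The argument rests on a two-to-infinity concentration bound for ASE on the subgraph alone, followed by a sampling argument for the alternative. Since $S$ is drawn uniformly and independently of the edges, conditional on $S$ the matrix $A_S$ is a GRDPG on $m$ vertices with probability matrix $P_S$. The first step is to check that Assumptions~\ref{ass2}--\ref{ass4} transfer to $P_S$ with high probability over $S$. Uniform sparsity is inherited entrywise. For the condition number, note that $P_S = Z_S I_{p,q} Z_S^\top$ and $Z_S^\top Z_S = \sum_{i\in S} z_i z_i^\top$ concentrates around $(m/n) Z^\top Z$ by matrix Chernoff. Here Assumption~\ref{ass1} together with $\|Z\|_{2\to\infty}^2 = \order(d\rho_n)$, which follows from uniform sparsity and bounded $\kappa(P)$, gives enough samples to make this concentration hold with probability $1-n^{-c}$. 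Consequently $|\lambda_d(P_S)| = \Theta(m\rho_n)$ and $\kappa(P_S) = \order(1)$. Assumption~\ref{ass4} then lets us invoke the entrywise ASE bound of \citet{xie2024entrywise} (the same ingredient used for Theorem~\ref{thm1}). This yields a $W$ with $\|\hat X_S W - X_S\|_{2\to\infty} = \order_p(\sqrt{\log m/m})$, and, through the expansion $\hat P_S - P_S = (\hat X_S W - X_S) I_{p,q} \hat{X}_S^{\top}{}' + X_S I_{p,q}(\hat X_S W - X_S)^\top$ (using the GRDPG analogue with $W$ in $\mathbb{O}(p,q)\cap\mathbb{O}(d)$),
$$\|\hat P_S - P_S\|_{2\to\infty} \le C_1 \bigl(\|\hat X_S\|_{2\to\infty} + \|X_S\|_{2\to\infty}\bigr)\,\|X_S\|\,\sqrt{\log m/m} .$$
Since $\|X_S\| = \order(\sqrt{m\rho_n})$ and $\|X_S\|_{2\to\infty} = \Theta(\sqrt{\rho_n})$, which is comparable to $\|\hat X_S\|_{2\to\infty}$ with high probability, the deviation is at most $C_1 \|\hat X_S\|_{2\to\infty}\sqrt{\log m}$ up to constants. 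This matches the threshold in $\Bar{\mathcal R}_n$ after absorbing $\|X_S\|\sqrt{1/m}=\order(\sqrt{\rho_n})$ into the constant.

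\emph{Null case.} With $P^{(1)}=P^{(2)}$, the triangle inequality gives $T^S_{2\to\infty} \le \|\hat P_S^{(1)}-P_S\|_{2\to\infty}+\|\hat P_S^{(2)}-P_S\|_{2\to\infty}$. Each term is bounded by $C_1\|\hat X_S^{(j)}\|_{2\to\infty}\sqrt{\log m}$ up to the factor absorbed above. The sum therefore stays below $2C_1 R^S_{2\to\infty}\sqrt{\log m}$ with probability $1-\order(m^{-c})$, which gives the first bullet.

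\emph{Alternative.} Let $\mathcal I$ be the set of $k$ rows with $\|P^{(1)}_i-P^{(2)}_i\|_2 = \Omega(\sqrt{n\rho_n\log m/m})$. First, $S$ must hit $\mathcal I$. The probability that it misses is at most $(1-k/n)^m\le e^{-km/n}\le n^{-c}$ because $k=\Omega(n\log n/m)$. Second, for some $i\in S\cap\mathcal I$ the restricted row $(P^{(1)}-P^{(2)})_{i,S}$ must retain a proportional share of the full row: $\|(P^{(1)}-P^{(2)})_{i,S}\|_2^2 \gtrsim (m/n)\|P_i^{(1)}-P_i^{(2)}\|_2^2 \gtrsim \rho_n\log m$. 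This step is the main obstacle. I would handle it by conditioning on $i\in S$ and applying a Bernstein/hypergeometric bound to the sum $\sum_{j\in S}(P^{(1)}_{ij}-P^{(2)}_{ij})^2$. Each summand is at most $c_2^2\rho_n^2$ and the mean is $\asymp (m/n)\|P_i^{(1)}-P_i^{(2)}\|^2 \gtrsim \rho_n\log m$, so the mean exceeds $\rho_n^2\log m$ and relative concentration succeeds with high probability. A union bound over $i\in\mathcal I$ is affordable because $\mathcal I$ has at most $n$ elements and $m\ge(\log n)^{1+a}$. Given that $\|P_S^{(1)}-P_S^{(2)}\|_{2\to\infty} \ge c\sqrt{\rho_n\log m}$ with a large implied constant, the reverse triangle inequality with the estimation bounds gives
$$T^S_{2\to\infty} \ge \|P^{(1)}_S-P^{(2)}_S\|_{2\to\infty} - C\sqrt{\rho_n\log m} .$$
Because $R^S_{2\to\infty}=\Theta(\sqrt{\rho_n})$ with high probability, choosing the $\Omega$ constant large enough places $T^S_{2\to\infty}$ in $\Bar{\mathcal R}_n$, which gives the second bullet.
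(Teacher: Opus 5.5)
Your overall route is the paper's. It consists of a two-to-infinity bound on $\hat P_S-P_S$ derived from the subgraph ASE bound of Lemma~\ref{lem:subsample}, then the triangle inequality under $H_0$. Under $H_1$ it uses the hitting bound $\binom{n-k}{m}/\binom{n}{m}\le e^{-km/n}$, a Bernstein bound on the restricted row, and the reverse triangle inequality. (The paper checks the subgraph conditions with the matrix-Bernstein Lemma~\ref{lem2} rather than matrix Chernoff on $Z_S^\top Z_S$, which is an immaterial difference.)

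Two steps need repair. First, your displayed inequality does not follow from your expansion. With $Y=\hat X_SW-X_S$ the expansion gives $\|\hat P_S-P_S\|_{2\to\infty}\le\|Y\|_{2\to\infty}\|\hat X_S\|+\|X_S\|_{2\to\infty}\|Y\|$. Using $\|Y\|\le\sqrt{m}\,\|Y\|_{2\to\infty}$, or the $\order_p(1)$ Frobenius bound the paper takes from the proof of Theorem~\ref{thm2}, both terms are $\order_p(\sqrt{\rho_n\log m})=\order_p(\|X_S\|_{2\to\infty}\sqrt{\log m})$. Your display carries an extra factor of order $\sqrt{\rho_n}$, which makes it too strong in sparse regimes, since the row-wise error is typically of order $\sqrt{\rho_n}$. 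The conclusion you then draw, $\order_p(\|\hat X_S\|_{2\to\infty}\sqrt{\log m})$, is nonetheless correct.

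Second, and more substantively, the union bound over $i\in\mathcal I$ is not affordable for the reason you give. The lower-tail failure probability for $\sum_{j\in S}(P^{(1)}_{ij}-P^{(2)}_{ij})^2$ is of order $\exp(-c\,\mu_i/M)$ with $\mu_i/M\asymp\log m/\rho_n$. When $\rho_n$ is bounded below this is only $m^{-\Theta(1)}$. That is sharp per row, e.g.\ when the squared differences take only the values $0$ and $\Theta(\rho_n^2)$, with the nonzero ones on a column fraction of order $\log m/(\rho_n m)$. Multiplying by $|\mathcal I|$, which may be of order $n$, gives a bound that does not vanish for polylogarithmic $m$, and $m=\Omega((\log n)^{1+a})$ does not help.

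You only need the rows that actually land in $S$: $\prob(\exists\, i\in S\cap\mathcal I \text{ that fails})\le\sum_{i\in\mathcal I}\prob(i\in S)\,\prob(\text{$i$ fails}\mid i\in S)\le k\,\tfrac{m}{n}\,m^{-c}\le m^{1-c}$. Here $c>1$ once the constant in the row-separation condition is large. Conditioning on $i\in S$ and using a hypergeometric bound, as you propose, is the right way to evaluate $\prob(\text{$i$ fails}\mid i\in S)$.

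The paper does not handle this point either. It asserts a per-row bound $n^{-c_0}$ via the choice $t\asymp\rho_n^2\sqrt{m\log n}$. That $t$ is dominated by the guaranteed lower bound $\asymp\rho_n\log m$ on the mean only when $\rho_n\sqrt{m\log n}\lesssim\log m$. It then passes directly to the maximum over $S$ without any union bound. So do not borrow its $n^{-c_0}$; with the weighted union bound above, your version of this step is more careful than the paper's.
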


\begin{theorem}\label{thm:puresubF}
    Let $\hat{P}_{S}^{(1)},\hat{P}_{S}^{(2)}$ be the ASE estimate for a chosen subsample $S$ of size $m$ from both the graphs respectively. 
    Define the test statistic $T_F^S= \|\hat{P}_{S}^{(1)}-\hat{P}_{S}^{(2)}\|_F$ and let $R_F^S = \|\hat{X}_{S}^{(1)}\|_F + \|\hat{X}_{S}^{(2)}\|_F$. There exists a universal constant $C_1$ such that if $\Ddot{\mathcal{R}}_n$ is the rejection region of the form $\{T_F^S \geq 2 C_1 R_F^S\}$ then for sufficiently large $n$ the following holds.
    \begin{itemize}
    \item If $P^{(1)} = P^{(2)}$ then $\mathbb{P}(T_F^S  \in \Ddot{\mathcal{R}}_n) = o_{p}(1)$. 
    \item If $\|P^{(1)} - P^{(2)}\|_{F} = \Omega(n^2\sqrt{\rho_n/m^3})$ then 
    $\mathbb{P}(T_F^S \not \in \Ddot{\mathcal{R}}_n) = o_{p}(1)$. 
    \end{itemize}    
\end{theorem}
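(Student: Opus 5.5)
We would follow the same template as Theorems~\ref{testing0}--\ref{testing2}: first obtain a Frobenius-norm concentration bound for the subsample ASE, then handle the null by the triangle inequality and the alternative by a reverse triangle inequality, using the fact that a uniformly random $S$ captures a fixed fraction of the Frobenius mass of $P^{(1)}-P^{(2)}$.

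\textbf{Step 1: a Frobenius bound for the subsample ASE.} The key ingredient is a bound for ASE applied to $A_S$ alone. Conditional on $S$, $A_S$ is a GRDPG adjacency matrix on $m$ vertices with probability matrix $P_S$. By Assumption~\ref{ass2}, $\sigma_1(P_S)\asymp m\rho_n$. We also need $\sigma_d(P_S)\asymp m\rho_n$ with high probability; this follows from Assumption~\ref{ass3}, uniform sampling, and a matrix Chernoff bound on $Z_S^\top Z_S$, which is where Assumption~\ref{ass1} enters. Assumption~\ref{ass4} then gives $\|A_S-P_S\|=\order_p(\sqrt{m\rho_n})$. Standard GRDPG perturbation arguments, as already used in the proof of Theorem~\ref{thm2} for the Step~2 estimate, yield
\[
\|\hat X_S W - X_S\|_F=\order_p(1).
\]
Writing $\hat P_S-P_S=(\hat X_S W-X_S)I_{p,q}(\hat X_SW)^\top + X_S I_{p,q}(\hat X_S W - X_S)^\top$ (using $W I_{p,q}W^\top=I_{p,q}$, justified as in Corollary~\ref{corollary1}), we obtain
\[
\|\hat P_S-P_S\|_F\le C_1\big(\|\hat X_S\|_F+\|X_S\|_F\big)\le C_1'\|\hat X_S\|_F
\]
with probability at least $1-m^{-c}$. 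The last inequality uses $\|X_S\|_F\le 2\|\hat X_S\|_F$, which holds since $\|\hat X_S\|_F\asymp\sqrt{m\rho_n}\gg 1$.

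\textbf{Step 2: the null hypothesis.} Under $P^{(1)}=P^{(2)}$ we have $P_S^{(1)}=P_S^{(2)}$, so by the triangle inequality
\[
T_F^S\le \|\hat P^{(1)}_S-P_S\|_F+\|\hat P^{(2)}_S-P_S\|_F\le C_1R_F^S<2C_1R_F^S
\]
with probability $1-\order(m^{-c})$. Hence the rejection probability is $o(1)$.

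\textbf{Step 3: the alternative.} Reversing the triangle inequality gives $T_F^S\ge \|P_S^{(1)}-P_S^{(2)}\|_F - C_1R_F^S$, and $R_F^S=\order_p(\sqrt{m\rho_n})$. It therefore suffices to show $\|P^{(1)}_S-P^{(2)}_S\|_F=\omega(\sqrt{m\rho_n})$ with high probability. Let $\Delta=P^{(1)}-P^{(2)}$. Under uniform sampling,
\[
\E\|\Delta_S\|_F^2=\frac{m(m-1)}{n(n-1)}\|\Delta\|_F^2\asymp \frac{m^2}{n^2}\|\Delta\|_F^2 .
\]
Under the assumption $\|\Delta\|_F^2=\Omega(n^4\rho_n/m^3)$ this is $\Omega(n^2\rho_n/m)$, which dominates $m\rho_n$ by a factor $n^2/m^2\to\infty$. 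There is a lot of slack here, so the exponent in the hypothesis appears chosen to make the concentration step work.

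\textbf{Main obstacle: concentration of $\|\Delta_S\|_F^2$ around its mean.} $\|\Delta_S\|_F^2$ is a U-statistic of order two in the sampled indices, with kernel bounded by $|\Delta_{ij}|^2\le c_2^2\rho_n^2$. We would first try Hoeffding's inequality for U-statistics sampled without replacement, or a Paley--Zygmund lower-tail bound via the variance. The bounded kernel gives deviations of order $m^{2}\rho_n^2\sqrt{\log m/m}$. These must be smaller than the mean $(m^2/n^2)\|\Delta\|_F^2$, which requires $\|\Delta\|_F^2\gtrsim n^2\rho_n^2\sqrt{\log m/m}$.

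The difficulty is that this need not follow from the stated condition in the sparse regime. A variance-based argument is better: the variance scales with $\sum_i(\sum_j\Delta_{ij}^2)^2$ rather than with the crude kernel bound. The worst case is when the mass of $\Delta$ concentrates on few rows. The factor $n^2/m^2$ of slack noted in Step~3 is what we expect to absorb the loss in this case, using the bound $\sum_i\|\Delta_{i\cdot}\|^4\le n\rho_n^2\|\Delta\|_F^2$.

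With these bounds, $\|\Delta_S\|_F\ge\tfrac12(m/n)\|\Delta\|_F$ with probability $1-o(1)$. Then $T_F^S\ge 3C_1R_F^S$, so the probability of not rejecting is $o(1)$.
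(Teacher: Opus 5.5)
Your proposal is correct and follows the paper's skeleton. Step 1 is essentially the paper's bound $\|\hat P_S^{(i)}-P_S^{(i)}\|_F\le\|\hat X_S^{(i)}W-X_S^{(i)}\|_F(\|\hat X_S^{(i)}W-X_S^{(i)}\|_F+2\|X_S^{(i)}\|_F)$, combined with the $\order_p(1)$ Frobenius bound from the proof of Theorem~\ref{thm2}. Step 2 and the reverse triangle inequality in Step 3 also coincide with the paper.

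Where you genuinely differ is the lower-tail bound for $\|\Delta_S\|_F^2$. The paper switches to independent inclusion indicators $Y_i\sim\mathrm{Bernoulli}(m/n)$. It writes the centered quantity as $Z^\top DZ+\tfrac{2m}{n}Z^\top D\mathbf{1}$, with $Z_i=Y_i-m/n$ and $D_{ij}=\Delta_{ij}^2$. It then applies the Gin\'e et al.\ exponential inequality for quadratic forms (quoted from He et al.) plus Bernstein for the linear part, aiming at failure probability $n^{-c_0}$.

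You use a second-moment argument instead, and it goes through once written out. Set $C_i=\sum_j\Delta_{ij}^2$ and $|\Delta_{ij}|\le c\rho_n$.
\begin{itemize}
\item The linear part has variance at most $4(m/n)^3\sum_iC_i^2\le 4c^2(m/n)^3 n\rho_n^2\|\Delta\|_F^2$, by exactly your row-mass bound.
\item The degenerate part has variance at most $2(m/n)^2\sum_{i\ne j}\Delta_{ij}^4\le 2c^2(m/n)^2\rho_n^2\|\Delta\|_F^2$.
\item Under $\|\Delta\|_F^2\ge Kn^4\rho_n/m^3$, Chebyshev then gives failure probability $\order(m^2\rho_n/n^2)=o(1)$ for $m=o(n)$.
\end{itemize}
Inclusion events of disjoint pairs are negatively correlated under sampling without replacement. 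So the same variance bound holds for the uniform size-$m$ scheme actually used in the theorem.

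Your route is more elementary and works directly with that sampling scheme. It also handles differences concentrated on a few rows via $\sum_iC_i^2\le\max_iC_i\sum_iC_i$, whereas the paper's computation of $\E[W_i^2]$ tacitly treats every $C_i$ as order $n\rho_n^2$. What you give up is a polynomially small failure probability, which the $o(1)$ conclusion does not require. Be sure the tool you invoke is Chebyshev, not Paley--Zygmund, which only yields a constant probability.

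Two small slips:
\begin{itemize}
\item Because of the diagonal terms, $\E\|\Delta_S\|_F^2\ge\frac{m(m-1)}{n(n-1)}\|\Delta\|_F^2$, not equality.
\item The crude U-statistic Hoeffding bound fails in the dense, large-$m$ regime (when $\rho_n m^{5/2}\sqrt{\log m}\gg n^2$), not in the sparse one.
\end{itemize}
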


\begin{remark}
\label{remark:PureSubTest}
\textbf{\texttt{PredSubTest} vs. \texttt{PureSubTest}.}
    Note that \texttt{PureSubTest} requires a sufficiently large number of rows to differ between the two matrices in order to reliably detect a difference, whereas \texttt{PredSubTest} can, in principle, detect alternatives where only a single row differs. Furthermore the magnitudes of the detectable differences are identical, up to logarithmic factor, across the two approaches, 
    This suggests that \texttt{PureSubTest} may be more effective under strong alternatives, while \texttt{PredSubTest} offers greater flexibility and robustness across a broader range of settings.
\end{remark}

\begin{remark}
\label{remark:m}
\textbf{Choice of $m$.}
   Across all tests, increasing the subsample size $m$ expands the class of alternatives for which the procedure is consistent, thereby yielding higher power. Conversely, smaller values of $m$ produce a more computationally efficient but less powerful test that can only detect relatively large differences between the two probability matrices. This highlights an inherent trade-off between the magnitude of the underlying difference and the choice of $m$. Specifically, for given values of $\rho_n$ and the proportion of differing entries $\gamma_n$, the subsample size $m$ must be chosen appropriately to achieve the desired power. When the differences between the matrices are small, a larger subsample size is necessary to guarantee consistency of the tests. 
\end{remark}



\section{Simulation Study}
\label{sec5}
In this section, we empirically evaluate the computational and statistical performance of the proposed estimation and testing methods using simulated data.

\subsection{Estimation under GRDPG}
\label{sec4.1}
We quantify estimator performance as $\frac{\|\hat{P} - P\|_F}{\|P\|_F}$, i.e., the relative error norm difference between the true probability matrix and its estimate. The relative error provides a form of standardization, allowing us to compare results across different parameter settings and replications. There are three global parameters, namely $n$ (number of vertices), $d$ (rank of $P$), and $\rho_n$ (the sparsity parameter). Given these parameters, we constructed $P$ and $A$  under the mixed membership blockmodel of \cite{Airoldi2008}, which is a special case of GRDPG, as follows:
\begin{enumerate}
    \item Generate a $d\times d$ matrix $B$ such that $B_{ij} = 0.5$ for $i\ne j$ and $B_{ii}\thicksim \text{Uniform}(0,1)$.
    \item Generate an $n\times d$ matrix $\Pi$ with rows $\{\pi_i\}_{i=1}^{n}$ drawn independently from 
    $\mathrm{Dirichlet}(0.5, \dots, 0.5)$.
    \item Compute $P = \rho_n \Pi B\Pi^\top$ 
    \item Generate $A \thicksim \text{Bernoulli}(P)$
\end{enumerate}
For our simulation experiments, we used $n \in \{6 \times 10^4,10^5, 1.5 \times 10^5\}$, $d \in \{5,10,20\}$,
and $\rho_n \in \{0.01, 0.04\}$. 
For each combination of $n$, $d$, and $\rho_n$, we obtained a single probability matrix $P$ as described above and generated 1000 independent networks from $P$. 
Finally, the subsample size for \texttt{PredSub} is given by 
$m = \lceil(\log n)^{4+b}\rceil$ for some $b \in (-0.375,0.375)$. 
Note that the choices of $m$ here align with our Assumption~\ref{ass1} and can also be represented as $\lceil(\log n)^{1+a}\rceil$ for some $a>0$.

Under this setup, we compared \texttt{PredSub} with the full-sample benchmark, which we refer to as ASE. The scalable estimator proposed by \cite{chakraborty2025scalable} was not included in this comparison since it was developed specifically under the RDPG framework. A separate simulation study comparing \texttt{PredSub} with their method is presented in Section \ref{sec:estimation_RDPG}.
The runtimes and relative errors are summarized in Tables~\ref{tab1}, \ref{tab2} and \ref{tab3} for $n=6 \times 10^4,10^5, 1.5 \times 10^5,$ respectively.

\begin{table}[ht]
\centering
\small
\setlength{\tabcolsep}{3pt}
\sisetup{table-format=3.2}
\begin{tabular}{
  c
  l
  @{\hspace{1.5em}}S[table-format=1.3]
  S[table-format=3.1]
  S[table-format=6.2]
  S[table-format=1.2]
  S[table-format=1.2]
  S[table-format=1.2]
  S[table-format=3.1]
  S[table-format=6.2]
  S[table-format=1.2]
  S[table-format=1.2]
  S[table-format=1.2]
}
\toprule
& & &
  \multicolumn{5}{c}{$\rho_n = 0.01$} &
  \multicolumn{5}{c}{$\rho_n = 0.04$} \\
\cmidrule(lr){4-8} \cmidrule(lr){9-13}
& & &
  \multicolumn{2}{c}{Time (secs)} &
  \multicolumn{3}{c}{Accuracy} &
  \multicolumn{2}{c}{Time (secs)} &
  \multicolumn{3}{c}{Accuracy} \\
\cmidrule(lr){4-5} \cmidrule(lr){6-8} \cmidrule(lr){9-10} \cmidrule(lr){11-13}
$d$ & Method & {$b$} &
  {Mean} & {$\downarrow$ times} & {Error} & {SD} & {$\uparrow$ times} &
  {Mean} & {$\downarrow$ times} & {Error} & {SD} & {$\uparrow$ times} \\
\midrule
5  & ASE     & {--}   & 35.2 & {--}   & 0.25 & 0.00 & {--}  & 115.1 & {--}   & 0.13 & 0.00 & {--}  \\
   & \texttt{PredSub} & -0.375 &  0.3 & 134.10 & 0.35 & 0.00 & 1.42 &   0.7 & 160.68 & 0.18 & 0.00 & 1.40 \\
   &         & -0.125 &  0.7 &  53.86 & 0.29 & 0.00 & 1.16 &   2.2 &  51.84 & 0.15 & 0.00 & 1.15 \\
   &         &  0.125 &  2.5 &  14.31 & 0.25 & 0.00 & 1.00 &   8.0 &  14.41 & 0.13 & 0.00 & 1.00 \\
\midrule
10 & ASE     & {--}   & 42.3 & {--}   & 0.35 & 0.00 & {--}  & 136.5 & {--}   & 0.18 & 0.00 & {--}  \\
   & \texttt{PredSub} & -0.375 &  0.3 & 137.18 & 0.44 & 0.00 & 1.23 &   0.9 & 153.04 & 0.22 & 0.00 & 1.24 \\
   &         & -0.125 &  0.8 &  51.75 & 0.36 & 0.00 & 1.03 &   2.8 &  48.87 & 0.18 & 0.00 & 1.04 \\
   &         &  0.125 &  3.2 &  13.40 & 0.33 & 0.00 & 0.93 &   9.9 &  13.78 & 0.16 & 0.00 & 0.93 \\
\midrule
20 & ASE     & {--}   & 66.9 & {--}   & 0.51 & 0.00 & {--}  & 205.9 & {--}   & 0.25 & 0.00 & {--}  \\
   & \texttt{PredSub} & -0.375 &  0.5 & 140.94 & 0.57 & 0.00 & 1.14 &   1.3 & 158.70 & 0.29 & 0.00 & 1.14 \\
   &         & -0.125 &  1.3 &  51.63 & 0.49 & 0.00 & 0.97 &   4.1 &  50.21 & 0.24 & 0.00 & 0.97 \\
   &         &  0.125 &  4.7 &  14.17 & 0.46 & 0.00 & 0.90 &  14.5 &  14.18 & 0.23 & 0.00 & 0.90 \\
\bottomrule
\end{tabular}
\caption{Time (in secs) and error comparison between ASE and \texttt{PredSub} with
  $m = \lceil(\log n)^{4+b}\rceil$ for $n=60000$ and various $(d,\rho_n)$
  values under GRDPG setting.}
\label{tab1}
\end{table}

\begin{table}[ht]
\centering
\small
\setlength{\tabcolsep}{3pt}
\sisetup{table-format=3.2}

\begin{tabular}{
  c
  l
  @{\hspace{1.5em}}S[table-format=1.2]
  S[table-format=3.1]
  S[table-format=6.2]
  S[table-format=1.2]
  S[table-format=1.2]
  S[table-format=1.2]
  S[table-format=3.1]
  S[table-format=6.2]
  S[table-format=1.2]
  S[table-format=1.2]
  S[table-format=1.2]
}
\toprule
& & &
  \multicolumn{5}{c}{$\rho_n = 0.01$} &
  \multicolumn{5}{c}{$\rho_n = 0.04$} \\
\cmidrule(lr){4-8} \cmidrule(lr){9-13}
& & &
  \multicolumn{2}{c}{Time (secs)} &
  \multicolumn{3}{c}{Accuracy} &
  \multicolumn{2}{c}{Time (secs)} &
  \multicolumn{3}{c}{Accuracy} \\
\cmidrule(lr){4-5} \cmidrule(lr){6-8} \cmidrule(lr){9-10} \cmidrule(lr){11-13}
$d$ & Method & {$b$} &
  {Mean} & {$\downarrow$ times} & {Error} & {SD} & {$\uparrow$ times} &
  {Mean} & {$\downarrow$ times} & {Error} & {SD} & {$\uparrow$ times} \\
\midrule
5  & ASE     & {--}  & 155.1 & {--}   & 0.19 & 0.00 & {--}  & 434.6 & {--}   & 0.10 & 0.00 & {--}  \\
   & \texttt{PredSub} & -0.25 &   0.7 & 233.73 & 0.28 & 0.00 & 1.44 &   2.3 & 188.58 & 0.14 & 0.00 & 1.44 \\
   &         &  0.00 &   2.2 &  69.30 & 0.22 & 0.00 & 1.16 &   7.2 &  60.32 & 0.12 & 0.00 & 1.17 \\
   &         &  0.25 &   8.0 &  19.30 & 0.19 & 0.00 & 1.01 &  26.2 &  16.58 & 0.10 & 0.00 & 1.03 \\
\midrule
10 & ASE     & {--}  & 184.8 & {--}   & 0.27 & 0.00 & {--}  & 603.7 & {--}   & 0.14 & 0.00 & {--}  \\
   & \texttt{PredSub} & -0.25 &   0.8 & 227.33 & 0.34 & 0.00 & 1.25 &   2.9 & 206.46 & 0.17 & 0.00 & 1.25 \\
   &         &  0.00 &   2.9 &  64.66 & 0.28 & 0.00 & 1.04 &   9.0 &  66.91 & 0.14 & 0.00 & 1.04 \\
   &         &  0.25 &  10.0 &  18.41 & 0.26 & 0.00 & 0.94 &  31.9 &  18.94 & 0.13 & 0.00 & 0.94 \\
\midrule
20 & ASE     & {--}  & 298.0 & {--}   & 0.39 & 0.00 & {--}  & 931.7 & {--}   & 0.19 & 0.00 & {--}  \\
   & \texttt{PredSub} & -0.25 &   1.3 & 237.13 & 0.45 & 0.00 & 1.15 &   4.2 & 220.07 & 0.22 & 0.00 & 1.15 \\
   &         &  0.00 &   4.2 &  70.44 & 0.38 & 0.00 & 0.98 &  13.2 &  70.52 & 0.19 & 0.00 & 0.98 \\
   &         &  0.25 &  15.5 &  19.22 & 0.35 & 0.00 & 0.90 &  47.5 &  19.60 & 0.18 & 0.00 & 0.90 \\
\bottomrule
\end{tabular}

\caption{Time (in secs) and error comparison between ASE and \texttt{PredSub} with
$m = \lceil(\log n)^{4+b}\rceil$ for $n=100000$ and various $(d,\rho_n)$
values under GRDPG setting.}
\label{tab2}
\end{table}

\begin{table}[ht]
\centering
\small
\setlength{\tabcolsep}{3pt}
\sisetup{table-format=3.2}
\begin{tabular}{
  c
  l
  @{\hspace{1.5em}}S[table-format=1.3]
  S[table-format=3.1]
  S[table-format=6.2]
  S[table-format=1.2]
  S[table-format=1.2]
  S[table-format=1.2]
  S[table-format=3.1]
  S[table-format=6.2]
  S[table-format=1.2]
  S[table-format=1.2]
  S[table-format=1.2]
}
\toprule
& & &
  \multicolumn{5}{c}{$\rho_n = 0.005$} &
  \multicolumn{5}{c}{$\rho_n = 0.01$} \\
\cmidrule(lr){4-8} \cmidrule(lr){9-13}
& & &
  \multicolumn{2}{c}{Time (secs)} &
  \multicolumn{3}{c}{Accuracy} &
  \multicolumn{2}{c}{Time (secs)} &
  \multicolumn{3}{c}{Accuracy} \\
\cmidrule(lr){4-5} \cmidrule(lr){6-8} \cmidrule(lr){9-10} \cmidrule(lr){11-13}
$d$ & Method & {$b$} &
  {Mean} & {$\downarrow$ times} & {Error} & {SD} & {$\uparrow$ times} &
  {Mean} & {$\downarrow$ times} & {Error} & {SD} & {$\uparrow$ times} \\
\midrule
5  & ASE     & {--}   & 271.9 & {--}   & 0.22 & 0.00 & {--}  & 498.8 & {--}   & 0.16 & 0.00 & {--}  \\
   & \texttt{PredSub} & -0.125 &   1.0 & 260.56 & 0.32 & 0.00 & 1.42 &   1.9 & 255.94 & 0.22 & 0.00 & 1.42 \\
   &         &  0.125 &   4.0 &  67.31 & 0.26 & 0.00 & 1.15 &   6.3 &  79.27 & 0.18 & 0.00 & 1.15 \\
   &         &  0.375 &  13.9 &  19.51 & 0.22 & 0.00 & 1.00 &  23.7 &  21.08 & 0.16 & 0.00 & 1.00 \\
\midrule
10 & ASE     & {--}   & 311.4 & {--}   & 0.32 & 0.00 & {--}  & 561.7 & {--}   & 0.22 & 0.00 & {--}  \\
   & \texttt{PredSub} & -0.125 &   1.3 & 244.93 & 0.39 & 0.00 & 1.24 &   2.4 & 230.74 & 0.28 & 0.00 & 1.24 \\
   &         &  0.125 &   4.8 &  64.88 & 0.33 & 0.00 & 1.03 &   7.9 &  71.33 & 0.23 & 0.00 & 1.03 \\
   &         &  0.375 &  17.0 &  18.31 & 0.29 & 0.00 & 0.93 &  29.0 &  19.38 & 0.21 & 0.00 & 0.93 \\
\midrule
20 & ASE     & {--}   & 521.9 & {--}   & 0.45 & 0.00 & {--}  & 930.4 & {--}   & 0.32 & 0.00 & {--}  \\
   & \texttt{PredSub} & -0.125 &   2.0 & 259.01 & 0.52 & 0.00 & 1.14 &   3.5 & 262.19 & 0.36 & 0.00 & 1.14 \\
   &         &  0.125 &   7.2 &  72.68 & 0.44 & 0.00 & 0.97 &  11.9 &  77.99 & 0.31 & 0.00 & 0.97 \\
   &         &  0.375 &  27.1 &  19.29 & 0.41 & 0.00 & 0.90 &  45.5 &  20.44 & 0.29 & 0.00 & 0.90 \\
\bottomrule
\end{tabular}
\caption{Time (in secs) and error comparison between ASE and \texttt{PredSub} with
  $m = \lceil(\log n)^{4+b}\rceil$ for $n=150000$ and various $(d,\rho_n)$
  values under GRDPG setting.}
\label{tab3}
\end{table}

From Table~\ref{tab1}, we observe that for $n = 60000$, $d = 5$, and $\rho_n = 0.01$, the full sample ASE method took approximately $35.2$ seconds and produced a relative error of $25\%$. In contrast, when $m = 19774\ (b = 0.125)$, \texttt{PredSub} required only $2.5$ seconds, which was $14.31$ times faster than ASE, and produced a relative error of $25\%$ which was identical to ASE. Even if we consider a much smaller subsample size, i.e., $m = 5962\ (b = -0.375)$, \texttt{PredSub} took only $0.3$ seconds to run, which was $134.1$ times faster than ASE, and produced a relative error of $35\%$, which was not drastically  higher than ASE. 
In general, as $m$ was increased, the runtime for \texttt{PredSub} increased while the relative error decreased, thus providing a range of options to balance the trade-off between computational scalability and statistical accuracy. 

The advantage of \texttt{PredSub} over ASE became more substantial as $d$ was increased. Specifically, for $d=20$, \texttt{PredSub} with $m = 10858\ (b = -0.125)$ was $52$ times faster than ASE while producing similar levels of accuracy.
We observe that ASE was quite sensitive to sparsity, with the relative error decreasing as the network became denser. \texttt{PredSub} remained consistently effective in both sparse and dense regimes, providing substantial computational gains while closely mirroring the statistical accuracy of full-sample ASE for appropriate choices of $m$.
The patterns observed in Table~\ref{tab1} are repeated in Table~\ref{tab2} ($n=100000$) and Table~\ref{tab3} ($n=150000$). Note that, for computational expediency, we set $\rho_n \in \{ 0.005,0.01\}$ when $n = 150000$.

\subsection{Estimation under RDPG}
\label{sec:estimation_RDPG}
Here, we compared \texttt{PredSub} with the 
estimator proposed by \cite{chakraborty2025scalable}, which we refer to as estSS. 
For conciseness, we only considered networks with size $n = 100000$ and $d\in \{5,10\}$. As estSS is specifically designed for the RDPG setting, we modified the generative process slightly to ensure that the $d \times d$ matrix $B$ was positive definite. 
We kept the subsample size $m$ in \texttt{PredSub} as $m = \lceil(\log n)^{4+b}\rceil$ where $b \in (-0.25,0.25)$. For estSS the number of subsamples was fixed at $k=20$ and we varied the overlap size $\in \{13240,28800\}$. These choices ensure that the sizes of the subsamples handled by \texttt{PredSub} and estSS were similar, thereby providing a fair basis for comparison. 

\begin{table}[ht]
\centering
\small
\setlength{\tabcolsep}{3pt}
\sisetup{table-format=3.2}
\begin{tabular}{
  c
  l
  @{\hspace{1.5em}}S[table-format=1.2]
  S[table-format=3.1]
  S[table-format=3.1]
  S[table-format=1.2]
  S[table-format=1.2]
  S[table-format=1.1]
  S[table-format=3.1]
  S[table-format=3.1]
  S[table-format=1.2]
  S[table-format=1.2]
  S[table-format=1.1]
}
\toprule
& & &
  \multicolumn{5}{c}{$\rho_n = 0.01$} &
  \multicolumn{5}{c}{$\rho_n = 0.04$} \\
\cmidrule(lr){4-8} \cmidrule(lr){9-13}
& & &
  \multicolumn{2}{c}{Time (secs)} &
  \multicolumn{3}{c}{Accuracy} &
  \multicolumn{2}{c}{Time (secs)} &
  \multicolumn{3}{c}{Accuracy} \\
\cmidrule(lr){4-5} \cmidrule(lr){6-8} \cmidrule(lr){9-10} \cmidrule(lr){11-13}
$d$ & Method & {$b$} &
  {Mean} & {$\downarrow$ times} & {Error} & {SD} & {$\uparrow$ times} &
  {Mean} & {$\downarrow$ times} & {Error} & {SD} & {$\uparrow$ times} \\
\midrule
5  & ASE                                                                       & {--}   &  6.1 &  {--} & 0.53 & 0.00 & {--} &  19.0 &  {--} & 0.27 & 0.00 & {--} \\
   & \texttt{PredSub}                                                                   & -0.25  &  0.1 &  68.7 & 0.95 & 0.01 & 1.8  &   0.2 & 109.0 & 0.56 & 0.00 & 2.1  \\
   &                                                                           &  0.00  &  0.2 &  31.5 & 0.81 & 0.00 & 1.5  &   0.4 &  49.9 & 0.42 & 0.00 & 1.6  \\
   &                                                                           &  0.25  &  0.5 &  12.1 & 0.69 & 0.00 & 1.3  &   1.3 &  14.6 & 0.34 & 0.00 & 1.3  \\
   & \multirow{2}{*}{\begin{tabular}[c]{@{}c@{}}estSS\\ $(k=20)$\end{tabular}} &  0.00  &  0.7 &   9.0 & 1.51 & 0.05 & 2.9  &   1.0 &  19.5 & 0.64 & 0.01 & 2.4  \\
   &                                                                           &  0.25  &  1.2 &   5.0 & 1.05 & 0.02 & 2.0  &   2.8 &   6.9 & 0.46 & 0.00 & 1.7  \\
\midrule
10 & ASE                                                                       & {--}   & 22.9 &  {--} & 0.78 & 0.00 & {--} &  53.8 &  {--} & 0.35 & 0.00 & {--} \\
   & \texttt{PredSub}                                                                   & -0.25  &  0.2 & 130.7 & 1.09 & 0.00 & 1.4  &   0.3 & 160.5 & 0.66 & 0.00 & 1.9  \\
   &                                                                           &  0.00  &  0.5 &  48.4 & 0.95 & 0.00 & 1.2  &   1.0 &  55.8 & 0.56 & 0.00 & 1.6  \\
   &                                                                           &  0.25  &  1.5 &  15.2 & 0.87 & 0.00 & 1.1  &   4.0 &  13.4 & 0.47 & 0.00 & 1.3  \\
   & \multirow{2}{*}{\begin{tabular}[c]{@{}c@{}}estSS\\ $(k=20)$\end{tabular}} &  0.00  &  1.2 &  19.1 & 2.03 & 0.04 & 2.6  &   2.0 &  26.6 & 0.95 & 0.01 & 2.7  \\
   &                                                                           &  0.25  &  3.2 &   7.1 & 1.47 & 0.02 & 1.9  &   8.1 &   6.6 & 0.68 & 0.00 & 1.9  \\
\bottomrule
\end{tabular}
\caption{Time (in secs) and error comparison between ASE, \texttt{PredSub} with
  $m = \lceil(\log n)^{4+b}\rceil$ and estSS with $s \approx m$
  with $k = 20$ for $n=100000$ and various $(d,\rho_n)$ values under RDPG setting.}
\label{tab4}
\end{table}

From Table~\ref{tab4}, we observe that for $n = 100000$, $d = 5$, and $\rho_n = 0.01$, the full sample ASE method took approximately $6.1$ seconds and produced a relative error of $53\%$. Once again, as $m$ was increased, the runtime for \texttt{PredSub} increased while the relative error decreased. For instance, when $m = 32363\ (b = 0.25)$, \texttt{PredSub} required only $0.5$ seconds, which was $12$ times faster than ASE, and produced a relative error of $69\%$, approximately $1.3$ times that of ASE. For comparison, the estSS method with a similar subsample size ($s = 32360$) took $1.2$ seconds, which was $5$ times faster than ASE, and produced a relative error of $105\%$, over $2$ times larger than the ASE error. These results demonstrate that, compared to  estSS, \texttt{PredSub} provides more substantial computational savings while producing statistical accuracy that is much closer to the full sample ASE.

\citet{chakraborty2025scalable} noted that the estimation accuracy of estSS can be improved by reducing  the number of subsamples, thereby increasing the subsample size. 
However, as the computational cost grows with the subsample size, this adjustment will invariably undermine the intended scalabity of estSS. 

\subsection{Two-sample testing}
For the two-sample testing experiments, we generated $P^{(1)}$ following the procedure described in Section~\ref{sec4.1}, namely $P^{(1)} = \Pi B \Pi^{\top}$, and set $P^{(2)} = \Pi (B + \epsilon J) \Pi^{\top}$, where $J$ is the $d \times d$ matrix of all ones.
Note that $\epsilon = 0$ corresponds to the null hypothesis and $\epsilon > 0$ corresponds to the alternative hypothesis, with larger values of $\epsilon$ indicating stronger alternatives.
We then sampled adjacency matrices $A^{(1)} \sim \text{Bernoulli}(P^{(1)})$, $A^{(2)} \sim \text{Bernoulli}(P^{(2)})$, and 
tested 
$
H_0 : P^{(1)} = P^{(2)} \quad \text{versus} \quad H_1 : P^{(1)} \ne P^{(2)}$.
To evaluate the performance of the testing procedure, we repeated the above steps $r$ times, where for each replication, we generated independent network pairs $(A^{(1)}, A^{(2)})$, and applied the test to obtain an acceptance or rejection decision. 

We compared \texttt{PredSubTest} and \texttt{PureSubTest}
against the method from \cite{bhadra2025bootstrap}, referred to as ASE.
We did not evaluate the omnibus embedding method of \cite{levin2017central} as it is much slower than ASE-based methods 
(see Figure~3 of \cite{bhadra2025bootstrap} for details) and is thus too computationally expensive for the large networks used in our simulation study.
We also did not include the scalable two-sample testing method of \cite{chakraborty2025scalable}  as it is only applicable to the RDPG setting. 

We used $n \in \{20000,30000,50000\}$, $d = 10$, and $\rho_n = 0.01$.
For each method and each combination of $n, d$ and $\rho$, we computed the rejection rate from $r$ replicates, which was interpreted as the empirical level under the null hypothesis ($\epsilon = 0$) and the empirical power under the alternative hypothesis ($\epsilon > 0$). We used $B = 100$ bootstrap resamples and $r = 200$ Monte Carlo replications for all methods to obtain reliable estimates of both level and power. We also report the runtime associated with each method. The results are summarized in Table~\ref{tab:testing}.

\begin{table}[ht]
\centering
\small

\sisetup{table-format=3.2}
\begin{tabular}{
  l
  @{\hspace{1.5em}}S[table-format=1.3]
  S[table-format=1.2]
  S[table-format=1.2]
  S[table-format=1.2]
  S[table-format=1.2]
  S[table-format=1.2]
  S[table-format=3.2]
  S[table-format=3.1]
}
\toprule
& &
  \multicolumn{7}{c}{$n = 20000$} \\
\cmidrule(lr){3-9}
& &
  \multicolumn{1}{c}{Level} &
  \multicolumn{4}{c}{Power} &
  \multicolumn{2}{c}{Time (mins)} \\
\cmidrule(lr){3-3} \cmidrule(lr){4-7} \cmidrule(lr){8-9}
Method & {$b$} &
  {$\epsilon = 0$} & {$\epsilon = 0.02$} & {$\epsilon = 0.04$} &
  {$\epsilon = 0.08$} & {$\epsilon = 0.16$} &
  {Mean} & {$\downarrow$ times} \\
\midrule
ASE     & {--}   & 0.00 & 0.00 & 0.00 & 1.00 & 1.00 &  29.15 & {--} \\
\texttt{PredSubTest} & -0.250 & 0.06 & 0.29 & 0.96 & 1.00 & 1.00 &   4.81 &  6.1 \\
        & -0.125 & 0.00 & 0.03 & 0.84 & 1.00 & 1.00 &   6.73 &  4.3 \\
        &  0.000 & 0.00 & 0.00 & 0.48 & 1.00 & 1.00 &   9.92 &  2.9 \\
\texttt{PureSubTest}& -0.125 & 0.00 & 0.00 & 0.00 & 0.00 & 0.00 &   3.13 &  9.3 \\
        &  0.000 & 0.00 & 0.00 & 0.00 & 0.00 & 1.00 &   5.93 &  4.9 \\
        &  0.125 & 0.00 & 0.00 & 0.00 & 0.00 & 1.00 &  11.44 &  2.5 \\
\midrule
\multicolumn{2}{l}{} & \multicolumn{7}{c}{$n = 30000$} \\
\midrule
ASE     & {--}   & 0.00 & 0.00 & 1.00 & 1.00 & 1.00 &  85.76 & {--} \\
\texttt{PredSubTest} & -0.125 & 0.08 & 0.70 & 1.00 & 1.00 & 1.00 &  12.96 &  6.6 \\
        &  0.000 & 0.03 & 0.35 & 1.00 & 1.00 & 1.00 &  19.33 &  4.4 \\
        &  0.125 & 0.00 & 0.04 & 1.00 & 1.00 & 1.00 &  28.58 &  3.0 \\
\texttt{PureSubTest}&  0.000 & 0.00 & 0.00 & 0.00 & 0.00 & 1.00 &   9.80 &  8.7 \\
        &  0.125 & 0.00 & 0.00 & 0.00 & 0.96 & 1.00 &  18.21 &  4.7 \\
        &  0.250 & 0.00 & 0.00 & 0.00 & 1.00 & 1.00 &  33.82 &  2.5 \\
\midrule
\multicolumn{2}{l}{} & \multicolumn{7}{c}{$n = 50000$} \\
\midrule
ASE     & {--}   & 0.00 & 0.00 & 1.00 & 1.00 & 1.00 & 257.80 & {--} \\
\texttt{PredSubTest} &  0.000 & 0.10 & 0.99 & 1.00 & 1.00 & 1.00 &  40.53 &  6.4 \\
        &  0.125 & 0.03 & 0.99 & 1.00 & 1.00 & 1.00 &  55.99 &  4.6 \\
        &  0.250 & 0.00 & 0.97 & 1.00 & 1.00 & 1.00 &  82.21 &  3.1 \\
\texttt{PureSubTest}&  0.125 & 0.00 & 0.00 & 0.00 & 1.00 & 1.00 &  28.47 &  9.1 \\
        &  0.250 & 0.00 & 0.00 & 0.00 & 1.00 & 1.00 &  52.84 &  4.9 \\
        &  0.375 & 0.00 & 0.00 & 1.00 & 1.00 & 1.00 & 105.63 &  2.4 \\
\bottomrule
\end{tabular}
\caption{Level and power analysis for hypothesis testing using ASE, and \texttt{PredSubTest} and \texttt{PureSubTest}
  with $m = \lceil(\log n)^{4+b}\rceil$ and various $(d,n)$ values under GRDPG setting.}
\label{tab:testing}
\end{table}

From Table~\ref{tab:testing}, it is evident that ASE attained near‐perfect power as soon as the perturbation parameter $\epsilon$ departed from zero. However, the computational cost of ASE grew quickly with $n$, requiring approximately $29$ minutes for $n=20000$, $85$ minutes for $n=30000$, and exceeding $250$ minutes for $n=50000$, which renders it impractical for large‐scale applications. In comparison, \texttt{PredSubTest} achieved statistical performance close to ASE in terms of both size and power. Specifically, the empirical type I error rates were close to the nominal significance level, and the test attained full power for relatively small perturbations (e.g., $\epsilon \geq 0.08$ for $n = 20000$ or $\epsilon \geq 0.04$ for larger networks). \texttt{PredSubTest} was $3-7$ faster than ASE and the computational gains became increasingly substantial as $n$ grew. 

\texttt{PureSubTest}, on the other hand, was considerably more conservative under the null and required moderately larger perturbations to achieve full power, compared to \texttt{PredSubTest}. 
Although \texttt{PureSubTest} was generally faster than \texttt{PredSubTest} for the same $n$ and $m$, its lower accuracy demonstrates the importance of the out‐of‐sample estimation step in \texttt{PredSubTest}. In summary, given sufficient computational budget, \texttt{PredSubTest} appears to be the more effective method for balancing statistical reliability and computational scalability.

\section{Real Data Examples}
\label{sec6}
We applied
our testing procedures to two large real-world datasets.


\begin{enumerate}
    \item Coauth-DBLP: This dataset is available at \url{https://www.cs.cornell.edu/~arb/data/coauth-DBLP/} and consists of publications recorded in the DBLP database, where nodes represent authors and edges imply co-authorship \citep{Benson-2018-simplicial}.
We constructed binary, undirected networks for two distinct intervals: 2011–2014 and 2015–2018. Retaining only authors with at least five coauthors in each period yields networks with $n=104{,}488$ nodes and $1{,}343{,}928$ and $1{,}341{,}888$ edges, respectively. We tested whether the underlying probability matrix remains the same over time, i.e., $H_0: P_{2011-14} = P_{2015-18}$ vs. $H_1: P_{2011-14} \ne P_{2015-18}$, where $P_I$ denotes the probability matrix for period $I$.

\item Cannes 2013: This Twitter dataset \citep{omodei2015characterizing}, available at 
\url{https://manliodedomenico.com/data.php}, records retweet and mention interactions among $438{,}537$ users during the 2013 Cannes Film Festival. 
After converting to undirected binary graphs and removing isolated nodes, the network had $n=119{,}708$ nodes and $484{,}874$ edges (retweet) and $479{,}148$ edges (mention). We tested whether the underlying probabilities for retweets and mentions are identical, i.e., $H_0: P_{\text{retweet}} = P_{\text{mention}}$ vs. $H_1: P_{\text{retweet}} \ne P_{\text{mention}}$.
\end{enumerate}

For both datasets, we set $m = \lceil(\log n)^{4+b}\rceil$, where $b \in \{-0.25, 0, 0.25\}$ for \texttt{PredSubTest} and $b \in \{-0.125, 0.125, 0.375\}$ for \texttt{PureSubTest}.
We only used the RDPG setting (i.e., $q=0$) to provide a fair comparison with estSS. 

Tables~\ref{tab:real_coauth} and \ref{tab:real_cannes} show that in all settings, \texttt{PredSubTest} consistently rejected $H_0$ with a $p$-value of zero, the same as the full-sample ASE method, while being approximately $3$-$21$ times faster. \texttt{PureSubTest} produced the same inference for larger values of $b$, with even greater speedups of $8$-$215$ times. Thus, for both datasets, our methods offer substantial computational advantages over the full sample ASE approach while providing similar statistical output.
Notably, these gains are consistent across all choices of $d$, suggesting that an appropriate value of $d$ can be estimated separately prior to testing without affecting the conclusions.
The estSS method of \cite{chakraborty2025scalable} also rejects $H_0$ at comparable subsample sizes but incurs runtime similar to full-sample ASE. 

\begin{table}[ht]
\centering
\fontsize{9}{10.5}\selectfont
\setlength{\tabcolsep}{3pt}
\sisetup{table-format=3.2}
\begin{tabular}{
  l
  @{\hspace{1.5em}}S[table-format=1.3]
  S[table-format=1.3]
  S[table-format=4.2]
  S[table-format=3.1]
  S[table-format=1.3]
  S[table-format=4.2]
  S[table-format=3.1]
  S[table-format=1.3]
  S[table-format=4.2]
  S[table-format=3.1]
}
\toprule
& &
  \multicolumn{3}{c}{$d = 5$} &
  \multicolumn{3}{c}{$d = 10$} &
  \multicolumn{3}{c}{$d = 20$} \\
\cmidrule(lr){3-5} \cmidrule(lr){6-8} \cmidrule(lr){9-11}
Method & {$b$} &
  {$p$-value} & {Time (mins)} & {$\downarrow$} &
  {$p$-value} & {Time (mins)} & {$\downarrow$} &
  {$p$-value} & {Time (mins)} & {$\downarrow$} \\
\midrule
ASE     & {--}   & 0     & 1360.27 & {--}  & 0     & 1543.48 & {--}  & 0     & 1792.39 & {--}  \\
\texttt{PredSubTest} & -0.25  & 0     &   63.54 & 21.4  & 0     &   95.53 & 16.2  & 0     &  172.63 & 10.4  \\
        &  0.00  & 0     &  111.84 & 12.2  & 0     &  164.15 &  9.4  & 0     &  298.48 &  6.0  \\
        &  0.25  & 0     &  214.58 &  6.3  & 0     &  283.73 &  5.4  & 0     &  502.44 &  3.6  \\
\texttt{PureSubTest}& -0.125 & 0     &    7.54 & 180.4 & 0     &   10.01 & 154.1 & 0     &   15.84 & 113.2 \\
        &  0.125 & 0     &   32.89 &  41.4 & 0     &   40.67 &  37.9 & 0     &   57.30 &  31.3 \\
        &  0.375 & 0     &  149.60 &   9.1 & 0     &  168.76 &   9.1 & 0     &  218.92 &   8.2 \\
estSS   &  0.00  & 0     & 1398.98 &   1.0 & 0     & 1508.66 &   1.0 & 0.505 & 1794.72 &   1.0 \\
        &  0.25  & 0     & 1407.00 &   1.0 & 0     & 1493.93 &   1.0 & 0     & 1690.03 &   1.1 \\
\bottomrule
\end{tabular}
\caption{Hypothesis testing on the Coauth-DBLP dataset using ASE, \texttt{PredSubTest} and \texttt{PureSubTest}with
  $m = \lceil(\log n)^{4+b}\rceil$ and estSS with $s \approx m$
  with $k = 20$ and various $d$ values under RDPG setting.}
\label{tab:real_coauth}
\end{table}

\begin{table}[ht]
\centering
\fontsize{9}{10.5}\selectfont
\setlength{\tabcolsep}{3pt}
\sisetup{table-format=3.2}
\begin{tabular}{
  l
  @{\hspace{1.5em}}S[table-format=1.3]
  S[table-format=1.3]
  S[table-format=4.2]
  S[table-format=3.1]
  S[table-format=1.3]
  S[table-format=4.2]
  S[table-format=3.1]
  S[table-format=1.3]
  S[table-format=4.2]
  S[table-format=3.1]
}
\toprule
& &
  \multicolumn{3}{c}{$d = 5$} &
  \multicolumn{3}{c}{$d = 10$} &
  \multicolumn{3}{c}{$d = 20$} \\
\cmidrule(lr){3-5} \cmidrule(lr){6-8} \cmidrule(lr){9-11}
Method & {$b$} &
  {$p$-value} & {Time (mins)} & {$\downarrow$} &
  {$p$-value} & {Time (mins)} & {$\downarrow$} &
  {$p$-value} & {Time (mins)} & {$\downarrow$} \\
\midrule
ASE     & {--}   & 0     & 1686.39 & {--}  & 0     & 1924.71 & {--}  & 0     & 2273.25 & {--}  \\
\texttt{PredSubTest} & -0.25  & 0     &   77.14 & 21.9  & 0     &  115.82 & 16.6  & 0     &  201.02 & 11.3  \\
        &  0.00  & 0     &  135.48 & 12.4  & 0     &  201.91 &  9.5  & 0     &  354.78 &  6.4  \\
        &  0.25  & 0     &  252.64 &  6.7  & 0     &  347.48 &  5.5  & 0     &  597.62 &  3.8  \\
\texttt{PureSubTest}& -0.125 & 0     &    7.84 & 215.2 & 0     &   10.55 & 182.4 & 0     &   16.31 & 139.4 \\
        &  0.125 & 0     &   36.63 &  46.0 & 0     &   45.57 &  42.2 & 0     &   62.89 &  36.1 \\
        &  0.375 & 0     &  226.85 &   7.4 & 0     &  264.05 &   7.3 & 0     &  265.16 &   8.6 \\
estSS   &  0.00  & 0     & 1765.17 &   1.0 & 0     & 1930.51 &   1.0 & 0     & 2224.28 &   1.0 \\
        &  0.25  & 0     & 1746.38 &   1.0 & 0     & 1873.81 &   1.0 & 0     & 2059.37 &   1.1 \\
\bottomrule
\end{tabular}
\caption{Hypothesis testing on the Cannes 2013 dataset using ASE, \texttt{PredSubTest} and \texttt{PureSubTest}with
  $m = \lceil(\log n)^{4+b}\rceil$ and estSS with $s \approx m$
  with $k = 20$ and various $d$ values under RDPG setting.}
\label{tab:real_cannes}
\end{table}

\section{Discussion}\label{sec7}

In this article, we proposed subsampling-based methods for estimation and two-sample hypothesis testing in networks. The methodology was developed under the GRDPG framework, with theoretical guarantees of consistency and numerical experiments showing substantial runtime improvements. Even when efficient $\order(n^2)$ spectral decomposition algorithms are used for the full-sample methods, our approach yields significant time savings, and its compatibility with parallel computing further enhances scalability. For this work, we only considered uniform random sampling to choose the subsample, but one can always use other sampling techniques as required. We also conducted some simulations using sparsity-based sampling, but only obtained similar accuracy with higher computation time and hence did not report these results here. 

Our primary focus was testing network equivalence under the null hypothesis $H_0: P^{(1)} = P^{(2)}$, but the framework can be extended to more general settings, such as scaled hypotheses $H_0: P^{(1)} = cP^{(2)}$ for $c>0$, or comparisons of network statistics (e.g., degree distributions or triangle counts). 
More broadly, predictive subsampling can be adapted to other inferential tasks within the GRDPG as well as other more general network models, though such extensions will require additional theoretical and computational advances. These directions offer promising research problems for developing statistically rigorous and computationally scalable methods for statistical inference with massive networks.







\bibliographystyle{plainnat}
\bibliography{sample}
\clearpage
\appendix
\section*{Appendix}
\label{appA}
\setcounter{theorem}{0}
\numberwithin{equation}{section}

The Appendix contains the technical proofs of the theoretical results. Codes can be found at the following {GitHub repository}: \url{https://github.com/ArpanK72/PredSub/tree/main}.

We begin by stating a collection of lemmas and known results, along with proofs as appropriate, that will be used throughout the subsequent analysis.

 \begin{result}[Weyl's Inequality, \cite{weyl1912asymptotic}]\label{weyl}
     Let $A$ and $B$ be symmetric $n \times n$ matrices. Then
     $$\max_i |\lambda_i(A) - \lambda_i(B)| \leq \|A-B\|.$$
 \end{result}
 
\begin{result}[Singular Value Perturbation, \cite{horn2012matrix}] \label{singpert}
    Let $A$ and $B$ be $n \times m$ matrices and $q=\min\{m,n\}$. Let
$\sigma_1(A)\ge \cdots \ge \sigma_q(A)$ and
$\sigma_1(B)\ge \cdots \ge \sigma_q(B)$ be the
singular values of $A$ and $B$, respectively. Then
$$
\max_{i} \left|\sigma_i(A)-\sigma_i(B)\right| \le \|A-B\|.
$$
\end{result}

\begin{result}[Matrix Bernstein Inequality, \cite{tropp2012user}]
    Consider a finite sequence $\{X_k\}$ of independent, random, self-adjoint matrices with dimension $d$. Assume that each random matrix satisfies
$\mathbb{E}[X_k] = 0$ and $ \lambda_{\max}(X_k) \leq R $ almost surely. Then, for all $t \geq 0$,
\begin{align*}
\prob \left( \lambda_{\max} \left( \sum_k X_k \right) \geq t \right) \leq d \cdot \exp \left( -\frac{t^2}{2 \sigma^2 + \frac{2R t}{3}} \right),   \quad \text{where  } \sigma^2 := \left\|\sum_k\E[X_k^2]\right\|. 
\end{align*} \label{matbennstein}
\end{result}

\begin{lemma}\label{lem1}
    Let $P$ be a symmetric matrix satisfying Assumption~2-3. Then the eigen vectors of $P$ are bounded coherent i.e., there exists a constant $c > 0$ not depending on $n$ such that
    $\|U_P\|_{2\to\infty} \leq \dfrac{c}{\sqrt{n}}$
    where $P = U_PD_PU_P^\top$ and $D_P$ is the diagonal matrix with the eigenvalues of $P$ on its diagonal. Consequently $\|U_P |D_P|^{1/2}\|_{2\to\infty} \leq K \rho_n^{1/2}$
    for some constant $K > 0$ not depending on $n$.
\end{lemma}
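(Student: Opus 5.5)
We bound the rows of $U_P$ directly through the eigen-equation $U_P = P U_P D_P^{-1}$, which holds because $P = U_P D_P U_P^\top$ and $U_P^\top U_P = I_d$. Fix a row $i$. Then $(U_P)_{i\cdot} = P_{i\cdot} U_P D_P^{-1}$, so
\[
\|(U_P)_{i\cdot}\| \le \|P_{i\cdot}\|\,\|U_P\|\,\|D_P^{-1}\| = \frac{\|P_{i\cdot}\|}{|\lambda_d(P)|},
\]
where $\lambda_d(P)$ denotes the smallest-magnitude nonzero eigenvalue. The proof therefore reduces to an upper bound on $\|P_{i\cdot}\|$ and a lower bound on $|\lambda_d(P)|$.

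By Assumption~\ref{ass2}, every entry satisfies $P_{ij}\le c_2\rho_n$, so $\|P_{i\cdot}\| \le c_2\rho_n\sqrt{n}$. For the eigenvalue we combine Assumption~\ref{ass3} with a lower bound on the top eigenvalue. Since all entries are positive, Perron--Frobenius applies: take $\mathbf{1}/\sqrt{n}$ as a test vector to get
\[
\sigma_1(P) \ge \frac{\mathbf{1}^\top P \mathbf{1}}{n} \ge c_1 n\rho_n .
\]
Assumption~\ref{ass3} then gives $|\lambda_d(P)| = \sigma_d(P) \ge \sigma_1(P)/c_0 \ge c_1 n\rho_n/c_0$. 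Combining the two bounds yields
\[
\|(U_P)_{i\cdot}\| \le \frac{c_2\rho_n\sqrt{n}}{c_1 n\rho_n/c_0} = \frac{c_0c_2}{c_1}\cdot\frac{1}{\sqrt{n}}
\]
uniformly in $i$. This proves the coherence claim with $c = c_0c_2/c_1$.

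For the consequence, write $\|U_P|D_P|^{1/2}\|_{2\to\infty} \le \|U_P\|_{2\to\infty}\,\sigma_1(P)^{1/2}$ and use the upper bound $\sigma_1(P)\le \|P\|_F \le c_2 n\rho_n$. This gives $\|U_P|D_P|^{1/2}\|_{2\to\infty} \le (c/\sqrt n)\sqrt{c_2 n\rho_n} = K\rho_n^{1/2}$.

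The only step that is not a direct norm inequality is the lower bound on $|\lambda_d(P)|$. It requires the Rayleigh-quotient/Perron argument (valid because the entries of $P$ are nonnegative) to bound $\sigma_1$ from below, together with the bounded condition number. Without Assumption~\ref{ass3}, $\kappa(P)$ would appear in the constant.
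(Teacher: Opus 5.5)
Your proof is correct and follows essentially the same route as the paper: a row bound from the eigen-equation $PU_P = U_P D_P$, the estimate $\|P_{i\cdot}\|\le c_2\rho_n\sqrt{n}$, and $\sigma_d(P)\ge \sigma_1(P)/c_0 \ge c_1 n\rho_n/c_0$, giving the same constant $c=c_0c_2/c_1$. The only differences are cosmetic: you invert $D_P$ directly instead of comparing $\|(PU_P)_i\|$ with $\|(U_PD_P)_i\|$, and you bound $\sigma_1(P)$ below by a Rayleigh quotient and above by $\|P\|_F$, whereas the paper cites Perron--Frobenius for both bounds.
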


\begin{proof}
Recall that the entries of $P$ are bounded between $c_1 \rho_n$ and $c_2 \rho_n$. Hence, by the Perron-Frobenius theorem, $nc_1 \rho_n \leq \sigma_1(P) \leq nc_2 \rho_n$. Assumption~3 then implies
    \begin{equation}
    \label{eq:lowerbd_sigma_d}
       \sigma_d(P)  \geq \dfrac{\sigma_1(P)}{c_0} 
        \geq \dfrac{nc_1 \rho_n}{c_0}.
    \end{equation}
    Next we have $P U_P = U_P D_P$ so that, for any $i \in [n]$,
    \begin{gather}
    \label{eq:upperbound_PU_P}
        \|(P U_P)_i\|^2 = \|P_i U_P\|^2 \leq \|P_i\|^2 \le nc_2^2 \rho_n^2, \\
    \label{eq:lowerbound_PU_P}
        \|(P U_P)_i\|_2^2 = \|(U_P D_P)_i\|^2 \ge \sigma_d^2(P) \|(U_{P})_{i}\|^2.
    \end{gather}
    where $(U_P)_{i}$ denote the $i$th row of $U_P$.
    Combining Eq.~\eqref{eq:lowerbd_sigma_d} through Eq.~\eqref{eq:lowerbound_PU_P} we obtain
    \begin{align*}
       \|(U_{P})_i\|  \leq \dfrac{c_0 c_2}{\sqrt{n}c_1}.
        \end{align*}
        As $i$ is arbitrary, we have $\|U_P\|_{2\to\infty} \leq \dfrac{c}{\sqrt{n}}$ where $c = c_0c_2/c_1$. Finally we have
    \begin{align*}
        \|U_{P}|D_P|^{1/2}\|_{2\to\infty} \le \|U_P\|_{2 \to \infty} \times (\sigma_1(P))^{1/2}
        \leq K \rho_n^{1/2} 
    \end{align*}
    where $K = c \sqrt{c}_2 = c_0 c_2^{3/2}/c_1$. 
\end{proof}

We next state a lemma relating the singular values of $P$ to that of its subsampled $P_S$. The lemma is a refinement of Theorem~1 in \cite{bhattacharjee2024sublinear} when 
$P$ is low-rank.

\begin{lemma} \label{lem2}
    Let $P$ be a symmetric matrix that satisfies Assumptions~2-3. 
    Let $S \subseteq [n]$ of $m$ vertices be formed by including each index independently with probability $m/n$ and let $P_S$ be the corresponding principal submatrix of $P$. 
    Denote by $\sigma_i(P)$ and $\sigma_i(P_S)$ the $i$th largest singular value of $P$ and $P_S$, respectively. Define 
    $$\Tilde{\sigma}_i(P) = \begin{cases}
        \frac{n}{m} \sigma_i(P_S) & \text{if  } i\in [d]\\
        0 & \text{otherwise}
    \end{cases} $$
    If $m \geq \dfrac{4(c+1)\log n}{\epsilon^2}$ for some constant $c>0$, then we have
    $$|\sigma_i(P) - \Tilde{\sigma}_i(P)| \leq \epsilon n \rho_n$$
    with probability at least $1 - n^{-c}$.
\end{lemma}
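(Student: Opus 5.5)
The idea is to exploit the low-rank structure and reduce the singular values of $P_S$ to those of a small $d\times d$ matrix. Write $P = U_P D_P U_P^\top$ with $U_P\in\R^{n\times d}$. Then $P_S = (U_P)_S D_P (U_P)_S^\top$, where $(U_P)_S$ collects the rows of $U_P$ indexed by $S$. The nonzero singular values of $P_S$ coincide with those of the $d\times d$ matrix
$$M_S := \bigl((U_P)_S^\top (U_P)_S\bigr)^{1/2} D_P \bigl((U_P)_S^\top (U_P)_S\bigr)^{1/2}.$$
Both $P_S$ and $M_S$ are symmetric. Writing $P_S = F_S D_P F_S^\top$ with $F_S=(U_P)_S$, the matrices $F_S D_P F_S^\top$ and $G^{1/2}D_PG^{1/2}$ (with $G=F_S^\top F_S$) share their nonzero eigenvalues, and hence their nonzero singular values. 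Similarly, the singular values of $P$ are exactly those of $D_P$, since $U_P^\top U_P = I_d$. It therefore suffices to compare $\frac{n}{m} M_S$ with $D_P$ in spectral norm, and Result~\ref{singpert} (or Weyl, Result~\ref{weyl}) then gives $|\sigma_i(P)-\tilde\sigma_i(P)| \le \|\tfrac nm M_S - D_P\|$ for $i\in[d]$. For $i>d$ both sides vanish, since $\operatorname{rank}(P_S)\le d$.

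The key concentration step is to show that $G_S := \frac{n}{m}(U_P)_S^\top(U_P)_S$ is close to $I_d$. Let $\xi_j\sim\mathrm{Bernoulli}(m/n)$ be the independent inclusion indicators and $u_j$ the rows of $U_P$. Then
$$G_S - I_d = \sum_{j=1}^n \Bigl(\tfrac{n}{m}\xi_j - 1\Bigr) u_j u_j^\top,$$
which is a sum of independent, mean-zero, self-adjoint $d\times d$ matrices. By Lemma~\ref{lem1}, $\|u_j\|^2\le c^2/n$, so each summand has norm at most $R = \frac nm\cdot c^2/n = c^2/m$. The variance proxy is
$$\Bigl\|\sum_j \E[(\tfrac nm\xi_j-1)^2]\,\|u_j\|^2 u_ju_j^\top\Bigr\| \le \tfrac{n}{m}\cdot\tfrac{c^2}{n}\,\Bigl\|\sum_j u_ju_j^\top\Bigr\| = c^2/m.$$
Applying the matrix Bernstein inequality (Result~\ref{matbennstein}) to both $\pm$ sums gives
$$\|G_S - I_d\|\le \delta \quad\text{with probability at least } 1-2d\exp\bigl(-m\delta^2/(c^2(2+2\delta/3))\bigr).$$
Taking $m \ge 4(c'+1)\log n/\epsilon^2$ and $\delta$ proportional to $\epsilon$ makes the failure probability at most $n^{-c'}$, after absorbing $2d$ and the constants into the choice of $\delta$ relative to $\epsilon$.

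To conclude, note that
$$\tfrac nm M_S - D_P = G_S^{1/2}D_PG_S^{1/2} - D_P.$$
This matrix is bounded by $\|G_S^{1/2}-I\|\,\|D_P\|\,(\|G_S^{1/2}\|+1) \le 3\delta\|D_P\|$, using $\|G_S^{1/2}-I\|\le\|G_S-I\|$ for positive semidefinite $G_S$ with $\delta\le1$. By the Perron–Frobenius bound from the proof of Lemma~\ref{lem1}, $\|D_P\| = \sigma_1(P)\le c_2 n\rho_n$. This yields $|\sigma_i(P)-\tilde\sigma_i(P)|\le 3c_2\delta\, n\rho_n$, and choosing $\delta = \epsilon/(3c_2)$ gives the claimed bound $\epsilon n\rho_n$.

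The main obstacle is the bookkeeping of constants. Matching exactly the stated threshold $m\ge 4(c+1)\log n/\epsilon^2$ requires that the coherence constant from Lemma~\ref{lem1} and the factor $c_2$ fit into the leading constant. If they do not, I would either rescale $\epsilon$ (allowing constants depending on $c_0,c_1,c_2$, which are fixed) or use a scalar Hoeffding-type argument on the entries of $G_S$ instead. A secondary subtlety is that $|S|$ is random, equal to $m$ only in expectation, under the independent-inclusion model stated in the lemma. Since the lemma uses the normalization $n/m$ with the nominal $m$, this causes no issue.
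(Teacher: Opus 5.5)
Your proof is correct, and its skeleton is the paper's: reduce $\frac nm P_S$ to a $d\times d$ matrix built from the sampled rows of $U_P$, apply matrix Bernstein to the independent rank-one summands indexed by the inclusion indicators, and finish with singular-value perturbation. Two differences are worth recording.

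First, the normalization differs. The paper concentrates the weighted matrix $F=|D_P|^{1/2}U_P^\top RR^\top U_P|D_P|^{1/2}-|D_P|$ directly, with summands of norm $nK^2\rho_n/m$. You concentrate the normalized Gram matrix $G_S-I_d$, with summands of norm $c^2/m$, and transfer through $\frac nm M_S=G_S^{1/2}D_PG_S^{1/2}$. Up to constants this is the same.

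Second, and more substantively, the reduction matrix differs. The paper compares the singular values of $\frac nm P_S$ with those of the non-symmetric matrix $|D_P|^{1/2}U_P^\top RR^\top U_P|D_P|^{1/2}I_{p,q}$. These two matrices share their nonzero eigenvalues. When $q>0$, however, the latter is generally not normal, so its singular values need not equal the absolute values of its eigenvalues. For instance, $\begin{pmatrix}1&a\\a&1\end{pmatrix}I_{1,1}$ has eigenvalues $\pm\sqrt{1-a^2}$ but singular values $1\pm|a|$. Your symmetric sandwich, obtained from two applications of the $AB$/$BA$ spectrum identity, is exactly what makes the singular-value comparison valid for indefinite $P$. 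The paper's step can be repaired along your lines. As a side benefit, $\frac nm M_S$ is a congruence of $D_P$, so Ostrowski's theorem would even give the relative bound $|\tilde\sigma_i(P)-\sigma_i(P)|\le\delta\,\sigma_i(P)$ once $\|G_S-I_d\|\le\delta<1$.

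Your caveat on constants is well placed, and it applies equally to the paper. There the Bernstein exponent is $3m\epsilon^2/(6K^2c_2+2\epsilon K^2)$. With $m=4(c+1)\log n/\epsilon^2$, the ratio of this exponent to $(c+1)\log n$ depends on $K^2c_2$ but not on $c$. So ``choosing $c$ large enough'' does not produce a failure probability of $n^{-c}$, and the factor $2d$ is not accounted for either.

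The honest form of the lemma requires $m\ge C(c+1)\log n/\epsilon^2$, with $C$ depending on $c_0,c_1,c_2$ and $d$. That constant has to go into the threshold on $m$, or into the final bound. It cannot be ``absorbed into the choice of $\delta$'' as your concentration step suggests, because $\delta=\epsilon/(3c_2)$ is pinned down by the target $\epsilon n\rho_n$.
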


\begin{proof}
We follow the proof ideas in \cite{bhattacharjee2024sublinear}. 
Let $R \in \mathbb{R}^{n\times |S|}$ be the scaled sampling matrix such that $R^\top P R = \frac{n}{m} P_S$. Now $P = U_P D_P U_P^\top = U_P|D_P|^{1/2}I_{p,q}|D_P|^{1/2} U_P^\top$ and hence $\frac{n}{m}P_S$ have the same eigenvalues as $|D_P|^{1/2} U_P^\top R R^\top U_P|D_P|^{1/2}I_{p,q}$. 
    Let 
    \[E = |D_P|^{1/2} U_P^\top R R^\top U_P|D_P|^{1/2}I_{p,q} - D_P, \quad F = |D_P|^{1/2} U_P^\top R R^\top U_P|D_P|^{1/2} - |D_P|\]
    and note that $E = FI_{p,q}$. For all $i \in [n]$, let $U_{Pi}$ denote the $i$th row of $U_P$.
    Define mutually indepedent random {\em symmetric} matrices $Y_1, Y_2, \dots, Y_n$ where 
    $$ Y_i = \begin{cases}
        \frac{n}{m} |D_P|^{1/2} U_{Pi}^\top U_{Pi}|D_P|^{1/2} & \text{with probability  } m/n,\\
        0 & \text{otherwise}.
    \end{cases}$$
    Let $Z_i = Y_i - \E[Y_i]$. Then $\sum_{i=1}^n Z_i = F$. Next observe that, for any $i$,  
    \[\|Z_i\|_2 \leq \max\left\{1, \frac{n}{m}-1\right\}\| |D_P|^{1/2} U_{Pi}^\top U_{Pi}|D_P|^{1/2} \|\leq \frac{n}{m}\| U_{Pi}|D_P|^{1/2} \|^2 \leq \dfrac{nK^2 \rho_n}{m} \]
    where the last inequality follows from Lemma~\ref{lem1}. We also have
    \begin{align*}
       \sum_{i=1}^n \E[Z_i^2] & = \sum_{i=1}^n \left[ \dfrac{m}{n} \left(\dfrac{n}{m}-1\right)^2 + \left(1-\dfrac{m}{n}\right)\right] \left(|D_P|^{1/2} U_{Pi}^\top U_{Pi}|D_P| U_{Pi}^\top U_{Pi}|D_P|^{1/2}\right) \\
       & \preceq \dfrac{n}{m} \sum_{i=1}^n \| U_{Pi}|D_P|^{1/2} \|^2 \left(|D_P|^{1/2}  U_{Pi}^\top U_{Pi}|D_P|^{1/2}\right) \\
       & \preceq \dfrac{nK^2 \rho_n}{m}\sum_{i=1}^n||D_P|^{1/2}  U_{Pi}^\top U_{Pi}|D_P|^{1/2}
        = \dfrac{nK^2 \rho_n}{m}|D_P|
        \end{align*}
        where the penultimate inequality follows from Lemma~\ref{lem1} and $\preceq$ denote the Loewner ordering for positive semidefinite matrices. We thus have
        \[  \left\|\sum_{i=1}^n \E[Z_i^2]\right\| \preceq \dfrac{n K^2 \rho_n \sigma_1(P)}{m}\\
        \leq \dfrac{n^2 K^2 \rho_n c_2}{m} \]
    Applying the matrix Bernstein's inequality (see Result~\ref{matbennstein}), we obtain
    \begin{align*}
        \prob\left( \|F\| \geq \epsilon n \rho_n \right) & \leq 2d\exp\left\{ - \dfrac{\epsilon^2 n^2 \rho_n^2}{\dfrac{2n^2K^2 \rho_n^2 c_2}{m} + \dfrac{2\epsilon n^2K^2 \rho_n^2}{3m}} \right\} \leq 2d\exp\left\{ - \dfrac{3m\epsilon^2}{6K^2c_2 + 2\epsilon K^2} \right\}
    \end{align*}
    Choosing $m\geq \dfrac{4(c+1)\log n}{\epsilon^2}$ for some large enough constant $c$, we get 
    \[\prob(\|E\| \geq \epsilon n \rho_n) \leq \prob(\|F\| \geq \epsilon n \rho_n) \leq n^{-c}. \]
    Furthermore, we have
    \begin{equation*}
    \begin{split}
        |\tilde{\sigma}_i(P) - \sigma_i(P)| &= \left|\tfrac{n}{m} \sigma_i(P_S) - \sigma_i(D_P)\right| = 
        \sigma_i\left(|D_P|^{1/2} U_P^\top R R^\top U_P|D_P|^{1/2}I_{p,q}\right) - \sigma_i(D_P) 
        \leq \|E\|
    \end{split}
    \end{equation*}
    where the inequality follows from Result~\ref{singpert} and the definition of $E$. In conclusion we have
    \[ \prob\left(\left| \tilde{\sigma}_i(P) - \sigma_i(P)\right| \leq \epsilon n \rho_n \right) \geq \prob(\|E\| \leq \epsilon n \rho_n) \geq 1 - n^{-c} \]
    as claimed. 
\end{proof}

\begin{lemma}
\label{lem:ct}
    Consider $A$ to be a $n\times m$ matrix with $A_{ij} \sim \mathrm{Bernoulli}(P_{ij})$ where $P$ satisfies assumption~2. Define $E = A-P$. Then for any $m\times d$ matrix $V$ independent of $E$,
    $$\|EV\|_{2\to\infty} \leq 2\sqrt{2\rho_n v(\nu,d,n)}\|V\| + \dfrac{4}{3}\|V\|_{2\to\infty}v(\nu,d,n)$$
    with probability at least $1-n^{-(\nu -1)}$ where $v(\nu,d,n) = \log(n^\nu 5^d)$.
    \label{lem3}
\end{lemma}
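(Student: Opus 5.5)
The plan is to prove the bound row by row, using a vector Bernstein inequality made scalar through an $\epsilon$-net on the unit sphere of $\mathbb{R}^d$, and then take a union bound over the $n$ rows. The term $5^d$ inside $v(\nu,d,n)$ points to a $1/2$-net of $\mathbb{S}^{d-1}$, which has cardinality at most $5^d$. The term $n^\nu$ is what the union bound over rows will cost.

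Fix a row $i\in[n]$ and condition on $V$; since $V$ is independent of $E$, we may treat it as deterministic. The $i$th row of $EV$ is $\sum_{j=1}^m E_{ij}V_{j\cdot}$, a sum of independent, mean-zero random vectors in $\mathbb{R}^d$. Fix a $1/2$-net $\mathcal{N}$ of $\mathbb{S}^{d-1}$ with $|\mathcal{N}|\le 5^d$. The standard net argument gives $\|(EV)_{i\cdot}\|\le 2\max_{u\in\mathcal{N}} |\sum_j E_{ij}\langle V_{j\cdot},u\rangle|$.

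For a fixed $u$, the scalar summands $E_{ij}\langle V_{j\cdot},u\rangle$ are independent and mean zero. Each satisfies $|E_{ij}\langle V_{j\cdot},u\rangle|\le \|V\|_{2\to\infty}$, since $|E_{ij}|\le 1$. The variance is bounded by
\[
\sum_j P_{ij}(1-P_{ij})\langle V_{j\cdot},u\rangle^2\le c_2\rho_n\|Vu\|^2\le \rho_n\|V\|^2,
\]
using Assumption~2 and $c_2\le1$. Scalar Bernstein then gives, with probability at least $1-2e^{-t}$,
\[
\Bigl|\sum_j E_{ij}\langle V_{j\cdot},u\rangle\Bigr|\le \sqrt{2\rho_n\|V\|^2 t}+\tfrac{2}{3}\|V\|_{2\to\infty}t .
\]

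Now take $t=v(\nu,d,n)=\log(n^\nu5^d)$ and union bound over the $5^d$ net points and the $n$ rows. The total failure probability is at most $2n\cdot5^d e^{-t}=2n^{1-\nu}$. Multiplying the Bernstein bound by the net factor $2$ gives exactly
\[
2\sqrt{2\rho_n v}\,\|V\|+\tfrac{4}{3}\|V\|_{2\to\infty}v .
\]
This matches the constants in the statement, which confirms that this is the intended route.

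The only delicate point is bookkeeping of the failure probability. The two-sided tail produces a factor $2$, giving $2n^{-(\nu-1)}$ rather than $n^{-(\nu-1)}$. I would absorb this either by using one-sided bounds on both $u$ and $-u$, which still fits within $5^d$ after symmetrizing the net, or by noting that the slack is harmless. Since $\nu$ is a free constant, this is routine and can be handled by adjusting $\nu$ or the net size slightly.
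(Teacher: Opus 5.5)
Your proposal is correct and follows essentially the same route as the paper: a $1/2$-net of size at most $5^d$ that produces the factor $2$, then scalar Bernstein with variance proxy $\rho_n\|V\|^2$ and range $\|V\|_{2\to\infty}$ at level $t=v(\nu,d,n)$, then union bounds over the net points and the $n$ rows. The paper avoids your extra factor $2$ by never taking absolute values: since $\|x\|=\sup_{u}\langle x,u\rangle$, the net argument gives $\|x\|\le 2\max_{w\in\mathcal{N}}\langle x,w\rangle$ directly, so only the one-sided tail is needed and the failure probability is exactly $n^{-(\nu-1)}$. (Your symmetric-net fix also works, because a maximal $1/2$-separated set can be chosen symmetric and still has at most $5^d$ points, but it is not needed.)
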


\begin{proof}
Our result is a slight modification of Lemma 8 of \cite{tang2025eigenvector}, and we follow similar steps for the proof. The main difference is that, in contrast with \cite{tang2025eigenvector}, we don't need $E$ to be a square matrix. First note that,
    $$\|EV\|_{2\to\infty} = \max_{i} \|(EV)_i\| = \max_i \|e_i^\top EV\|$$
    where $e_i$ denote the $i$-th elementary basis vector in $\R^n$. Now for any $i$,
    \[\|e_i^\top EV\|  = \sup_{u\in \mathcal{S}} e_i^\top EVu, \quad \mathcal{S} = \{u \in \R^d : \|u\|=1 \}.\] Let $\mathcal{S}_{1/2}$ be a $1/2$-cover of $\mathcal{S}$ with respect to $\ell_2$-norm. Then, for any $u\in\mathcal{S}$ and $w\in\mathcal{S}_{1/2}$ with $\|u-w\|\leq 1/2$ we have,
    \begin{align*}
        e_i^\top EV(u-w) 
         \le \|e_i EV\| \times \|u-w\| 
         \leq \dfrac{1}{2}\|e_i EV\|
    \end{align*}
    and hence
    \[ \|e_i E V\| = \sup_{u\in\mathcal{S}} e_i^\top EVu  \leq \sup_{w\in\mathcal{S}_{1/2}} e_i^\top EVw +  \dfrac{1}{2}\|e_i EV\| \]
    so that $\|e_i E V\| \leq 2\sup_{w\in\mathcal{S}_{1/2}} e_i^\top EVw$. 
    Now for any $w\in \mathcal{S}_{1/2}$, we have,
    $$e_i^\top EVw = \sum_{j=1}^m E_{ij}(Vw)_j$$
    where $E_{ij}(Vw)_j$ are mutually independent random variables
    with
    $\E(E_{ij}(Vw)_j) = 0$ and $|E_{ij}(Vw)_j| \leq \|Vw\|_\infty \leq \|V\|_{2\to\infty}$ almost surely. 
    Furthermore, 
    \[ \sum_{j=1}^m \E(E_{ij}^2(Vw)_j^2) = \sum_{j=1}^m (Vw)_j^2 \var[A_{ij}] = \sum_{j=1}^m P_{ij}(1-P_{ij}) (Vw)_j^2 \leq \rho_n \|Vw\|^2 \leq \rho_n \|V\|^2.\]
    Applying Bernstein's inequality we obtain
    \begin{gather*}
        \prob\left( e_i^\top EVw > t \right)\leq \exp\left\{ \dfrac{-t^2}{2\rho_n \|V\|^2+ 2\|V\|_{2 \to \infty} t/3} \right\}
    \end{gather*}
    Let $t_1 = \dfrac{2}{3}v(\nu,d,n)\|V\|_{2\to\infty} + \sqrt{2\rho_nv(\nu,d,n)}\|V\|$. Then
    \begin{align*}
        \dfrac{t_1^2}{2\rho_n\|V\|^2 + 2\|V\|_{2 \infty} t_1/3} & \geq \dfrac{\left( \frac{2}{3}v(\nu,d,n)\|V\|_{2\to\infty} + \sqrt{2\rho_nv(\nu,d,n)}\|V\| \right)^2}{2\rho_n\|V\| + \frac{2}{3}\|V\|_{2\to\infty} \left( \frac{2}{3}v(\nu,d,n)\|V\|_{2\to\infty} + \sqrt{2\rho_nv(\nu,d,n)}\|V\| \right)} \\
        & \geq v(\nu,d,n).
    \end{align*}
    and hence 
    \[ \prob\left( e_i^\top EVw > t_1 \right)  \leq \exp\{-v(\nu,d,n)\} = n^{-\nu}5^{-d}. \]
    The volumetric argument in Corollary 4.2.13 in \cite{vershynin2018high} yields
    $|\mathcal{S}_{1/2}|\leq 5^d$. Thus, taking a union bound over all $w\in\mathcal{S}_{1/2}$, we have
    \begin{align*}
    \prob\left( \| e_i^\top EV\| > 2t_1 \right) \leq 
        \prob\left( \sup_{w\in\mathcal{S}_{1/2}} e_i^\top EVw > t_1 \right) &  \leq n^{-\nu}. 
    \end{align*}
    Taking another union bound over all $i\in [n]$ we obtain
    $$\prob\left( \|EV\|_{2\to\infty} > 2t_1 \right) \leq n^{-(\nu-1)}.$$
    In summary we have
    $$\|EV\|_{2\to\infty} \leq 2\sqrt{2\rho_n v(\nu,d,n)}\|V\| + \dfrac{4}{3}\|V\|_{2\to\infty}v(\nu,d,n)$$
    with probability at least $1-n^{-(\nu -1)}$.
\end{proof}

\begin{lemma}\label{lem:subsample}
    Assume the setting in Theorem~1. Let $A_S,P_S$ be the principal submatrix of $A,P$ respectively corresponding to the subset $S$ of size $m$ chosen uniformly at random from $[n]$. Denote by $\hat{X}_S$ the adjacency spectral embedding of $A_S$.
    Given assumption~1-4 holds, there exists an orthogonal transformation $W \in \mathbb{O}(d)$ such that
     $$ \|\hat{X}_SW - X_S \|_{2\to\infty} = \order_p\left( \sqrt{\dfrac{\log m}{m}}\right).$$
\end{lemma}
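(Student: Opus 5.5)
The plan is to condition on the subsample $S$ and recognize that $A_S$ is itself an $m\times m$ GRDPG adjacency matrix with edge-probability matrix $P_S = X_S I_{p,q} X_S^\top$. The claimed rate $\sqrt{(\log m)/m}$ is then the standard two-to-infinity bound for the ASE, at size $m$, of an $m$-vertex network. Since $A_S$ depends on $S$ only through which entries of $A$ are retained, and $S$ is independent of the Bernoulli noise, it suffices to verify that, with high probability over $S$, the matrix $P_S$ satisfies the analogues of Assumptions~2--4 at sample size $m$. After that we invoke (or reprove) the known entrywise eigenvector perturbation result, e.g.\ \cite{xie2024entrywise} or the argument of \cite{tang2025eigenvector}, for the $m\times m$ problem.

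The first step is to check the hypotheses for $P_S$. Uniform sparsity is inherited trivially: $c_1\rho_n \le (P_S)_{ij}\le c_2\rho_n$. Assumption~4 is exactly $m\rho_n=\Omega(\log m)$. For the spectrum, apply Lemma~\ref{lem2}. Since $m=\Omega((\log n)^{1+a})$ by Assumption~1, we can take $\epsilon=\epsilon_n\to 0$ with $m\ge 4(c+1)\log n/\epsilon^2$. This gives $|\tfrac{n}{m}\sigma_i(P_S)-\sigma_i(P)|\le \epsilon n\rho_n$ for $i\le d$. Combined with $\sigma_d(P)\ge nc_1\rho_n/c_0$ from the proof of Lemma~\ref{lem1}, we get $\sigma_i(P_S)=\Theta(m\rho_n)$ for all $i\le d$ and $\kappa(P_S)\le 2c_0$. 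The signature is preserved as well, since the eigenvalues of $\tfrac{n}{m}P_S$ are those of $D_P+E$ with $\|E\|$ small. Lemma~\ref{lem2} is stated for Bernoulli$(m/n)$ sampling rather than uniform sampling of exactly $m$ vertices; I would handle this by a standard coupling or a conditioning/Poissonization argument that only changes constants. Lemma~\ref{lem1} applied to $P_S$ then gives $\|U_{P_S}\|_{2\to\infty}\le c/\sqrt{m}$ and $\|X_S\|_{2\to\infty}\le K\rho_n^{1/2}$.

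The second step is the perturbation analysis for $\hat X_S$ versus $X_S$. Let $E_S=A_S-P_S$. Then $\|E_S\|=\order_p(\sqrt{m\rho_n})$ by standard spectral norm bounds for sparse Bernoulli matrices, valid because $m\rho_n\gtrsim\log m$. I would use the usual decomposition
\[
\hat X_S W - X_S = E_S X_S |D_{P_S}|^{-1} I_{p,q} + R,
\]
where $W$ is obtained from the SVD of $U_{P_S}^\top U_{A_S}$; under the bounded condition number it is close to, and replaceable by, an orthogonal matrix that commutes with $I_{p,q}$. The leading term is controlled by Lemma~\ref{lem3} with $V=X_S|D_{P_S}|^{-1}$. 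This gives $\|E_S V\|_{2\to\infty}\lesssim \sqrt{\rho_n\log m}\,\|V\|+\|V\|_{2\to\infty}\log m$. Since $\|V\|\lesssim (m\rho_n)^{-1/2}$ and $\|V\|_{2\to\infty}\lesssim \rho_n^{1/2}/(m\rho_n)$, this is $\order(\sqrt{\log m/m})$ once $m\rho_n\gtrsim\log m$.

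The main obstacle is the remainder $R$. It contains the term $E_S(U_{A_S}-U_{P_S}U_{P_S}^\top U_{A_S})$, in which $E_S$ and $U_{A_S}$ are dependent, so Lemma~\ref{lem3} does not apply directly. I would first try the leave-one-out argument of \cite{xie2024entrywise} / \cite{tang2025eigenvector}: for each row $i$, replace $A_S$ by the matrix with row and column $i$ set to their expectations, bound the distance between the two eigenspaces, and apply Lemma~\ref{lem3} row by row. This should yield $\|R\|_{2\to\infty}=\order_p(\sqrt{\log m/m}\cdot\text{(lower order)})$. The remaining terms, namely $U_{P_S}(U_{P_S}^\top U_{A_S}-W)$-type quantities and the $|D_{A_S}|^{1/2}$ versus $|D_{P_S}|^{1/2}$ commutation error, are $\order_p(1/\sqrt{m})$ by Weyl's inequality and the $\sin\Theta$ theorem. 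A union bound over the events for $S$ and the noise gives probability at least $1-m^{-c}$.
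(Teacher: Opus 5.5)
Your proposal is correct and follows essentially the same route as the paper. Both arguments condition on $S$ and use Lemma~\ref{lem2} plus the Perron--Frobenius coherence argument to get $\sigma_d(P_S)=\Theta(m\rho_n)$ and $\|X_S\|_{2\to\infty}\lesssim\rho_n^{1/2}$. Both then bound the leading term $(A_S-P_S)X_S(X_S^\top X_S)^{-1}I_{p,q}$ with Lemma~\ref{lem3}, and control the remainder $\hat X_S W - A_S X_S(X_S^\top X_S)^{-1}I_{p,q}$, with block-diagonal $W=\mathrm{diag}(W_+,W_-)$, via the leave-one-out entrywise bound (Theorem~3.2 of \cite{xie2024entrywise}). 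Your observation that Lemma~\ref{lem2} is proved for Bernoulli rather than fixed-size sampling is a point the paper passes over silently; it is fixable by conditioning on $|S|=m$.
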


\begin{proof}
    Clearly, $A_S\thicksim \text{GRDPG}_{p,q}(P_S)$. Now we need to verify that Assumptions 1-5 in \cite{xie2024entrywise} hold for $A_S$ and $P_S$. First note that, by the Perron-Frobenius theorem, we have $\sigma_1(P_S) \leq m \rho_n c_2$ almost surely. 
    Next recall Lemma~\ref{lem2}. Then
    \begin{align*}
      \prob\left(\left| \sigma_r(P_S) - \dfrac{m}{n}\sigma_r(P)\right| \leq \epsilon m \rho_n\right) \geq 1 - n^{-c}.\\
    \end{align*}
    Furthermore, by Eq.~\eqref{eq:lowerbd_sigma_d}, $\sigma_d(P) \geq n c_1 \rho_n/c_0$ almost surely 
    and hence, with probability at least $1 - n^{-c}$, 
    \begin{equation}
    \label{eq:lowerbd_sigma_Ps}
        \sigma_d(P_S) \geq \dfrac{m \rho_n c_1}{c_0} - \epsilon m \rho_n.
    \end{equation}
    Now suppose the condition in Eq.~\eqref{eq:lowerbd_sigma_Ps} holds. 
    Let $\epsilon = \log^{-\alpha/2}{n}$ for some $\alpha>0$. Then, for $n$ sufficiently large, there exists positive constants 
    $k < c_1/c_0 - \epsilon$ and $k' > c_2/c_0 + \epsilon$ such that  
    \begin{align*}
        m \rho_n k \leq \sigma_d(P_S) \leq m \rho_n k'.
    \end{align*}
    Next we have, for any $i \in [n]$,
    $$mc_2^2 \rho_n^2 \geq \|(P_SU_{P_S})_i\|^2 = \|(U_{P_S} D_{P_S})_i\|^2 \geq \sigma_d^2(P_S)\|U_{P_S i}\|^2 \geq m^2k^2 \rho_n^2 \|U_{P_S i}\|^2 $$
    which implies $\|U_{P_S i}\|^2 \leq \dfrac{c_2^2}{mk^2}$. As $i$ is arbitrary, we have $\|U_{P_S}\|_{2\to\infty} \leq \dfrac{c_2}{k\sqrt{m}}$. Hence
    $$\|X_S\|_{2\to\infty}\leq \|U_{P_S}|D_{P_S}|^{1/2}\|_{2\to\infty} \leq \|U_{P_S}\|_{2\to\infty}\sqrt{\sigma_1(P_S)}\leq \dfrac{c_2}{k\sqrt{m}} \times \sqrt{m c_2 \rho_n} \leq \dfrac{c_2^{3/2} \rho_n^{1/2}}{k}.$$
Thus, Assumption~1 in \cite{xie2024entrywise} holds. Assumption 2 in \cite{xie2024entrywise} is exactly the same as our Assumption~4. Assumption~3 in \cite{xie2024entrywise} also holds for $E_S = A_S - P_S$ because one can set $[E_{S2}]_{ij} = 0$ for all $i,j$.

Assumption~4 and 5 require a bit of work but is mutatis mutandis as in \cite{xie2024entrywise} and we omit the details. In particular to verify Assumption~5 in \cite{xie2024entrywise} we can follow the same arguments but replace $\Delta$ there by $\Delta_m = (m\rho_n)^{-1} X_S^{\top} X_S$ and then bound 
\begin{equation}
    \|A_S - P_S\| \leq K \sqrt{m \rho_n} \label{eq:lei_rinaldo}
\end{equation}
with high probability using \cite{lei2015consistency}.

Next, define 
\begin{equation}
    W = \mathrm{diag}(W_+,W_-)\label{W}
\end{equation}
where $W_\pm = W_{2\pm}W_{1\pm}^\top$ with $U_{P\pm}^\top U_{A\pm} = W_{1\pm}\mathrm{diag}\{\sigma_1(U_{P\pm}^\top U_{A\pm}),\cdots,\sigma_d(U_{P\pm}^\top U_{A\pm})\}W_{2\pm}^\top$. Here $W_{1+},W_{2+}\in \mathbb{O}(p)$ and $W_{1-},W_{2-}\in \mathbb{O}(q)$. Then, by Theorem~3.2 in \cite{xie2024entrywise}, we have,
\begin{equation}
    \|\hat{X}_{S\pm}W_\pm - (\pm) A_S X_{S\pm}(X_{S\pm}^\top X_{S\pm})^{-1}\|_{2\to\infty} =\order_p\left(  \dfrac{ \|X_S\|_{2\to\infty}\|U_{P_S}\|_{2\to\infty}\log m}{\sqrt{m}\rho_n \lambda_d(\Delta_m)} \right)\label{Xie3.2}
\end{equation}
Now we can write,
\begin{align*}
    \hat{X}_SW - X_S 
    & = \hat{X}_SW - P_S X_S(X_S^\top X_S)^{-1}I_{p,q} \\
    & = \hat{X}_SW - A_S X_S(X_S^\top X_S)^{-1}I_{p,q} + (A_S-P_S) X_S(X_S^\top X_S)^{-1}I_{p,q}.
    \end{align*}
Let $M^{\dagger} = M(M^{\top} M)^{-1}$ denote the Moore-Penrose pseudoinverse of $M^{\top}$ when $M^{\top}$ has full-row rank. Then
\begin{align*}
\|\hat{X}_SW - X_S\|_{2\to\infty} & \leq \|\hat{X}_SW - A_S X_S^{\dagger} I_{p,q}\|_{2\to\infty} + \|(A_S-P_S) X_S^{\dagger} I_{p,q}\|_{2\to\infty}
\end{align*}
Now, $(A_S-P_S) X_S^{\dagger} I_{p,q} = (A_S-P_S) U_{P_S} |D_{P_S}|^{-1/2}I_{p,q}$ and
\begin{align*}
    \|(A_S-P_S) X_S^{\dagger}I_{p,q}\|_{2\to\infty} & = \|(A_S-P_S) U_{P_S} |D_{P_S}|^{-1/2}I_{p,q}\|_{2\to\infty}\\
    & \leq \| (A_S-P_S) U_{P_S} \|_{2\to\infty} \times \||D_{P_S}|^{-1/2}\| \\
    & \leq \| (A_S-P_S) U_{P_S} \|_{2\to\infty} \times \sqrt{\dfrac{1}{\sigma_d(P_S)}} \\
    & = \order_p\left(\sqrt{\rho_n \log m} \times \dfrac{1}{\sqrt{m \rho_n k}}\right) \\
    & = \order_p\left( \sqrt{\dfrac{\log m}{m}}\right)
\end{align*}
where the penultimate equality follows from Lemma~\ref{lem3}. Furthermore, by Eq.~\eqref{Xie3.2},
\begin{align*}
    \|\hat{X}_SW - A_S X_S^{\dagger} I_{p,q}\|_{2\to\infty} & \leq \|\hat{X}_{S+}W_+ - A_S X_{S+}^{\dagger}\|_{2\to\infty} + \|\hat{X}_{S-}W_- - A_S X_{S-}^{\dagger} \|_{2\to\infty} \\
    & =\order_p\left( \dfrac{\chi \|X_S\|_{2\to\infty}\|U_{P_S}\|_{2\to\infty}\log m}{\sqrt{m}\rho_n \lambda_d(\Delta_m)}\right) \\
    & \leq  \order_p\left(\dfrac{\log m}{m\sqrt{\rho_n}}\right).
\end{align*}
Combining the above bounds we obtain
\begin{align*}
    \|\hat{X}_SW - X_S\|_{2\to\infty}  & = \order_p\left(\dfrac{\log m}{m\sqrt{\rho_n}}\right) + \order_p\left( \sqrt{\dfrac{\log m}{m}}\right) = \order_p\left( \sqrt{\dfrac{\log m}{m}}\right).
\end{align*}
\end{proof}

\subsection{Proof of Theorem 1}

Now Lemma~\ref{lem:subsample} gives us for any uniformly chosen random subsample $S$,
$$\|\hat{X}_SW - X_S \|_{2\to\infty} = \order_p\left( \sqrt{\dfrac{\log m}{m}}\right)$$
where $W$ is defined as in Eq.~\eqref{W}.

Next using the same $W$ and by the definition of $\hat{X}_{S^{c}}$ (see Algorithm~1) we have
\begin{align*}
    \hat{X}_{S^c} - P_{S^c,S}U_{P_S}|D_{P_S}|^{-1/2}I_{p,q}W^\top & = A_{S^c,S}U_{A_S}|D_{A_S}|^{-1/2}I_{p,q}- P_{S^c,S}U_{P_S}|D_{P_S}|^{-1/2}I_{p,q}W^\top \\
& = \left[R_+ \mid -R_- \right],
\end{align*}
where
\begin{align*}
  R_+ &= A_{S^c,S}U_{A_S+}|D_{A_S+}|^{-1/2} - P_{S^c,S}U_{P_S+}|D_{P_S+}|^{-1/2}W_+^\top, \\
R_- &= A_{S^c,S}U_{A_S-}|D_{A_S-}|^{-1/2} - P_{S^c,S}U_{P_S-}|D_{P_S-}|^{-1/2}W_-^\top. 
\end{align*}
We can further decompose $R_+$ and $R_{-}$ as
\begin{align*}
    R_{\pm} 
   &  = E_{S^c,S} U_{A_S\pm} |D_{A_S\pm}|^{-1/2} 
    + P_{S^c,S} (U_{A_S\pm} - U_{P_S\pm}W_+^\top) |D_{A_S\pm}|^{-1/2} +  P_{S^c,S} U_{P_S+} Q_{2\pm}
\end{align*}
where $E_{S^c,S} = A_{S^c,S} - P_{S^c,S}$ and $Q_{2\pm}= W_{\pm} |D_{A_S\pm}|^{-1/2} - |D_{P_S\pm}|^{-1/2} W_\pm^\top$. 
Next we have
\begin{align}
\label{eq:Rpm}
    \|R_\pm\|_{2\to\infty} & \leq  (\|R_{1\pm}\|_{2\to\infty} + \|R_{2\pm} \|_{2\to\infty}) \times \||D_{A_S\pm}|^{-1/2}\| + \|P_{S^c,S} U_{P_S\pm} Q_{2\pm}\|_{2\to\infty}
\end{align}
where $R_{1\pm} = E_{S^c,S} U_{A_S\pm}$ and $R_{2\pm} = P_{S^c,S} (U_{A_S\pm} - U_{P_S\pm}W_\pm^\top)$.
By Lemma~B.3 of \cite{xie2024entrywise},
\begin{equation*}
    \|Q_{2\pm}\|_F = \| W_\pm^\top |D_{A_S\pm}|^{-1/2} - |D_{P_S\pm}|^{-1/2} W_\pm^\top \|_F = \order_p\left( \dfrac{\sqrt{t}}{(m\rho_n)^{3/2}} \right)
\end{equation*}
for any $t\to \infty$, and hence, with $t = c\log m$ for some constant $c>0$, we obtain
\begin{equation}
    \|Q_{2\pm}\|_F = \order_p\left( \dfrac{\sqrt{\log m}}{(m\rho_n)^{3/2}} \right).
\end{equation}
Next we have
\begin{equation}
   \|P_{S^c,S}\|_{2\to\infty}  = \Theta(\sqrt{m}\rho_n)  \quad \text{and}\quad 
   \||D_{P_S\pm}|^{-1/2}\|_2  \leq (\sigma_d(P_S))^{-1/2} = \order_p\left((m \rho_n)^{-1/2}\right). 
\end{equation}
The above bound for $\||D_{P_S\pm}|^{-1/2}\|$ and Weyl's inequality (see Result~\ref{weyl}) together implies
\begin{equation}
    \||D_{A_S\pm}|^{-1/2}\| \leq (\sigma_d(A_S))^{-1/2} = \order_p\left((m \rho_n)^{-1/2}\right).
\end{equation}
Note that, $U_{A_S\pm}$ and $U_{P_S\pm}$ depends only on the subsampled vertices, while $A_{S^c,S}$ along with $P_{S^c,S}$ involves only the connections which are independent of the subgraph $A_{S}$. Clearly we have $\|U_{P_S\pm}\|=\|U_{A_S\pm}\|=1$. Using Lemma B.5 of \cite{xie2024entrywise} we also have
$$\|U_{A_S\pm}\|_{2\to\infty} = \order_p\left( \|U_{P_S\pm}\|_{2\to\infty} \right)=\order_p\left( \|U_{P_S}\|_{2\to\infty} \right)= \order_p\left(m^{-1/2}\right)$$
Applying Lemma~\ref{lem3} we obtain
\begin{equation}
      \|R_{1\pm}\|_{2\to\infty}=\order_p\left(\sqrt{\rho_n \log n}\right),
\end{equation}
and hence
\begin{align*}
  \|R_{1\pm}\|_{2\to\infty} \times \| |D_{A_S\pm}|^{-1/2}\|_2 & = \order_p\left(\sqrt{\dfrac{\log n}{m}}\right) 
\end{align*}
We next bound $R_{2,\pm}$. In particular
\begin{equation}
\begin{split}
\label{eq:R2_bdd}
    \|R_{2\pm}\|_{2\to\infty} 
    & \leq \|P_{S^c,S}\|_{2\to\infty}\|U_{A_S\pm} - U_{P_S\pm}U_{P_S\pm}^\top U_{A_S\pm} +  U_{P_S\pm}(U_{P_S\pm}^\top U_{A_S\pm} - W_\pm^\top)\|\\
    & \leq \|P_{S^c,S}\|_{2\to\infty} \{ \|U_{A_S\pm} - U_{P_S\pm}U_{P_S\pm}^\top U_{A_S\pm}\|_2 + \|U_{P_S\pm}^\top U_{A_S\pm} - W_\pm^\top\|_2 \} \\
    & \leq \|P_{S^c,S}\|_{2\to\infty} \{\|\sin \Theta(U_{P_S\pm},U_{A_S\pm})\|_2 + \|\sin \Theta(U_{P_S\pm},U_{A_S\pm})\|_2^2\} \\
    & \leq \sqrt{m}\rho_n \times \left\{ \dfrac{2\|A_S - P_S\|}{m\rho_n\lambda_d(\Delta_m)} + \dfrac{4\|A_S - P_S\|^2}{m^2\rho_n^2\lambda_d^2(\Delta_m)}  \right\}\\
    & = \sqrt{m}\rho_n \times \left\{ \order_p\left(\dfrac{1}{\sqrt{m\rho_n}}\right) +\order_p\left(\dfrac{1}{m\rho_n}\right)  \right\} = \order_p(\sqrt{\rho_n})
    \end{split}
\end{equation}
where the third inequality follows from Lemma~6.7 in \cite{cape2019two} and the fourth inequality follows from the Davis-Kahan theorem \citep{davis70,samworth}.
We therefore have
$$ \|R_{2\pm}\|_{2\to\infty} \times \||D_{A_S\pm}|^{-1/2}\| = \order_p\left(m^{-1/2}\right).$$
Lastly, we have,
\begin{align*}
    \|P_{S^c,S} U_{P_S\pm} Q_{\pm}\|_{2\to\infty} & \leq \|P_{S^c,S}\|_{2\to\infty} \times \|Q_{2\pm}\| 
     = \order_p\left(\sqrt{\dfrac{m\rho_n^2 \log m}{m^3 \rho_n^3}}\right) 
     = \order_p\left(\sqrt{\dfrac{\log m}{m^2 \rho_n}}\right).
\end{align*}
Substituting the above bounds into Eq.~\eqref{eq:Rpm} we obtain
\begin{align*}
    \|R_\pm\|_{2\to\infty} &\leq \order_p\left(\sqrt{\dfrac{\log n}{m}}\right) + \order_p\left(m^{-1/2}\right)+ \order_p\left(\sqrt{\dfrac{\log m}{m^2 \rho_n}}\right)  
     = \order_p\left( \sqrt{\dfrac{\log n}{m}} \right)
\end{align*}
and hence
\begin{equation*}
\|\hat{X}_{S^c} - P_{S^c,S}U_{P_S}|D_{P_S}|^{-1/2}I_{p,q}W^\top\|_{2\to\infty}  \leq \|R_+\|_{2\to\infty} + \|R_-\|_{2\to\infty} = \order_p\left( \sqrt{\dfrac{\log n}{m}} \right)
\end{equation*}
Finally, $X_S(X_S^\top X_S)^{-1} = U_{P_S}|D_{P_S}|^{-1/2}$, which readily implies,
\begin{align*}
    \|\hat{X}_{S^c}W - X_{S^c}\|_{2\to\infty} & = \| \hat{X}_{S^c} - P_{S^c,S}X_S(X_S^\top X_S)^{-1}I_{p,q}W^\top \|_{2\to\infty} \\
    & = \|\hat{X}_{S^c} - P_{S^c,S}U_{P_S}|\Tilde{D}_{P_S}|^{-1/2}I_{p,q}W^\top\|_{2\to\infty} = \order_p\left( \sqrt{\dfrac{\log n}{m}} \right).
\end{align*}
In summary,  $\hat{X}_{PS} = [\hat{X}_S^{\top} \mid \hat{X}_{S^c}^{\top}]^{\top}$, we have
\begin{align*}
    \|\hat{X}_{PS}W-X_{PS}\|_{2\to\infty} & \leq \max\left\{ \|\hat{X}_{S}W-X_{S}\|_{2\to\infty} , \|\hat{X}_{S^c}W-X_{S^c}\|_{2\to\infty} \right\} = \order_p\left( \sqrt{\dfrac{\log n}{m}} \right)
\end{align*}
as claimed. 

\subsection{Proof of Theorem 2}
    For $W$ defined as in Eq. \eqref{W}, we can write
    \begin{align*}
        \hat{X}_S - U_{P_S}|D_{P_S}|^{1/2}W^\top & =\left[
            \hat{X}_{S+} - U_{P_S+}|D_{P_S+}|^{1/2}W_+^\top \,\,\, \Big | \,\,\, \hat{X}_{S-} - U_{P_S-}|D_{P_S-}|^{1/2}W_-^\top
        \right], \\
        \hat{X}_{S\pm} - U_{P_S\pm}|D_{P_S\pm}|^{1/2}W_\pm^\top 
        & = U_{A_S\pm}|D_{A_S\pm}|^{1/2} - U_{P_S\pm}W_\pm^\top|D_{A_S\pm}|^{1/2} + U_{P_S\pm}\tilde{Q}_{\pm},
    \end{align*}
    where $\tilde{Q}_{\pm} = W_\pm^\top|D_{A_S\pm}|^{1/2} - |D_{P_S\pm}|^{1/2}W_\pm^\top$. We know that,
    $$\||D_{P_S\pm}|^{1/2}\| = \order_p\left(\sqrt{\sigma_1(P_S)}\right) = \order_p(\sqrt{m\rho_n})$$
    Hence, by Eq.~\eqref{eq:lei_rinaldo} and Weyl's inequality (see Result~\ref{weyl}), we also have
    $$\||D_{A_S\pm}|^{1/2}\| \leq (\sigma_1(P_S) + \|A_S - P_S\|)^{1/2} = \order_p\left(\sqrt{m\rho_n}\right).$$
    Therefore, by Lemma B.3 in \cite{xie2024entrywise}, we have
    $$\|\tilde{Q}_{\pm}\|_F = \|W_\pm^\top|D_{A_S\pm}|^{1/2} - |D_{P_S\pm}|^{1/2}W_\pm^\top\|_F = \order_p\left( \dfrac{\sqrt{t}}{\sqrt{m\rho_n}} \right)$$
whenever $t \rightarrow \infty$. Choosing $t = \log m$ we get,
\begin{equation}
    \|\tilde{Q}_{\pm}\|_F = \order_p\left( \sqrt{\dfrac{\log m}{m\rho_n}} \right).
\end{equation}
Next, using Davis-Kahan's theorem and Lemma 6.7 in \cite{cape2019two} we have
\begin{equation}
    \| U_{A_S\pm} - U_{P_S\pm}W_\pm^\top\|_2 = \order_p\left( \dfrac{\|A_S - P_S\|_2}{m\rho_n\lambda_d(\Delta_m)} \right) = \order_p\left(\dfrac{1}{\sqrt{m\rho_n}}\right).
\end{equation}
See the derivations in Eq.~\eqref{eq:R2_bdd} for further details. 
We therefore have
    \begin{align*}
        \|\hat{X}_{S\pm} - U_{P_S\pm}|D_{P_S\pm}|^{1/2}W_\pm^\top \|_F & \leq \| U_{A_S\pm} - U_{P_S\pm}W_\pm^\top\| \times \||D_{A_S\pm}|^{1/2}\|_F + \|U_{P_S\pm}\| \times \|\tilde{Q}_{\pm}\|_F \\
        & = \order_p\left( \dfrac{1}{\sqrt{m\rho_n}} \times \sqrt{m\rho_n} + \sqrt{\dfrac{\log m}{m\rho_n}}  \right)  = \order_p(1). 
    \end{align*}
In summary, we have
\begin{align*}
\|\hat{X}_SW - X_S\|_F &= \|\hat{X}_S - U_{P_S}|D_{P_S}|^{1/2}W^\top\|_F
    \\ &= \sqrt{\|\hat{X}_{S+} - U_{P_S+}|D_{P_S+}|^{1/2}W_+^\top\|_F^2+\|\hat{X}_{S-} - U_{P_S-}|D_{P_S-}|^{1/2}W_-^\top\|_F^2}  \\
    & = \order_p\left( 1 \right).
\end{align*}
Next recall the decomposition for $\hat{X}_{S^c} - P_{S^c,S}U_{P_S}|D_{P_S}|^{-1/2}I_{p,q}W^\top$
in the proof of Theorem~1,
\begin{gather*} \hat{X}_{S^c} - P_{S^c,S}U_{P_S}|D_{P_S}|^{-1/2}I_{p,q}W^\top = \left[R_+ \mid - R_{-} \right], \\
R_\pm = E_{S^c,S} U_{A_S\pm} |D_{A_S\pm}|^{-1/2} 
    + P_{S^c,S} (U_{A_S\pm} - U_{P_S\pm}W_\pm^\top) |D_{A_S\pm}|^{-1/2} +  P_{S^c,S} U_{P_S\pm} Q_{2\pm}
\end{gather*}
where $Q_{2\pm} = W_{\pm}^\top |D_{A_S\pm}|^{-1/2} - |D_{P_S\pm}|^{-1/2} W_\pm^\top$. We thus have
\begin{align*}
    \|R_\pm\|_F & \leq (\|R_{1\pm}\|   
    + \|R_{2\pm}\|) \times \||D_{A_S\pm}|^{-1/2}\|_F +  \|P_{S^c,S} U_{P_S\pm}\| \times \|Q_{2\pm}\|_F
\end{align*}
where $R_{1\pm} = (A_{S^c,S} - P_{S^c,S})U_{A_S\pm}$ and $R_{2\pm} = P_{S^c,S} (U_{A_S\pm} - U_{P_S\pm}W_\pm^\top)$. Now 
$\|P_{S^c,S}\| \leq c_2 \rho_n\sqrt{mn}$. Next, by Theorem~5.2 in \cite{lei2015consistency},
$$\|R_{1\pm}\|_2 \leq \|A_{S^c,S} - P_{S^c,S}\|_2 \leq \|A-P\| = \order_p(\sqrt{n\rho_n}).$$
Applying Davis-Kahan's theorem \citep{davis70,samworth} and Lemma 6.7 in \cite{cape2019two} we have
$$\|R_{2\pm}\| \leq \|P_{S^c,S}\| \times \|U_{A_S\pm} - U_{P_S\pm}W_\pm^\top\|  = \order_p(\sqrt{n\rho_n}).$$
Combining the above bounds, we obtain
\begin{align*}
    \|R_{1\pm}\| \times \| |D_{A_S\pm}|^{-1/2}\|_F & = \order_p\left(\sqrt{n/m}\right),\\
    \|R_{2\pm}\| \times  \||D_{A_S\pm}|^{-1/2}\|_F & = \order_p\left(\sqrt{n/m}\right), \\
 \|P_{S^c,S} U_{P_S\pm}\| \times \|Q_{2\pm}\|_F & = \order_p\left(\sqrt{\dfrac{n\log m}{m^2\rho_n}}\right).
\end{align*}
and hence $\|R_\pm\|_F = \order_p(\sqrt{n/m})$. 
In summary, we have
\begin{equation*}
\begin{split}
    \|\hat{X}_{S^c}W - X_{S^c}\|_F &=  \| \hat{X}_{S^c} - P_{S^c,S}U_{P_S}|D_{P_S}|^{-1/2}I_{p,q}W^\top \|_F
    \\ &= \sqrt{\|R_+\|_F^2 + \|R_-\|_F^2} = \order_p\left(\sqrt{\dfrac{n}{m}}\right)
\end{split}
\end{equation*}
and hence
\begin{align*}
    \|\hat{X}_{PS}W-X_{PS}\|_{F} & \leq \sqrt{ \|\hat{X}_{S}W-X_{S}\|_F^2 + \|\hat{X}_{S^c}W-X_{S^c}\|_F^2} = \order_p\left(\sqrt{\dfrac{n}{m}}\right)
\end{align*}
as desired.

\subsection{Proof of Corollary 1}
For simplicity of notations we will henceforth assume, without loss of generality, that $\mathfrak{S} = I$.    Let $Y = \hat{X}_{PS} W - X_{PS}$. Then
    \begin{align*}
        \|\hat{P}_{PS} - P \|_{2\to\infty} & = \|(\hat{X}_{PS}W)I_{p,q}(\hat{X}_{PS}W)^\top - X_{PS}I_{p,q}X_{PS}^\top\|_{2\to\infty}\\
        & = \|(\hat{X}_{PS}W)I_{p,q}Y^{\top} + Y I_{p,q}X_{PS}^\top\|_{2\to\infty} \\
        & \leq \|\hat{X}_{PS}W\|_{2\to\infty} \times  \|Y\| + \|Y\|_{2\to\infty} \times \|X_{PS}\|  \\
        &  \leq \left( \|X_{PS}\|_{2\to\infty} + \|Y\|_{2 \to \infty} \right)\times \|Y\| + 
        \|Y\|_{2\to\infty} \times \|X_{PS}\|
        \\ & \leq \left( \|X_{PS}\|_{2\to\infty} + \|Y\|_{2 \to \infty} \right)\times \|Y\| + 
        \|Y\|_{2\to\infty} \times n^{1/2} \|X_{PS}\|_{2 \to \infty}
    \end{align*}
    Now by Theorem~1 and Theorem~2, we have
    \[\|Y\|_{2 \to \infty} = \order_{p}\left(\sqrt{\frac{\log n}{m}}\right), \qquad \|Y\| = \order_{p}\left(\sqrt{n/m}\right). \]
    Combining the above bounds we obtain
    \begin{equation*}
        \begin{split}
         \|\hat{P}_{PS} - P \|_{2\to\infty} &= \order_p\left(\left(\|X_{PS}\|_{2 \to \infty} + \sqrt{\dfrac{\log n}{m}}\right) \times \sqrt{\dfrac{n}{m}} + \sqrt{\dfrac{\log n}{m}} \times n^{1/2} \|X_{PS}\|_{2 \to \infty} \right) \\ &= \order_{p}\left(\|X_{PS}\|_{2 \to \infty} \sqrt{\frac{n \log n}{m}}\right).  
        \end{split}
    \end{equation*}
    Similarly we also have
    \begin{equation*}
    \begin{split}
        \|\hat{P}_{PS} - P \|_F
        & = \|(\hat{X}_{PS}W)I_{p,q}Y^{\top} - Y I_{p,q}X_{PS}^\top\|_F \\
        &  \leq \left( \|X_{PS}\|_F + \|Y\|_F \right)\|Y\| + \|Y\| \times \|X_{PS}\|_F \\
    & = \order_{p}\left(\frac{\|X_{PS}\|_{F} \sqrt{n}}{\sqrt{m}}\right). 
    \end{split}
    \end{equation*}

\subsection{Proof of Theorem 3}
    From Corollary 1, for any $c > 0$ there exists a constant $K_4 > 0$ and a $m_0$, both depending on $c$, such that for $i \in \{1,2\}$,
    \begin{equation}
        \label{eq:cor4_bdd}
    \|\hat{P}_{PS}^{(i)} - P^{(i)} \|_F \leq K_4 \|X_{PS}^{(i)}\|_{F} \sqrt{\dfrac{n}{m}}
    \end{equation}
    with probability at least $1 - m^{-c}$, provided that $m \geq m_0$.  Here $X_{PS}^{(i)} = \left[{X_S^{(i)}}^\top\mid {X_{S^c}^{(i)}}^\top\right]^\top$ is constructed based on the spectral embedding of $P_S^{(i)}$ followed by the predictive embedding $P_{S_c,S}^{(i)} X_S^{(i)}({X_S^{(i)}}^\top X_S^{(i)})^{-1} I_{p,q}$. Now suppose Eq.~\eqref{eq:cor4_bdd} holds. Then under $H_0 \colon P^{(1)} = P^{(2)}$ we have 
    \begin{align*}
 \frac{\sqrt{m}\|\hat{P}_{PS}^{(1)} - \hat{P}_{PS}^{(2)}\|_F}{\sqrt{n} K_4(\|X_{PS}^{(1)}\|_{F} + \|X_{PS}^{(2)}\|_{F})} & = \frac{\sqrt{m}\|\hat{P}_{PS}^{(1)} - P^{(1)} + P^{(2)} -\hat{P}_{PS}^{(2)}\|_F}{\sqrt{n} K_4(\|X_{PS}^{(1)}\|_{F} + \|X_{PS}^{(2)}\|_{F})} \\
        & \leq \frac{\sqrt{m}(\|\hat{P}_{PS}^{(1)} - P^{(1)}\|_F + \|\hat{P}_{PS}^{(2)} - P^{(2)}\|_F)}{\sqrt{n}(\|X_{PS}^{(1)}\|_{F} + \|X_{PS}^{(2)}\|_{F})} \leq 1
    \end{align*}
    Next, for $i=1,2$,
    \begin{align*}
        \|\hat{X}_{PS}^{(i)}\|_F \geq \|X_{PS}^{(i)}\|_F - \|\hat{X}_{PS}^{(i)} - X_{PS}^{(i)}\|_F = \|X_{PS}^{(i)}\|_{F} - \order_{p}\left(\sqrt{n/m}\right) \geq \frac{1}{2}\|X_{PS}^{(i)}\|_{F}
    \end{align*}
    Letting $R_F = \|\hat{X}_{PS}^{(1)}\|_F + \|\hat{X}_{PS}^{(2)}\|_F$ we have, for any fixed but arbitrary $\alpha > 0$, that
    $$T_F = \|\hat{P}_{PS}^{(1)} - \hat{P}_{PS}^{(2)}\|_F \leq 2K_4 R_F\sqrt{\dfrac{n}{m}}$$
    with probability at least $1-\alpha$.
    Next suppose $H_1$ is true and $\|P^{(1)}-P^{(2)}\|_F = \Omega(n \sqrt{\rho_n/m})$. Note that,
    $$\|\hat{P}_{PS}^{(1)} - \hat{P}_{PS}^{(2)}\|_F \geq \|P^{(1)}-P^{(2)}\|_F- \|\hat{P}_{PS}^{(1)} - P^{(1)}\|_F - \|\hat{P}_{PS}^{(2)} - P^{(2)}\|_F$$
    Letting $M = 2 K_4 R_F\sqrt{n/m}$ we have
    \begin{align*}
    \prob(T_F \leq M) & =  \prob\left(\|\hat{P}_{PS}^{(1)} - \hat{P}_{PS}^{(2)}\|_F \leq M\right)\\
    & \leq \prob\left(\|P^{(1)}-P^{(2)}\|_F- \|\hat{P}_{PS}^{(1)} - P^{(1)}\|_F - \|\hat{P}_{PS}^{(2)} - P^{(2)}\|_F \leq M\right) \\
        & = \prob\left( \|\hat{P}_{PS}^{(2)} - P^{(2)}\|_F + \|\hat{P}_{PS}^{(1)} - P^{(1)}\|_F \geq \|P^{(1)}-P^{(2)}\|_F - M \right).
    \end{align*}
    Now for any fixed but arbitrary $\beta > 0$, we have for $i=1,2$ and sufficiently large $m$ that
    \begin{align*}
        \prob\left(\|\hat{P}_{PS}^{(i)} - P^{(i)} \|_F \geq K_4 \|X_{PS}^{(i)}\|_{F} \sqrt{n/m} \right)& \leq \tfrac{\beta}{2} .
    \end{align*}
    We therefore have
    \[ \prob\left(\|\hat{P}_{PS}^{(1)} - P^{(1)} \|_F + \|\hat{P}_{PS}^{(2)} - P^{(2)} \|_F \geq M\right) 
    \leq \beta. \]       
    As $\|P^{(1)} - P^{(2)}\|_{F} = \Omega( n \sqrt{\rho_n/m})$ under $H_1$, there exists a sufficiently large constant $c$ such that 
    $\|P^{(1)}-P^{(2)}\|_F \geq c n \sqrt{\rho_n/m} > 2M$. Hence, $\prob(T_n \leq M) \leq \beta$, i.e., our test statistic $T_n$ lies within the rejection region with probability at least $1 - \beta$ as required.

\subsection{Proof of Theorem 4}

Note that, from Corollary 1, for any $c > 0$ there exists a constant $K_4 > 0$ and a $m_0$, both depending on $c$, such that for $i \in \{1,2\}$,
    \begin{equation}
        \label{eq:cor4_two}
    \|\hat{P}_{PS}^{(i)} - P^{(i)} \|_{2\to\infty} \leq K_4 \|X_{PS}^{(i)}\|_{2\to\infty} \sqrt{\dfrac{n\log n}{m}}
    \end{equation}
    with probability at least $1 - m^{-c}$, provided that $m \geq m_0$. Then, the rest of the proof can be completed following the exact same arguments used in the proof of Theorem~3.

\subsection{Proof of Theorem 5}
    We have,
    \begin{align*}
        \|\hat{P}_{S}^{(i)} - P_S^{(i)} \|_{2\to\infty} & = \|(\hat{X}_S^{(i)}W)I_{p,q}(\hat{X}_S^{(i)}W)^\top - X_SI_{p,q}X_S^\top\|_{2\to\infty}\\
        & = \|(\hat{X}_S^{(i)}W)I_{p,q}(\hat{X}_S^{(i)}W)^\top - (\hat{X}_S^{(i)}W)I_{p,q}X_S^\top\|_{2\to\infty} \\
        & \hspace{5cm}+ \|(\hat{X}_S^{(i)}W)I_{p,q}X_S^\top - X_SI_{p,q}X_S^\top\|_{2\to\infty} \\
        & \leq \|\hat{X}_S^{(i)}W\|_{2\to\infty} \|\hat{X}_S^{(i)}W - X_S\|_2+ \|\hat{X}_S^{(i)}W - X_S\|_{2\to\infty} \|X_S\|_2  \\
        &  \leq \left( \|X_S\|_{2\to\infty} + \|\hat{X}_S^{(i)}W - X_S\|_{2\to\infty} \right)\|\hat{X}_S^{(i)}W - X_S\|_F\\
        & \hspace{5cm}+ \|\hat{X}_S^{(i)}W - X_S\|_{2\to\infty}\sqrt{m}\|X_S\|_{2\to\infty} \\
        & \leq \order_p\left( \|X_S\|_{2\to\infty}\sqrt{\log m}  \right)
    \end{align*}
    i.e., for any $c > 0$ there exists a constant $C_1 > 0$ and a $m_0$, both depending on $c$, such that for $i \in \{1,2\}$,
    \begin{equation}
    \label{eq:puresub}
    \|\hat{P}_{S}^{(i)} - P_S^{(i)} \|_{2\to\infty} \leq C_1 \|X_{S}^{(i)}\|_{2\to\infty} \sqrt{\log m}
    \end{equation}
    with probability at least $1 - m^{-c}$, provided that $m \geq m_0$.  Now suppose Eq.~\eqref{eq:puresub} holds. Then under $H_0 \colon P^{(1)} = P^{(2)}$ we have 
    \begin{align*}
 \frac{\|\hat{P}_{S}^{(1)} - \hat{P}_{S}^{(2)}\|_{2\to\infty}}{\sqrt{\log m} C_1(\|X_{S}^{(1)}\|_{2\to\infty} + \|X_{S}^{(2)}\|_{2\to\infty})} & = \frac{\|\hat{P}_{S}^{(1)} - P_S^{(1)} + P_S^{(2)} -\hat{P}_{S}^{(2)}\|_{2\to\infty}}{\sqrt{\log m} C_1(\|X_{S}^{(1)}\|_{2\to\infty} + \|X_{S}^{(2)}\|_{2\to\infty})} \\
        & \leq \frac{(\|\hat{P}_{S}^{(1)} - P_S^{(1)}\|_{2\to\infty} + \|\hat{P}_{S}^{(2)} - P_S^{(2)}\|_{2\to\infty})}{\sqrt{\log m}(\|X_{S}^{(1)}\|_{2\to\infty} + \|X_{S}^{(2)}\|_{2\to\infty})} \leq 1
    \end{align*}
    Next, for $i=1,2$,
    \begin{align*}
        \|\hat{X}_{S}^{(i)}\|_{2\to\infty} &\geq \|X_{S}^{(i)}\|_{2\to\infty} - \|\hat{X}_{S}^{(i)} - X_{S}^{(i)}\|_{2\to\infty} \\ &= \|X_{S}^{(i)}\|_{2\to\infty} - \order_{p}\left(\sqrt{\dfrac{\log m}{m}}\right) \geq \frac{\|X_{S}^{(i)}\|_{2\to\infty}}{2}
    \end{align*}
Letting $R^S_{2\to\infty} = \|\hat{X}_{S}^{(1)}\|_{2\to\infty}  + \|\hat{X}_{S}^{(2)}\|_{2\to\infty} $ we have, for any fixed but arbitrary $\alpha > 0$, that
    $$T_{2\to\infty}^S = \|\hat{P}_{S}^{(1)} - \hat{P}_{S}^{(2)}\|_{2\to\infty} \leq 2C_1 R^S_{2\to\infty}\sqrt{\log m}$$
    with probability at least $1-\alpha$.

Now suppose $H_1$ is true i.e., there are at least $k \geq c_0n\log n/m$ rows in $P^{(1)}$ and $P^{(2)}$  with $\|P_i^{(1)}-P_i^{(2)}\|_2\geq c\sqrt{n\rho_n \log m/m}$ for some constants $c_0,c>0$. So for any randomly chosen subset $S$ of $m$ vertices we have,
\begin{align*}
    \prob\left(\max_{i\in S}\|P_i^{(1)}-P_i^{(2)}\|_2 < \|P^{(1)} - P^{(2)}\|_{2\to\infty}\right) & = \dfrac{\binom{n-k}{m}}{\binom{n}{m}} \\
    & = \prod_{i=1}^m \dfrac{n-k+1-i}{n-i+1} \\
    & = \exp\left\{ \sum_{i=1}^m \log \left( 1 - \dfrac{k}{n-i+1} \right)  \right\} \\
    & \leq \exp\left\{ -\sum_{i=1}^m  \dfrac{k}{n-i+1}  \right\} \\
    & \leq \exp\left\{ - \dfrac{mk}{n}  \right\} \\
    & \leq n^{-c_0} \\
    \implies \prob\left(\max_{i\in S}\|P_i^{(1)}-P_i^{(2)}\|_2 \geq \|P^{(1)} - P^{(2)}\|_{2\to\infty}\right) & \geq 1 - n^{-c_0}
\end{align*}

    Now $\|P_S^{(1)} -P_S^{(2)}\|_{2\to\infty} = \max_{i \in S} \|P_{Si}^{(1)} -P_{Si}^{(2)}\|_2$ where $P_{Si}$ denote the $i$th row of $P_S$ and we have for any subset $S$ of $m$ vertices,
    \begin{align*}
    \|P_{Si}^{(1)} -P_{Si}^{(2)}\|_2^2 & = \sum_{j=1}^n (P^{(1)}_{ij}-P^{(2)}_{ij})^2 Y_j \quad \text{where } Y_j = \mathbb{I}\{j\in S\} \thicksim \text{Bernoulli}(m/n)\quad \forall j=1(1)n 
\end{align*}
Now let $Z_j = (P^{(1)}_{ij}-P^{(2)}_{ij})^2 Y_j$ and we can write,
\begin{align*}
   |Z_j| & \leq (P^{(1)}_{ij}-P^{(2)}_{ij})^2 \leq c^2\rho_n^2 = M \quad \text{(say)} \\
   \E[Z_j] & = \dfrac{m}{n}\left(P^{(1)}_{ij}-P^{(2)}_{ij}\right)^2 \\
   \text{and  } \var[Z_j] & = (P^{(1)}_{ij}-P^{(2)}_{ij})^4 \var[Y_j] =\dfrac{m}{n}\left(1-\dfrac{m}{n}\right) (P^{(1)}_{ij}-P^{(2)}_{ij})^4 \leq \dfrac{mc^4\rho_n^4}{n}  = \sigma^2 \quad \text{(say)}
\end{align*}
So using Bernstein's inequality, we get,
\begin{align*}
    \prob\left( \sum_{j=1}^n Z_j - \sum_{j=1}^n \E[Z_j] < - t \right) & \leq \exp\left\{ - \dfrac{t^2}{2n\sigma^2 + 2t M/3 } \right\} \\
    \implies \prob\left( \|P_{Si}^{(1)} -P_{Si}^{(2)}\|_2^2 - \dfrac{m}{n}\|P_{i}^{(1)} -P_{i}^{(2)}\|_2^2 < - t \right)& \leq   \exp\left\{ - \dfrac{3t^2}{6mc^4\rho_n^4 + 2c^2\rho_n^2t}\right\} \\
\end{align*}
Choosing $t = k\rho_n^2(\sqrt{c_0m\log n} + c_0\log n)$, we get,
\begin{align*}
    \prob\left( \|P_{Si}^{(1)} -P_{Si}^{(2)}\|_2^2 \leq \dfrac{m}{n}\|P_{i}^{(1)} -P_{i}^{(2)}\|_2^2 (1-\epsilon) \right) & \leq   n^{-c_0}\\
    \implies \prob\left( \|P_{Si}^{(1)} -P_{Si}^{(2)}\|_2 \leq \sqrt{\dfrac{m}{n}}\|P_{i}^{(1)} -P_{i}^{(2)}\|_2 \sqrt{1-\epsilon} \right) & \leq   n^{-c_0} \\
    \implies \prob\left( \|P_{Si}^{(1)} -P_{Si}^{(2)}\|_2 \geq \sqrt{\dfrac{m}{n}}\|P_{i}^{(1)} -P_{i}^{(2)}\|_2 \sqrt{1-\epsilon} \right)& \geq 1-  n^{-c_0} \\
    \implies \prob\left( \max_{i\in S}\|P_{Si}^{(1)} -P_{Si}^{(2)}\|_2 \geq \sqrt{\dfrac{m}{n}}\max_{i\in S}\|P_{i}^{(1)} -P_{i}^{(2)}\|_2 \sqrt{1-\epsilon} \right)& \geq 1-  n^{-c_0}
\end{align*}
Combining we get,
\begin{align*}
    \prob\left( \|P_{S}^{(1)} -P_{S}^{(2)}\|_{2\to\infty} \geq \sqrt{\dfrac{m}{n}}\|P_{i}^{(1)} -P_{i}^{(2)}\|_{2\to\infty} \sqrt{1-\epsilon} \right)& \geq 1-  2n^{-c_0}
\end{align*}
 Note that,
    $$\|\hat{P}_{S}^{(1)} - \hat{P}_{S}^{(2)}\|_{2\to\infty} \geq \|P_S^{(1)}-P_S^{(2)}\|_{2\to\infty}- \|\hat{P}_{S}^{(1)} - P_S^{(1)}\|_{2\to\infty} - \|\hat{P}_{S}^{(2)} - P_S^{(2)}\|_{2\to\infty}$$
 Letting $M = 2 C_1 R_S\sqrt{\log m}$ we have
    \begin{align*}
    \prob(T_{2\to\infty}^S \leq M) & =  \prob\left(\|\hat{P}_{S}^{(1)} - \hat{P}_{S}^{(2)}\|_{2\to\infty} \leq M\right)\\
    & \leq \prob\left(\|P_S^{(1)}-P_S^{(2)}\|_{2\to\infty}- \|\hat{P}_{S}^{(1)} - P_S^{(1)}\|_{2\to\infty} - \|\hat{P}_{S}^{(2)} - P_S^{(2)}\|_{2\to\infty} \leq M\right) \\
        & = \prob\left( \|\hat{P}_{S}^{(2)} - P_S^{(2)}\|_{2\to\infty} + \|\hat{P}_{S}^{(1)} - P_S^{(1)}\|_{2\to\infty} \geq \|P_S^{(1)}-P_S^{(2)}\|_{2\to\infty} - M \right).
    \end{align*}
    Now for any fixed but arbitrary $\beta > 0$, we have for $i=1,2$ and sufficiently large $m$ that
    \begin{align*}
        \prob\left(\|\hat{P}_{S}^{(i)} - P_S^{(i)} \|_{2\to\infty} \geq c_1 \|X_{S}^{(i)}\|_{2\to\infty} \sqrt{\log m} \right)& \leq \tfrac{\beta}{2} .
    \end{align*}
    We therefore have
    $$\prob\left(\|\hat{P}_{S}^{(1)} - P_S^{(1)} \|_{2\to\infty} + \|\hat{P}_{S}^{(2)} - P_S^{(2)} \|_{2\to\infty} \geq M\right) 
    \leq \beta.$$       
    By the assumption on $H_1$ in the theorem, $\|P^{(1)}-P^{(2)}\|_{2\to\infty}>c\sqrt{n\rho_n\log m/m}$ for some constant $c$ i.e., $\|P_{S}^{(1)}-P_{S}^{(2)}\|_{2\to\infty}>2M$. Hence, $\prob(T_n \leq M) \geq  \beta$ i.e., our test statistic $T_{2\to\infty}^S$ lies within the rejection region with probability at least $1 - \beta$, as required.

\subsection{Proof of Theorem~6}

We have,
\begin{align*}
   \left \|\hat{P}_S^{(i)} - P_S^{(i)}\right\|_F 
    &= \left\| \hat{X}_S^{(i)}WI_{p,q} \left(\hat{X}_S^{(i)}W\right)^\top - X_S^{(i)} I_{p,q}\left(X_S^{(i)}\right)^\top \right\|_F \\
    &= \left\| \hat{X}_S^{(i)}WI_{p,q} \left(\hat{X}_S^{(i)}W\right)^\top - \hat{X}_S^{(i)} W I_{p,q} \left(X_S^{(i)}\right)^\top + \hat{X}_S^{(i)} WI_{p,q}\left(X_S^{(i)}\right)^\top - X_S^{(i)} I_{p,q}\left(X_S^{(i)}\right)^\top \right\|_F \\
    &\leq \left\| \hat{X}_S^{(i)}W - X_S^{(i)} \right\|_F \left( \left\| \hat{X}_S^{(i)} \right\|_F + \left\| X_S^{(i)} \right\|_F \right) \\
    & \leq \left\| \hat{X}_S^{(i)}W - X_S^{(i)} \right\|_F \left( \left\| \hat{X}_S^{(i)}W - X_S^{(i)} \right\|_F + 2\left\| X_S^{(i)} \right\|_F \right) \\
    & \leq \order_p\left( \|X_S^{(i)}\|_F \right)
\end{align*}
i.e., for any $c > 0$ there exists a constant $C_1 > 0$ and a $m_0$, both depending on $c$, such that for $i \in \{1,2\}$,
\begin{equation}
\label{eq:puresubF}
\|\hat{P}_{S}^{(i)} - P_S^{(i)} \|_F \leq C_1 \|X_{S}^{(i)}\|_F
\end{equation}
with probability at least $1 - m^{-c}$, provided that $m \geq m_0$. Now suppose Eq.~\eqref{eq:puresubF} holds. Then under $H_0 \colon P^{(1)} = P^{(2)}$ we have
\begin{align*}
\frac{\|\hat{P}_{S}^{(1)} - \hat{P}_{S}^{(2)}\|_F}{C_1(\|X_{S}^{(1)}\|_F + \|X_{S}^{(2)}\|_F)} & = \frac{\|\hat{P}_{S}^{(1)} - P_S^{(1)} + P_S^{(2)} -\hat{P}_{S}^{(2)}\|_F}{C_1(\|X_{S}^{(1)}\|_F + \|X_{S}^{(2)}\|_F)} \\
    & \leq \frac{(\|\hat{P}_{S}^{(1)} - P_S^{(1)}\|_F + \|\hat{P}_{S}^{(2)} - P_S^{(2)}\|_F)}{(\|X_{S}^{(1)}\|_F + \|X_{S}^{(2)}\|_F)} \leq 1
\end{align*}
Next, for $i=1,2$,
    \begin{align*}
        \|\hat{X}_{S}^{(i)}\|_F &\geq \|X_{S}^{(i)}\|_F - \|\hat{X}_{S}^{(i)} - X_{S}^{(i)}\|_F \\ &= \|X_{S}^{(i)}\|_F - \order_{p}\left(1\right)\\
        & =  \|X_{S}^{(i)}\|_F - o_{p}\left(\|X_{S}^{(i)}\|_F\right) \geq \frac{\|X_{S}^{(i)}\|_F}{2}
    \end{align*}
Letting $R^S_F = \|\hat{X}_{S}^{(1)}\|_F  + \|\hat{X}_{S}^{(2)}\|_F$ we have, for any fixed but arbitrary $\alpha > 0$, that
$$T_{F}^S = \|\hat{P}_{S}^{(1)} - \hat{P}_{S}^{(2)}\|_F \leq 2C_1 R^S_F$$
with probability at least $1-\alpha$.

Now suppose $H_1$ is true i.e., $\|P^{(1)}-P^{(2)}\|_F>cn^2\sqrt{\rho_n/m^3}$ for some constant $c>0$. So we have,
\begin{align*}
    \|P_S^{(1)} -P_S^{(2)}\|_F^2 & = \sum_{i=1}^n \sum_{j=1}^n (P^{(1)}_{ij}-P^{(2)}_{ij})^2 Y_iY_j \quad \text{where } Y_i = \mathbb{I}\{i\in S\} \thicksim \text{Bernoulli}(m/n)\quad \forall i=1(1)n \\
    & = \boldsymbol{Y}^\top D \boldsymbol{Y} \quad \text{with  } \boldsymbol{Y} = (Y_1,Y_2,\cdots,Y_n),\quad  (D_{n\times n})_{ij} = (P^{(1)}_{ij}-P^{(2)}_{ij})^2
\end{align*}
Now let $Z_i = Y_i - \dfrac{m}{n}$. So we have,
\begin{align*}
    \E[\boldsymbol{Y}^\top D \boldsymbol{Y}] & = \|P^{(1)}-P^{(2)}\|_F^2 \dfrac{m^2}{n^2} \\
    \boldsymbol{Y}^\top D \boldsymbol{Y} - \E\left[\boldsymbol{Y}^\top D \boldsymbol{Y}\right] & = \left(\boldsymbol{Z}+\frac{m}{n}\boldsymbol{1}\right)^\top D\left(\boldsymbol{Z}+\frac{m}{n}\boldsymbol{1}\right) - \dfrac{m^2}{n^2}\boldsymbol{1}^\top D \boldsymbol{1} \\
    & = \boldsymbol{Z}^\top D \boldsymbol{Z} + \dfrac{2m}{n}\boldsymbol{Z}^\top D\boldsymbol{1}
\end{align*}
Now \cite{he2024sparse} mentioned in (2.3) of their paper that a special case of Theorem 3.1 in \cite{gine2000exponential}'s work yields,
\begin{align*}
     \prob\left( \left| Z^\top D Z \right| > t \right) & \leq 2\exp \left\{ -c\min \left\{ 
 \dfrac{n^2t^2}{m^2\sum_{i\ne j} D_{ij}}, \dfrac{nt}{m\|D\|}, \left( \dfrac{t}{B}\right)^{2/3}, \left( \dfrac{t}{\|D\|_{\max}} \right)^{1/2}  \right\} \right\}
\end{align*}
where $B = \max_j \sqrt{\dfrac{m}{n}\sum_i D_{ij}^2} = \Theta(\sqrt{m}\rho_n^2)$, $\|D\|_{\max} = \max_{i,j} D_{ij} = \Theta(\rho_n^2)$, and $\|D\|_F = \Theta(n\rho_n^2)$. Choosing $t = cm\rho_n^2\log n$, we get,
$$\prob\left( Z^\top D Z < - t \right) \leq 2n^{-c_0}$$
for some $c_0>0$. Again, we have,
$$\boldsymbol{Z}^\top D\boldsymbol{1} = \sum_{i=1}^n \sum_{j=1}^n (P^{(1)}_{ij}-P^{(2)}_{ij})^2 Z_i = \sum_{i=1}^n C_iZ_i \quad \text{where  } C_i =  \sum_{j=1}^n (P^{(1)}_{ij}-P^{(2)}_{ij})^2$$
Let $W_i = \dfrac{2m}{n}C_iZ_i$. Thus we can get,
\begin{align*}
    \E\left[ W_i \right] & = 0 \quad \text{and  } \E\left[ W_i^2 \right] = \dfrac{4m^2}{n^2}C_i^2\E[Z_i^2] = \dfrac{m}{n}\left(1-\dfrac{m}{n}\right) C_i^2 = \Theta\left(\dfrac{m^3\rho_n^4}{n}\right)
\end{align*}
So using Bernstein's inequality, we get,
\begin{align*}
    \prob\left( \sum_{i=1}^n W_i > t \right) & \leq \exp\left\{ -\dfrac{t^2/2}{\displaystyle\sum_{i=1}^n \E[W_i^2] + \dfrac{t}{3}}  \right\}
\end{align*}
Choosing $t = cm\rho_n^2\log n$, we get,
$$\prob\left( Z^\top D \boldsymbol{1} > - t \right) \geq 1 - n^{-c_0}$$
Combining we get,
\begin{align*}
    \prob\left( \|P_S^{(1)} -P_S^{(2)}\|_F^2 > \|P^{(1)}-P^{(2)}\|_F^2 \dfrac{m^2}{n^2}(1-\epsilon) \right) & \geq 1 - 2n^{-c_0}
\end{align*}
Note that,
$$\|\hat{P}_{S}^{(1)} - \hat{P}_{S}^{(2)}\|_F \geq \|P_S^{(1)}-P_S^{(2)}\|_F- \|\hat{P}_{S}^{(1)} - P_S^{(1)}\|_F - \|\hat{P}_{S}^{(2)} - P_S^{(2)}\|_F$$
Letting $M = 2C_1 R_S$ we have
\begin{align*}
\prob(T_{F}^S \leq M) & =  \prob\left(\|\hat{P}_{S}^{(1)} - \hat{P}_{S}^{(2)}\|_F \leq M\right)\\
& \leq \prob\left(\|P_S^{(1)}-P_S^{(2)}\|_F- \|\hat{P}_{S}^{(1)} - P_S^{(1)}\|_F - \|\hat{P}_{S}^{(2)} - P_S^{(2)}\|_F \leq M\right) \\
    & = \prob\left( \|\hat{P}_{S}^{(2)} - P_S^{(2)}\|_F + \|\hat{P}_{S}^{(1)} - P_S^{(1)}\|_F \geq \|P_S^{(1)}-P_S^{(2)}\|_F - M \right).
\end{align*}
Now for any fixed but arbitrary $\beta > 0$, we have for $i=1,2$ and sufficiently large $m$ that
\begin{align*}
    \prob\left(\|\hat{P}_{S}^{(i)} - P_S^{(i)} \|_F \geq c_1 \|X_{S}^{(i)}\|_F \right)& \leq \tfrac{\beta}{2} .
\end{align*}
We therefore have
$$\prob\left(\|\hat{P}_{S}^{(1)} - P_S^{(1)} \|_F + \|\hat{P}_{S}^{(2)} - P_S^{(2)} \|_F \geq M\right) 
\leq \beta.$$
By the assumption on $H_1$ in the theorem, $\|P^{(1)}-P^{(2)}\|_F>cn^2\sqrt{\rho_n/m^3}$ for some constant $c$ i.e., $\|P_{S}^{(1)}-P_{S}^{(2)}\|_F>2M$. Hence, $\prob(T_{F}^S \leq M) \leq  \beta$ i.e., our test statistic $T_{F}^S$ lies within the rejection region with probability at least $1 - \beta$, as required.

\end{document}